\pdfoutput =1

\documentclass[pra,onecolumn,superscriptaddress]{article}

\usepackage[T1]{fontenc}
\usepackage[utf8]{inputenc}
\usepackage{amsmath,amsfonts,amsthm,amssymb}
\usepackage{mathtools}
\usepackage{subeqnarray}
\usepackage{setspace}
\usepackage{graphics,graphicx,xcolor}
\usepackage{url}
\usepackage{enumitem}
\usepackage{float} 
\usepackage{multirow}
\usepackage{hhline}
\usepackage[margin=1.0in]{geometry}
\usepackage{braket}
\usepackage{dsfont} 
\usepackage{authblk}
\usepackage{multirow}
\usepackage{hhline}
\usepackage[makeroom]{cancel}
\usepackage{ifthen}
\usepackage{comment}
\usepackage{breakcites}

\definecolor{ddgreen}{rgb}{.05,.4,.05}
\definecolor{damethyst}{rgb}{0.4, 0.2, 0.6}
\usepackage[colorlinks, linkcolor=damethyst,citecolor=ddgreen]{hyperref}

\newtheorem{theorem}{Theorem}
\newtheorem{cor}{Corollary}
\newtheorem{definition}{Definition}

\newtheorem{lem}{Lemma}
\newtheorem{prop}{Proposition}
\newtheorem*{prop*}{Proposition}

\newtheorem{conjecture}{Conjecture}

\newtheorem{remark}[theorem]{Remark}

\usepackage[scr=boondoxo,scrscaled=1.05]{mathalfa}

\newcommand{\beq}{\begin{eqnarray}}
\newcommand{\eeq}{\end{eqnarray}}

\DeclareMathOperator{\E}{\mathbf{E}}

\newcommand{\sz}{\sum_{z \in \{0,1\}^r}}
\newcommand{\setr}{\{0,1\}^r}

\newcommand{\tr}{\mathrm{tr}}

\usepackage{tikz}

\definecolor{amethyst}{rgb}{0.6, 0.4, 0.8}

\title{The Power of Two Bases:\\
\Large Robust and copy-optimal certification of\\ nearly all quantum states with few-qubit measurements}
\author[1]{Andrea Coladangelo
}
\author[1]{Jerry Li
}
\author[1]{Joseph Slote
}
\author[2]{Ellen Wu
}

\affil[1]{Paul G. Allen School of Computer Science and Engineering, University of Washington}
\affil[2]{Massachusetts Institute of Technology}
\date{}
\begin{document}

\maketitle
\begin{abstract}
A central task in quantum information science is \textit{state certification}: testing whether an unknown state is $\epsilon_1$-close to a fixed target state, or $\epsilon_2$-far.
Recent work has shown that surprisingly simple measurement protocols---comprising only single-qubit measurements---suffice to certify arbitrary $n$-qubit states \cite{Huang2025,guptaHeODonnell2025}.
However, these certification protocols are not \textit{robust}:
rather than allowing constant $\epsilon_1$,
they can only positively certify states within $\epsilon_1=O(1/n)$ trace distance of the target.
In many experimental settings, the appropriate error tolerance is \textit{constant} as the system size grows, so this lack of robustness renders existing tests inapplicable at scale, no matter how many times the test is repeated.

Here we present robust certification protocols based on few-qubit measurements that apply to all but a $O(2^{-n})$-fraction of pure target states.
Our first protocol achieves \emph{constant} robustness, \textit{i.e.}\ $\epsilon_1=\Theta(1)$, using a single $O(\log n)$-qubit measurement along with single-qubit measurements in the $Z$ or $X$ basis on the other qubits.
As a corollary of its robustness, this protocol also achieves constant (in $n$) copy complexity, which is optimal.
Our second protocol uses exclusively single-qubit measurements and is nearly robust: $\epsilon_1=\Omega(1/\log n)$.
Our tests are based on a new \textit{uncertainty principle for conditional fidelities} which may be of independent interest.
\end{abstract}

\tableofcontents

\newpage 
\section{Introduction}
In \emph{quantum state certification}, the goal is to test whether or not a physical state matches its intended target specification.
This is a fundamental task in quantum information theory with immediate applications to quantum science and engineering, from benchmarking quantum devices and verifying the implementation of quantum algorithms to accurate preparation of lab experiments.
See 
\cite{Badescu2019,ZhuHayashi,KlieschRoth,Huang2025} for more thorough discussions of the broad applicability of the task.

We consider the standard setting of quantum state certification, which also reflects typical experimental setups:  we are given a classical description of an $n$-qubit ``target'' state $\ket{\psi}$ and the ability to produce copies of some physical (potentially mixed) ``lab'' state $\rho$, and our goal is to distinguish between the case where the fidelity between $\ket{\psi}$ and $\rho$ satisfies $\bra{\psi}\rho\ket{\psi} \geq 1 - \epsilon_1$ and the case where $\bra{\psi}\rho\ket{\psi} \leq 1 - \epsilon_2$, for some parameters $\epsilon_1 < \epsilon_2$.

Given the ability to perform arbitrary measurements of the state, the optimal distinguisher is well-understood: we should measure every copy of $\rho$ using the POVM $\{\ket{\psi}\!\!\bra{\psi}, I - \ket{\psi}\!\!\bra{\psi} \}$.
Unfortunately, the ability to perform such a POVM may not only require highly entangled operations, but is in some sense tantamount to preparing $\ket{\psi}$ itself!
In light of this, a recent line of work \cite{pallister2018optimal,Huang2025,theis2024verifying,li2025universal,guptaHeODonnell2025,sater2025efficientcertificationintractablequantum} has asked whether or not one can perform state certification using significantly weaker types of measurements, specifically ones that can be potentially be implemented with high confidence in practice.
In a breakthrough work \cite{Huang2025}, Huang, Preskill, and Soleimanifar first exhibited a state certification test that requires only single-qubit measurements and works for \emph{almost all} target states $\ket{\psi}$.
In subsequent, independent works, Gupta, He, and O'Donnell~\cite{guptaHeODonnell2025} and Li and Zhu~\cite{li2025universal} gave somewhat more complex procedures that allow the user to certify \emph{any} state using a sequence of adaptively chosen single-qubit measurements. In a more recent work, Abdul Sater et al.~\cite{sater2025efficientcertificationintractablequantum} gave a procedure to test a certain class of states using a fully-efficient procedure (including efficient classical processing).
So, does this complete the picture for state certification from simple measurements?

\paragraph{Robust certification.}
Despite the recent exciting progress, existing tests have one fundamental shortcoming: they are not \emph{robust}\footnote{Robust tests are also known as \textit{tolerant tests} in the property testing literature.}.
In particular, they can only certify states that are already $O(1/n)$-close to the target.
This lack of robustness is because they only provide the following guarantees:
\begin{itemize}
\item (Completeness) A state that has $1-\epsilon$ fidelity with the target, passes the test with probability $1-\epsilon$; and,
\item (Soundness) Any state that passes the test with probability $1-\epsilon$ must have $1-O(n \epsilon)$ fidelity with the target.
\end{itemize}
Crucially, the factor of $n$ loss in the soundness means that, \emph{no matter how many copies of the lab state are available}, one cannot hope to distinguish between states that have $0.99$ fidelity and states that have $0.98$ fidelity with the target.
Another way to look at it is the following: unless the lab state is already vanishingly close to the target, \textit{i.e.}\ has $1- O(\frac{1}{n})$ fidelity with the target, the test will not be able to distinguish it from arbitrary states that are potentially very far from the target.
Tests that suffer from this soundness loss are simply not scalable: if an experimentalist wishes to double their system size, the test will only remain usable if they simultaneously double their preparation~accuracy.
What we would ideally like to have instead is a test with only a \emph{constant} soundness loss. We refer to such a test as \emph{robust}\footnote{Our use of the term ``robust'' here is similar to the use in the self-testing literature (which deals with state certification via non-local games), \textit{e.g.}\ in \cite{natarajan2017quantum}. We clarify that this use of the term does not directly consider noise or uncertainty in the measurements involved in the test itself (although better robustness does likely improve the tolerance to measurement noise).}. The latter allows one to certify states of arbitrary size to constant~error. 

Beyond being necessary for scalable certification, robustness also directly affects the copy-complexity of ``amplified'' protocols that are required to distinguish correctly with high probability: a robust ``one-shot'' test only requires a constant number of copies of the lab state in order to amplify the distinguishing success probability to, say, $0.99$.
In light of this discussion, we focus on the following question, which was posed as an open question in \cite{guptaHeODonnell2025}:
\smallskip
\begin{center}
    {\it Is there a robust quantum state certification test that only uses single-qubit measurements?}
\end{center}
\medskip

\subsection{Main results}
In this work, we answer the above question up to logarithmic factors and for all but an exponentially small fraction of states. 

To be a bit more formal, we say that a test is \textit{$c$-robust} if it provides the following guarantee: a state that passes the test with probability $1-\epsilon$, must have fidelity at least $1 - \epsilon/c$ with the target.
Previous results are limited to $c = O(1/n)$.

Our first main result is a state certification algorithm with \emph{constant} robustness for almost all pure target states that uses $n-O(\log n)$ single-qubit measurements (either all in the $Z$ basis, or all in the $X$ basis), and one $O(\log n)$-qubit measurement on the remaining qubits. We clarify at the end of this subsection the type of access/knowledge of the target state that we assume in our tests (and refer to Definition~\ref{def:oracle} for a slightly more formal definition).

\begin{theorem}[Constant robustness with one $O(\log n)$-qubit measurement]
\label{thm:informal-1}
There exists an efficient algorithm that, for all but a $O(2^{-n})$ fraction of pure $n$-qubit target states $\ket{\psi}$, satisfies the following. Given oracle access to a classical description of the target state $\ket{\psi}$ (via the model in Definition~\ref{def:oracle}), and a single copy of an $n$-qubit mixed state $\rho$,
\begin{itemize}
    \item (Completeness) Outputs $\mathsf{accept}$ with probability at least $\bra{\psi}\rho\ket{\psi}$.
    \item (Soundness) Outputs $\mathsf{reject}$ with probability at least
    $\frac{1-\braket{\psi|\rho|\psi}}{2} - o(1)$ (\textit{i.e.}\ approximately the ``infidelity'' up to a factor of $2$).
\end{itemize}
Moreover, the algorithm makes $n-O(\log n)$ single-qubit measurements: either \emph{all} in the $Z$ basis, or \emph{all} in the $X$ basis; and only \emph{one} $O(\log n)$-qubit measurement on the remaining qubits of $\rho$.
\end{theorem}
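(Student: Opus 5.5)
The test chooses a random basis, $Z$ or $X$, each with probability $1/2$. It then chooses a random subset $S$ of $k = O(\log n)$ qubits. It measures the other $n-k$ qubits in the chosen basis, obtaining an outcome $z$. Finally, it measures the $k$ qubits of $S$ with the two-outcome projective measurement $\{\ket{\psi_{S|z}}\!\bra{\psi_{S|z}}, I - \ket{\psi_{S|z}}\!\bra{\psi_{S|z}}\}$. Here $\ket{\psi_{S|z}}$ is the normalized post-measurement state of $\ket{\psi}$ on $S$, conditioned on the outcome $z$ on $\bar S$ in the chosen basis. The oracle model of Definition~\ref{def:oracle} is used to compute the amplitudes of this $2^k = \mathrm{poly}(n)$-dimensional conditional state efficiently.

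\emph{Completeness.} If $\rho = \ket{\psi}\!\bra{\psi}$, the residual state on $S$ after the measurement on $\bar S$ is exactly $\ket{\psi_{S|z}}$, so the test accepts with certainty. For general $\rho$, write $\rho = F\ket{\psi}\!\bra{\psi} + (\text{rest})$ with $F = \bra{\psi}\rho\ket{\psi}$. Because the test is a measurement with an effective accept operator $A = \mathbf{E}_{\mathrm{basis},S}\sum_z \ket{z}\!\bra{z}_{\bar S}\otimes \ket{\psi_{S|z}}\!\bra{\psi_{S|z}}$, and $A \succeq 0$ has $\ket{\psi}$ as a $+1$ eigenvector, we get $\tr(A\rho) \ge F$.

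\emph{Soundness.} This is the core of the argument. We need to show that the rejection operator $I - A$ satisfies $I - A \succeq \tfrac12 (I - \ket{\psi}\!\bra{\psi}) - o(1)$, i.e.\ that the second-largest eigenvalue of $A$ is at most $1/2 + o(1)$. I would decompose $A = \tfrac12 (A_Z + A_X)$. For a fixed basis, $A_Z$ is a projector onto the space of states whose conditional states on $S$, given each $Z$-outcome on $\bar S$, agree with those of $\psi$, averaged over $S$. The key input is the \emph{uncertainty principle for conditional fidelities} announced in the abstract: roughly, for any state $\ket{\phi}\perp\ket{\psi}$, the conditional fidelities $\bra{\phi}A_Z\ket{\phi}$ and $\bra{\phi}A_X\ket{\phi}$ cannot both be close to $1$. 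Their sum is at most $1 + o(1)$ for almost all $\ket{\psi}$. The intuition is that a state agreeing with $\psi$ conditionally in the $Z$ basis differs only by relative phases and weights across $Z$-branches, and such a state is detected by $X$-basis conditioning. I would prove this via a bound on the operator norm of $P_Z P_X$ restricted to $\ket{\psi}^\perp$: show $\|P_Z P_X - \ket{\psi}\!\bra{\psi}\| = o(1)$, which gives $\lambda_2(\tfrac12(P_Z+P_X)) \le \tfrac12 + o(1)$ by the standard two-projector (Jordan) angle argument.

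\emph{Genericity.} The step I expect to be the main obstacle is establishing this near-orthogonality for all but an $O(2^{-n})$ fraction of $\ket{\psi}$. I would attack it with Haar concentration. For random $\ket{\psi}$, the conditional states $\ket{\psi_{S|z}}$ in the two bases are nearly independent random vectors in dimension $2^k = \mathrm{poly}(n)$. Choosing $k = C\log n$, the overlaps between the subspaces should be $O(2^{-k/2})$, and the averaging over $S$ helps further. I would use a second-moment or trace-method calculation for $\mathbf{E}_\psi\,\tr[(P_Z P_X P_Z - \ket{\psi}\!\bra{\psi})^m]$, combined with Levy's lemma, to get a failure probability of $2^{-\Omega(n)}$. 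This would also fix how large the constant $C$ must be so that the error term is $o(1)$.
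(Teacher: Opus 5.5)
Your completeness argument and the first step of your soundness argument are correct and match the paper. The paper likewise reduces soundness to showing $\bra{\phi}A\ket{\phi}\le\frac12+o(1)$ for all $\ket{\phi}\perp\ket{\psi}$ (Theorem~\ref{thm:2}). Your identity $\lambda_{\max}\big((P_Z+P_X)|_{\psi^\perp}\big)=1+\|P_Z'P_X'\|$, with $P'=P-\ket{\psi}\!\bra{\psi}$, is a valid equivalent reformulation. It does need $P_Z,P_X$ to be genuine projectors, so use a fixed block $S$ as the paper does rather than averaging over $S$.

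\textbf{The gap.} Everything after this reduction, which is the whole content of the theorem, is a plan rather than an argument, and its guiding heuristic does not support it.

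\emph{The two bases are not nearly independent.} By Lemma~\ref{lem:hat}, the unnormalized $X$-basis conditional states are, up to a fixed $H^{\otimes(n-r)}$, the Walsh--Hadamard transforms $2^{-r/2}\sum_w(-1)^{z\cdot w}\ket{\psi_w}$ of the $Z$-basis ones. So $P_X$ is a deterministic function of the same data $\{\ket{\psi_w}\}$ that defines $P_Z$; this exact dependence is why $\ket{\psi}$ is a common $+1$ eigenvector. Near-independence of individual pairs of conditional states therefore says nothing about $\|P_Z'P_X'\|$.

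\emph{A second-moment computation cannot work.} One has $\|P_Z'P_X'\|_F^2=\tr(P_ZP_X)-1$, which is of order $2^{n-2k}\gg 1$. Since $P_Z'$ has rank $2^{n-k}-1$, you would need moments of order roughly $m\gtrsim n/k$. Evaluating $\E_\psi\tr[(P_Z'P_X'P_Z')^m]$ at that order, for projectors built from \emph{normalized} conditional vectors of a single Haar vector, is precisely the hard random-matrix problem, and you leave it undone.

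\emph{Levy's lemma needs care.} The map $\psi\mapsto P_Z$ has Lipschitz constant of order $\max_z\|(\bra{z}\otimes I)\ket{\psi}\|^{-1}$, which is unbounded on the sphere.

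\textbf{The missing idea} is a concrete way for the constraint $\braket{\psi|\phi}=0$ to enter. The paper writes $\ket{\phi_w}=c_w\ket{\psi_w}/\|\psi_w\|+c_w^\perp\ket{\psi_w^\perp}$, so the $Z$-test accepts with probability $1-|c^\perp|^2$. It then expands the $X$-test via Lemma~\ref{lem:hat}. In $\braket{\hat\psi_z|\hat\phi_z}$, the diagonal terms $w=w'$ sum to exactly $\braket{\psi|\phi}=0$ for every $z$ (Lemma~\ref{lem:tradeoff}). What remains are only the overlaps $\braket{\psi_{w'}|\psi_w}$ and $\braket{\psi_{w'}|\psi_w^\perp}$ with $w\neq w'$.

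\emph{The $c^\perp$-only part} is at most $(1+o(1))|c^\perp|^2$ by Cauchy--Schwarz and Parseval. This pairs with the $Z$-test's $1-|c^\perp|^2$ to give a total of $1+o(1)$.

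\emph{The parts involving $c$} are controlled by $\|E\|^2$ for $E_{ws}=\braket{g_w|g_{w\oplus s}}/\|g_w\|$. The paper bounds this using decoupling, a matrix Gaussian-series bound conditioned on one copy, and Walsh--Hadamard diagonalization of the circulant variance matrix.

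\emph{The threshold on $k$.} The dimension factor $2^{r+1}$ in that matrix bound, together with the union bounds over the $2^r$ blocks, is what forces $2^k\gg n$, i.e.\ $k=(1+\gamma)\log n$. Your plan contains no mechanism that would produce either the $o(1)$ bound or this threshold.
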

\noindent There are two surprising aspects of this test (which is described in more detail in Section~\ref{sec:intro-test}): 
\begin{enumerate}
\item \textit{Two bases suffice:} The first $n - O(\log n)$ qubits are measured in one of only \emph{two} bases.
They are either all measured in the $Z$ basis, or all in the $X$ basis, with equal probabilities.
As we explain later, one of the two bases alone is not sufficient to obtain constant robustness.
\item \textit{Minimal randomness:} The test does not actually involve randomizing over the choice of which subset of $O(\log n)$ qubits should be treated differently (in contrast, both the tests from \cite{Huang2025} and \cite{guptaHeODonnell2025} involve randomness in choosing which qubit should be measured differently). Instead, in our test, one can deterministically let this subset be the last $O(\log n)$ qubits.
\end{enumerate}
We emphasize that the constant factor loss in our robustness is only $2$ (this is the factor of $2$ in the soundness condition);
that is, it allows one to distinguish a state with $0.99$ fidelity from a state with $0.98$ fidelity with the target.
Moreover, the $O(\log n)$ number of qubits can be taken to be $(1+\gamma) \log n$, for any constant $\gamma>0$. Finally, we also remark that the latter $O(\log n)$-qubit measurement can be taken to be a \emph{non-adaptive} Clifford measurement sampled uniformly at random, by leveraging classical shadow tomography~\cite{huang2020predicting}, as is done similarly in \cite{Huang2025} (the classical post-processing remains efficient since the measurement is only on $O(\log n)$ qubits).

Via repetition and a Chernoff bound, Theorem~\ref{thm:informal-1} implies the following corollary.
\begin{cor}[Optimal copy-complexity with one $O(\log n)$-qubit measurement per copy]
    There exists an efficient algorithm that, for all but a $O(2^{-n})$ fraction of pure $n$-qubit target states $\ket{\psi}$, satisfies the following. Given oracle access to a classical description of the target state $\ket{\psi}$ (via the model in Definition~\ref{def:oracle}), parameters $\epsilon, \delta > 0$, and $O\left( \epsilon^{-2} \log (1/\delta)\right)$ copies of an arbitrary state $\rho$, 
    \begin{itemize}
        \item (completeness) Outputs \emph{\textsf{accept}} with probability at least $1 - \delta$ if $\bra{\psi} \rho \ket{\psi} \geq 1 - \frac{\epsilon}{2+o(1)}$.
        \item (soundness) Outputs \emph{\textsf{reject}} with probability at least $1 - \delta$, if $\bra{\psi} \rho \ket{\psi} \leq 1 - \epsilon$.
    \end{itemize}
Moreover, for each copy of $\rho$, the algorithm makes $n-O(\log n)$ single-qubit measurements: either \emph{all} in the $Z$ basis, or \emph{all} in the $X$ basis; and only \emph{one} $O(\log n)$-qubit measurement on the remaining qubits of $\rho$.
\end{cor}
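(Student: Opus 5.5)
We derive the corollary from Theorem~\ref{thm:informal-1} by running the one-shot test independently on $N$ copies of $\rho$ and thresholding the empirical rejection frequency. Concretely, set $N = C\epsilon^{-2}\log(1/\delta)$ for a suitable constant $C$. Apply the single-copy algorithm of Theorem~\ref{thm:informal-1} to each copy, which gives i.i.d.\ indicator variables $R_1,\dots,R_N$ (with $R_i=1$ iff copy $i$ is rejected). Let $p = \Pr[R_i=1]$ and $\hat p = \frac1N\sum_i R_i$. We output $\mathsf{reject}$ iff $\hat p > \tau$, where $\tau$ is a threshold between the completeness and soundness values of $p$. Since the single-copy test uses only the stated measurements, so does each round, and efficiency carries over because we make only $N$ calls.

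The key step is showing that the two cases separate $p$ by $\Omega(\epsilon)$. In the completeness case, $\bra{\psi}\rho\ket{\psi}\ge 1-\frac{\epsilon}{2+o(1)}$, and Theorem~\ref{thm:informal-1} gives
\[
p\le 1-\bra{\psi}\rho\ket{\psi}\le \frac{\epsilon}{2+o(1)}.
\]
In the soundness case, $\bra{\psi}\rho\ket{\psi}\le 1-\epsilon$, and the theorem gives
\[
p\ge \frac{\epsilon}{2}-o(1).
\]
Matching these requires care with the $o(1)$ terms, and this is the main obstacle. As stated, the additive $o(1)$ term in the soundness bound is not relative to $\epsilon$, so a genuine gap requires the $o(1)$ term to be $o(\epsilon)$. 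I would handle this by revisiting the proof of Theorem~\ref{thm:informal-1} to express the soundness loss as an explicit quantity $\eta_n$ (for instance $n^{-\gamma}$ from the $(1+\gamma)\log n$ choice, or from the fraction of excluded states). I would then read the corollary's $\frac{\epsilon}{2+o(1)}$ as absorbing a multiplicative slack. Alternatively, if the loss is multiplicative, so that $p\ge \frac{1-F}{2+o(1)}$, the two regimes become $p\le \frac{\epsilon}{2+o(1)}$ versus $p\ge \frac{\epsilon}{2+o(1)}$, and the "$o(1)$" in the completeness threshold is chosen strictly larger to create a constant-factor gap, say $p\le \frac{\epsilon}{2}(1-c)$ versus $p\ge\frac{\epsilon}{2}(1-c/2)$. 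I expect that the proof of Theorem~\ref{thm:informal-1} actually yields this multiplicative form, and I would verify it there.

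Given a gap $\Delta = \Theta(\epsilon)$ between the two regimes, take $\tau$ to be the midpoint. Hoeffding's inequality then gives
\[
\Pr\bigl[|\hat p - p|\ge \Delta/2\bigr]\le 2e^{-N\Delta^2/2}\le\delta
\]
for $N = O(\epsilon^{-2}\log(1/\delta))$. This yields both the completeness and soundness guarantees with probability $1-\delta$. A multiplicative Chernoff bound would improve $N$ to $O(\epsilon^{-1}\log(1/\delta))$, but the stated bound only needs Hoeffding.
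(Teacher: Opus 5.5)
This is the paper's argument (independent repetition of the one-shot test plus a Chernoff bound), and your concern about the additive $o(1)$ is resolved as you anticipate: Theorem~\ref{thm:1}, via $A\preceq(\tfrac12+c)I+(\tfrac12-c)\ket{\psi}\!\bra{\psi}$, gives the multiplicative bound $\Pr[\mathsf{reject}]\ge(\tfrac12-c)\bigl(1-\bra{\psi}\rho\ket{\psi}\bigr)$ with $c=o(1)$ depending only on $n$, not on $\rho$ or $\epsilon$. The one point to make explicit is that a $\Theta(\epsilon)$ gap requires the completeness slack to be a fixed constant $\eta>0$, i.e.\ threshold $1-\frac{\epsilon}{2+\eta}$ with the constant in $N$ depending on $\eta$; a slack tending to $0$ would leave only an $o(\epsilon)$ gap and an $n$-dependent copy count, which the paper's one-line proof also glosses over.
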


Our second main result is a $\Theta(1/\log n)$-robust state certification algorithm for almost all pure target states that uses only single-qubit measurements. This algorithm is obtained by combining our previous algorithm from Theorem~\ref{thm:informal-1}, which includes one $O(\log n)$-qubit measurement, with the test from \cite{guptaHeODonnell2025}, which certifies any state using single-qubit measurements. Since the test from \cite{guptaHeODonnell2025} is not robust, this bootstrapping incurs a robustness loss proportional to the number of qubits to which it is applied, which in our case is $O(\log n)$.
\begin{theorem}[$\Theta(1/\log n)$-robustness with only single-qubit measurements]
\label{thm:informal-2}
There exists an efficient algorithm that, for all but a $O(2^{-n})$ fraction of pure $n$-qubit target states $\ket{\psi}$, satisfies the following. Given oracle access to a classical description of the target state $\ket{\psi}$ (via the model in Definition~\ref{def:oracle}), and a single copy of an $n$-qubit mixed state $\rho$,
\begin{itemize}
    \item (completeness) Outputs $\mathsf{accept}$ with probability at least $\bra{\psi}\rho\ket{\psi}$.
    \item (soundness) Outputs $\mathsf{reject}$ with probability at least $\frac{1-\bra{\psi}\rho\ket{\psi}}{(2+o(1)) \log n}$.
\end{itemize}
Moreover, the algorithm makes only single-qubit measurements.
\end{theorem}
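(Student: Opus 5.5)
The plan is a bootstrapping argument. The test of Theorem~\ref{thm:informal-1} is used as an outer layer, and its single $k=O(\log n)$-qubit measurement is replaced by the Gupta--He--O'Donnell single-qubit certification protocol~\cite{guptaHeODonnell2025}. Concretely, the first $n-k$ qubits are measured all in $Z$ or all in $X$, with probability $1/2$ each, exactly as in Theorem~\ref{thm:informal-1}. This gives an outcome string $z$, and the remaining $k$ qubits collapse to a post-measurement state $\rho_z$. In the original test, the conditional target $\ket{\psi_z}$ (the normalized projection of $\ket{\psi}$ onto outcome $z$ in the chosen basis) is tested with the projective measurement $\{\ket{\psi_z}\!\bra{\psi_z}, I-\ket{\psi_z}\!\bra{\psi_z}\}$. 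Here we instead run the \cite{guptaHeODonnell2025} test for the pure target $\ket{\psi_z}$ on $\rho_z$. The classical description of $\ket{\psi_z}$ is a $2^k=\mathrm{poly}(n)$-dimensional vector, and it can be computed efficiently from the oracle of Definition~\ref{def:oracle}. So the classical processing stays efficient, and every measurement is single-qubit.

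Write $p_z$ for the outcome probability and $f_z=\bra{\psi_z}\rho_z\ket{\psi_z}$ for the conditional fidelity. I will use the guarantees of \cite{guptaHeODonnell2025} on $k$ qubits in their completeness/soundness form: the test accepts with probability at least $f_z$, and rejects with probability at least $(1-f_z)/(Ck)$, with the constant $C$ taken as small as their analysis permits. The plan is to extract from the proof of Theorem~\ref{thm:informal-1} the intermediate statement that its rejection probability equals the averaged conditional infidelity,
\begin{equation*}
R:=\tfrac12\sum_{b\in\{Z,X\}}\sum_z p^{(b)}_z\bigl(1-f^{(b)}_z\bigr)\ \ge\ \frac{1-\bra{\psi}\rho\ket{\psi}}{2}-o(1).
\end{equation*}
This is presumably how that theorem is proved, via the uncertainty principle for conditional fidelities. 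Completeness is also inherited from that proof: there one shows $\sum_z p_z f_z\ge\bra{\psi}\rho\ket{\psi}$ for each basis, since projecting onto $\ket{\psi_z}$ after measuring is at least as good as projecting onto $\ket{\psi}$. Because the \cite{guptaHeODonnell2025} test accepts with probability at least $f_z$, the new test accepts with probability at least $\bra{\psi}\rho\ket{\psi}$.

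Soundness then follows by linearity. The rejection probability is at least $\frac12\sum_b\sum_z p^{(b)}_z(1-f^{(b)}_z)/(Ck)=R/(Ck)$, which is at least $\frac{1-\bra{\psi}\rho\ket{\psi}}{2Ck}-o(1/k)$. With $k=(1+\gamma)\log n$ this gives the claimed $\frac{1-\bra{\psi}\rho\ket{\psi}}{(2+o(1))\log n}$. The additive $o(1)$ has to be absorbed into the multiplicative $o(1)$ in the denominator. For that I will check that the error term in Theorem~\ref{thm:informal-1} is in fact multiplicative in the infidelity, or else small relative to it whenever the statement is nontrivial. If not, I will track the error term's dependence on $k$ and $\gamma$ and argue that it is $o(1)\cdot(1-\bra{\psi}\rho\ket{\psi})$.

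I expect two obstacles. The first is the constant: matching exactly $(2+o(1))\log n$ requires the \cite{guptaHeODonnell2025} soundness loss on $k$ qubits to be essentially $k$, with constant $1$ and no extra factors, and $\gamma\to0$. If their constant is worse, I will need a sharper accounting of their test. The second is the reduction: I must confirm that the proof of Theorem~\ref{thm:informal-1} really bounds the average conditional infidelity itself, and not only the final acceptance probability of the projective test. It should, because that test's rejection probability is exactly $R$.
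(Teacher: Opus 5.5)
Your proposal is correct and follows the paper's proof: condition on the basis and the outcome $z$, apply the one-shot GHO guarantee (accept with probability $\ge f_z$, reject with probability $\ge (1-f_z)/k$, with constant $1$) to each $k$-qubit conditional state, and average by linearity against the bound on the averaged conditional infidelity from Theorem~\ref{thm:1}. Your worry about the additive $o(1)$ goes away if you invoke the formal Theorem~\ref{thm:1} rather than the informal one: its proof gives $A\preceq(\tfrac12+c)I+(\tfrac12-c)\ket{\psi}\!\bra{\psi}$ with $c=o(1)$, so the rejection probability is at least $(1-\bra{\psi}\rho\ket{\psi})/(2+o(1))$ multiplicatively. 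The paper itself glosses over this conversion, and like you it only obtains $(2+o(1))C\log n$ for a fixed constant $C>1$.
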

Again, via sequential repetition, and a Chernoff bound, we also have the following.
\begin{cor}[Nearly optimal copy-complexity with only single-qubit measurements]
\label{thm:informal-single-qubit}
    There exists an efficient algorithm that, for all but a $O(2^{-n})$ fraction of pure $n$-qubit target states $\ket{\psi}$, satisfies the following. Given oracle access to a classical description of the target state $\ket{\psi}$ (via the model in Definition~\ref{def:oracle}), parameters $\epsilon, \delta > 0$, and $O\left( \epsilon^{-2} \log^2 (n) \log (1/\delta)\right)$ copies of an arbitrary state $\rho$, 
    \begin{itemize}
        \item (completeness) Outputs \emph{\textsf{accept}} with probability at least $1 - \delta$ if $\bra{\psi} \rho \ket{\psi} \geq 1 - \frac{\epsilon}{(2+o(1))\log n}$.
        \item (soundness) Outputs \emph{\textsf{reject}} with probability at least $1 - \delta$, if $\bra{\psi} \rho \ket{\psi} \leq 1 - \epsilon$.
    \end{itemize}
Moreover, the algorithm makes only single-qubit measurements.
\end{cor}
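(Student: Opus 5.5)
The plan is to repeat the one-shot test of Theorem~\ref{thm:informal-2} independently on $m$ copies of $\rho$ and threshold the empirical rejection frequency. The delicate point is that the completeness and soundness thresholds in the statement carry the \emph{same} leading constant $2$. So the gap between the two cases must come entirely from the $o(1)$ slack, and I will use a multiplicative (Bernstein-type) Chernoff bound rather than additive Hoeffding so that this vanishing gap costs nothing beyond the claimed $O(\epsilon^{-2}\log^2 n\log(1/\delta))$ copies. In particular, I will not halve the threshold, since that would only give the weaker $(4+o(1))\log n$ completeness threshold.

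\textbf{Setup.} Write $F=\bra{\psi}\rho\ket{\psi}$, and let $c_n = 2+o(1)$ denote the constant appearing in the soundness guarantee of Theorem~\ref{thm:informal-2}. Set $a = \epsilon/(c_n\log n)$ and $\eta = \eta_n = (\log n)^{-1/2}\to 0$. Define the completeness constant $c_n' = c_n(1+\eta)^2$, which is still $2+o(1)$, and set $a' = \epsilon/(c_n'\log n) = a/(1+\eta)^2$. Each run of the test rejects independently with probability $q$. The algorithm accepts iff the empirical rejection rate $\hat q$ satisfies $\hat q < a/(1+\eta)$.

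\textbf{The two cases.} In the completeness case, $F\ge 1-a'$, so by completeness of Theorem~\ref{thm:informal-2}, $q\le 1-F\le a'$. The threshold $a/(1+\eta)$ is then at least $(1+\eta)q$ (if $q<a'$ we may couple with a rate-$a'$ coin, which only increases $\hat q$). In the soundness case, $F\le 1-\epsilon$, so $q\ge (1-F)/(c_n\log n)\ge a$, and the threshold lies below $q/(1+\eta)\le(1-\eta/2)q$.

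\textbf{Concentration.} The multiplicative Chernoff bound gives
\[
\Pr\bigl[|\hat q - q|\ge \eta q/2\bigr]\le 2\exp(-\Omega(\eta^2 q m)),
\]
and here $q=\Theta(a)$ at the relevant boundary (monotonicity handles smaller or larger $q$). So it suffices to take
\[
m = O\!\left(\frac{\log(1/\delta)}{\eta^2 a}\right) = O\!\left(\frac{\log^2 n\,\log(1/\delta)}{\epsilon}\right),
\]
which is at most $O(\epsilon^{-2}\log^2 n\log(1/\delta))$ because $\epsilon\le 1$. Each copy is measured only with single-qubit measurements, exactly as in Theorem~\ref{thm:informal-2}, and the exceptional set of target states is the same $O(2^{-n})$ fraction.

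\textbf{Main obstacle.} The only step needing care is reconciling the matching leading constants. If one used additive Hoeffding, an $\eta a$ gap would require $m\sim 1/(\eta a)^2$ copies, which exceeds the claimed bound whenever $\eta\to0$. The variance-sensitive multiplicative bound is what lets the $o(1)$ in $c_n'$ tend to zero, here at rate $(\log n)^{-1/2}$, while keeping the copy count within the stated bound.
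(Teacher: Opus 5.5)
Your proposal is correct and follows the paper's route: the paper simply repeats the one-shot test of Theorem~\ref{thm:informal-2} on fresh copies and applies a Chernoff bound to the rejection count. Your multiplicative Chernoff bound with vanishing relative gap $\eta=(\log n)^{-1/2}$ lets both thresholds keep the same leading constant $2$ within the stated $O(\epsilon^{-2}\log^2(n)\log(1/\delta))$ copies; the paper's one-line argument, and its Section~3 variant with a constant-factor count threshold, leave this detail implicit.
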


\noindent

For all of our tests, we assume we have oracle access to the amplitudes of the target state $\ket{\psi}$ in either
the standard or the Hadamard basis (see Definition~\ref{def:oracle} for more details). This type of oracle access is in between the one assumed in \cite{Huang2025} and the one assumed in \cite{guptaHeODonnell2025} in terms of strength. We describe this model a bit more precisely in Section~\ref{sec:access-model}, and comment on its weaknesses.

Up to logarithmic factors, our tests achieve optimal robustness as well as copy-complexity.
In other words, our results demonstrate that one can (at least for most states), \emph{completely match} the performance of the optimal state certification scheme, which may require very entangled measurements, using only single-qubit (or $\log(n)$-qubit) measurements. 

As mentioned, our results only hold for all but an exponentially small $O(2^{-n})$-fraction of target states. This is in contrast to the prior work by~\cite{guptaHeODonnell2025}, whose state certification algorithm works for \emph{all} states. We remark that, in practice, ``nearly all'' states is very different from ``all'' states. This is because many natural states of interest (ground states, Gibbs states, phase states, etc.) are not typical Haar random states. On the other hand, there are also some natural settings where one cares about random (or at least typical) states, \textit{e.g.}\ in demonstrations of quantum advantage based on random circuit sampling~\cite{arute2019quantum}\footnote{While we do not have a formal proof, we conjecture that states generated by random $n^2$-depth brickwork circuits are ``good'' target states for our test with overwhelming probability.}, or based on the preparation of Haar random states~\cite{kretschmer2025demonstrating}; and we are hopeful that our tests can be at least useful as a benchmarking tool to certify the preparation of highly entangled states.

\subsection{Our state certification tests}
\label{sec:intro-test}
Our tests are inspired by the elegant protocol of Huang, Preskill, and Soleimanifar~\cite{Huang2025} (HPS for short).
Their test is conceptually simple: sample a uniformly random $i \in [n]$; measure all qubits of the lab state $\rho$, except for the $i$-th, in the $Z$ basis; let $\ket{\psi_z}$ denote the ``ideal'' post-measurement state of the $i$-th qubit (\textit{i.e.}\ the post-measurement state had $\rho$ been exactly the target), and measure the $i$-th qubit using the POVM $\{ \ket{\psi_z}\bra{\psi_z}, I -  \ket{\psi_z}\bra{\psi_z}\}$. HPS relate the robustness of their protocol to the spectral gap of a certain reversible Markov chain on the hypercube.
They then show that this gap is $\Omega(1/n)$ for all but an exponentially small fraction of $n$-qubit target states (when sampling from the Haar measure).
Unfortunately, this spectral gap really is typically $\Theta(1/n)$---see Appendix~\ref{sec:non-robust-HPS}---so the $O(1/n)$ robustness of their protocol cannot be improved simply via a tighter analysis.
What we find in this work is that, surprisingly, performing an HPS-like test in \textit{both} the standard and Hadamard basis suffices to achieve constant robustness.

Here is a sketch of our main protocol.
Let $C>1$ be a constant arbitrarily close to $1$.
Do one of the following, each with probability 1/2:
\begin{enumerate}
    \item Measure all but the last $C\log n$ qubits of $\rho$ in the \textbf{standard} basis, obtaining some outcome $z\in\{0,1\}^{n-C\log n}$, and post-measurement state $\rho_z$ on the last $C\log n$ qubits.
    Compute the conditional target state $\ket{\psi_z}\propto (\bra{z}\otimes I)\ket{\psi}$.
    Measure $\rho_z$ with the POVM $\{\ket{\psi_z}\!\!\bra{\psi_z}, I -  \ket{\psi_z}\!\!\bra{\psi_z}\}$. Accept or reject according to the measurement outcome.
    \item Measure all but the last $C\log n$ qubits of $\rho$ in the \textbf{Hadamard} basis, obtaining some outcome $z\in\{0,1\}^{n-C\log n}$, and post-measurement state $\hat{\rho}_z$ on the last $C\log n$ qubits.
    Compute the conditional target state $\ket{\hat{\psi}_z}\propto [(\bra{z}H^{\otimes n- C\log n})\otimes I]\ket{\psi}$.
    Measure $\hat{\rho}_z$ with the POVM $\{\ket{\hat{\psi}_z}\!\!\bra{\hat{\psi}_z}, I -  \ket{\hat{\psi}_z}\!\!\bra{\hat{\psi}_z}\}$. Accept or reject according to the measurement outcome.
\end{enumerate}

As we will see, this test obtains \textit{constant} robustness (and accordingly, the optimal constant sample complexity) in exchange for one $C\log n$-qubit measurement.
By replacing this final $C\log n$-qubit measurement with an instance of the one-shot protocol from Gupta, He, and O'Donnell \cite{guptaHeODonnell2025}, we get the result of Theorem~\ref{thm:informal-2}: $\Omega(1/\log n)$ robustness using only single-qubit measurements. 

One may wonder about the test that leaves out only \emph{one} qubit rather than $C\log n$; we discuss this in Section~\ref{sec:future}.

\subsection{Proof ideas}
\label{sec:intro-proof-ideas}
The main conceptual contribution of this work is to establish that, for most states, a local-measurement test in the style of HPS cannot have poor robustness in the standard and Hadamard bases simultaneously.
This intuition can be formalized as an \emph{uncertainty principle for conditional fidelities} which may be of independent~interest.

\paragraph{An uncertainty principle for conditional fidelities.}

Let $\ket{\psi}$ and $\ket{\phi}$ be pure states (which can be thought of as the target and lab state respectively in the context of state certification). Let $\mu_{\phi}^\mathsf{std}$ (resp. $\mu_{\phi}^\mathsf{had}$) denote the distribution over $(n-C\log n)$-bit outcomes from measuring all but the last $C\log n$ qubits of $\ket{\phi}$ in the standard (resp. Hadamard) basis.
Let $\ket{\psi_z}$, $\ket{\phi_z}$, $\ket{\hat{\psi}_z}$, and $\ket{\hat{\phi}_z}$ be $C\log n$-qubit states denoting the post-measurement states resulting from measuring the first $(n-C\log n)$ qubits of  $\ket{\psi}$ (resp. $\ket{\phi}$) in the standard (resp. Hadamard) basis.
Then, we have the following.

\begin{theorem}[Uncertainty principle for conditional fidelities]
\label{thm:uncertainty}
For all but a $O(2^{-n})$-Haar-fraction of $n$-qubit states $\ket{\psi}$, for all $n$-qubit states $\ket{\phi}$,
\begin{align*}
\mathop{\E}_{z\sim \mu_\phi^\mathsf{std}}\big|\big\langle\psi_z|\phi_z\big\rangle\big|^2+\mathop{\E}_{z\sim\mu_\phi^\mathsf{had}}\big|\big\langle\hat{\psi}_z|\hat{\phi}_z\big\rangle \big|^2\;\;&\leq\;\; 1+\,|\!\braket{\psi|\phi}\!|^2+o(1)\,.
\end{align*}
\end{theorem}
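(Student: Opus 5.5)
Throughout, write $k=C\log n$, $m=n-k$, $d=2^k$. The plan is to recast both conditional fidelities as expectation values of projectors and reduce the statement to a near-orthogonality estimate between two subspaces. For $z\in\{0,1\}^m$ let $p(z)=\|(\bra{z}\otimes I)\ket{\psi}\|^2$ and $q(z)=\|((\bra{z}H^{\otimes m})\otimes I)\ket{\psi}\|^2$. Then $\mu^\mathsf{std}_\phi(z)\,|\langle\psi_z|\phi_z\rangle|^2=|(\bra{z}\otimes\bra{\psi_z})\ket{\phi}|^2$, and similarly in the Hadamard basis. Hence the left-hand side equals $\bra{\phi}(P+Q)\ket{\phi}$, where
\[
P=\sum_z \ket{z}\!\bra{z}\otimes\ket{\psi_z}\!\bra{\psi_z},\qquad Q=\sum_z H^{\otimes m}\ket{z}\!\bra{z}H^{\otimes m}\otimes\ket{\hat\psi_z}\!\bra{\hat\psi_z}.
\]
Both are orthogonal projectors of rank $2^m$. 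Both contain $\ket{\psi}$ in their range, since $\ket{\psi}=\sum_z\sqrt{p(z)}\ket{z}\ket{\psi_z}$ and similarly for the Hadamard decomposition.

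Write $P=\ket{\psi}\!\bra{\psi}+P'$ and $Q=\ket{\psi}\!\bra{\psi}+Q'$ with $P',Q'\perp\ket{\psi}$, and let $f=|\langle\psi|\phi\rangle|^2$. Then
\[
\bra{\phi}(P+Q)\ket{\phi}=2f+\bra{\phi}(P'+Q')\ket{\phi}\le 2f+(1-f)\,\|P'+Q'\|.
\]
For any two projectors, $\|P'+Q'\|\le 1+\|P'Q'\|$; this follows from the two-projector (Jordan/CS) decomposition, where the top eigenvalue is $1+\cos\theta_{\min}$. So the left-hand side is at most $1+f+\|P'Q'\|$. The theorem therefore reduces to showing that $\|P'Q'\|=o(1)$ for all but an $O(2^{-n})$ Haar-fraction of $\psi$. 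This is a statement about $\psi$ alone, and the quantifier over all $\phi$ comes for free.

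Next I compute $P'Q'$ via a Gram matrix. With orthonormal bases $u_z=\ket{z}\ket{\psi_z}$ and $v_w=H^{\otimes m}\ket{w}\ket{\hat\psi_w}$, the quantity $\|PQ\|$ is the operator norm of $G_{zw}=\langle u_z|v_w\rangle=\langle z|H^{\otimes m}|w\rangle\,\bra{\psi_z}(\bra{w}H^{\otimes m}\otimes I)^\dagger\ldots$. More concretely, $G_{zw}=\bra{\psi_z}(\bra{z}\otimes I)(H^{\otimes m}\ket{w}\otimes I)\ket{\hat\psi_w}$, which is $\pm2^{-m/2}\langle\psi_z|\hat\psi_w\rangle$. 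Removing the common direction $\ket{\psi}$ corresponds to subtracting the rank-one matrix $\sqrt{p}\sqrt{q}^{\,T}$, so that $\|P'Q'\|=\|G-\sqrt p\sqrt q^{\,T}\|$.

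To analyze this, I model $\ket{\psi}=g/\|g\|$ with $g$ a standard complex Gaussian in $\mathbb{C}^{2^n}$. The unnormalized blocks $g_z=(\bra{z}\otimes I)g\in\mathbb{C}^d$ are i.i.d. Gaussian, and $\hat g_w$ are linear combinations $\hat g_w=2^{-m/2}\sum_z(\pm1)g_z$. Heuristically, $\langle\psi_z|\hat\psi_w\rangle$ behaves like $\sqrt{p(z)/q(w)}$-weighted mean part plus a fluctuation of order $1/\sqrt d$. The centered $2^m\times2^m$ matrix $2^{-m/2}\cdot(\text{entries}\sim d^{-1/2})$ should then have norm $O(1/\sqrt d)=O(n^{-C/2})=o(1)$, as for a random matrix with independent-like entries.

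I expect the main obstacle to be making this rigorous despite the dependencies: normalizations $\|g_z\|$ and $\|\hat g_w\|$, and $\hat g_w$ being built from all $g_z$. I would handle this in three steps. First, for all but a $2^{-\Omega(n)}$ (indeed much smaller) probability, all $\|g_z\|^2$ and $\|\hat g_w\|^2$ concentrate within $1\pm O(\sqrt{n/d})$ of $d$. Here $C>1$ is exactly what makes the union bound over $2^m$ blocks with $d=n^C$ work and the error $o(1)$. This lets me replace normalized vectors by unnormalized ones at $o(1)$ cost in operator norm. Second, I bound $\mathbb{E}\|\widetilde G\|$ for the unnormalized Gaussian version $\widetilde G_{zw}=2^{-m/2}d^{-1}\langle g_z|\hat g_w\rangle$ minus its rank-one component. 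I would do this via a trace-moment computation (or by writing $\widetilde G$ as a product $A^\dagger (H^{\otimes m}\otimes I) A'$ of Gaussian block matrices and using Gordon/Chevet-type bounds), aiming for $O(d^{-1/2}+2^{-m/2})$. Third, I note that $\psi\mapsto\|P'Q'\|$ is Lipschitz on the sphere once away from the bad event, and apply Lévy concentration to obtain failure probability $\exp(-\Omega(2^n\cdot o(1)^2))\ll 2^{-n}$. The final $O(2^{-n})$ is then dominated by the norm-concentration step.
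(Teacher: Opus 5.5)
\textbf{Verdict: there is a genuine gap.} The reduction is correct, but the central random-matrix estimate is only sketched.

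\textbf{What is correct.} Your reduction is right and is essentially the paper's. The left-hand side equals $\bra{\phi}(P+Q)\ket{\phi}$, where $P+Q=2A$ for the paper's acceptance operator $A$. The inequality $\|P'+Q'\|\le 1+\|P'Q'\|$ holds, and ``$\|P'Q'\|=o(1)$ with high probability'' is equivalent to the paper's Theorem~2. The identity $\|P'Q'\|=\|G-\sqrt{p}\sqrt{q}^{\,T}\|$ is also correct.

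\textbf{Where it breaks.} Everything after that point, namely the proof that this norm is $o(1)$, is a heuristic plus a list of candidate tools. That step is where essentially all of the paper's work lies. Neither the heuristic nor the tools hold up as stated.
\begin{itemize}
\item The ``independent-like entries'' picture is misleading. Drop the signs $(-1)^{z\cdot w}$ and the matrix $[\langle\psi_z|\hat\psi_w\rangle]_{z,w}$ has entries of the same size, yet it has rank at most $d$ and norm of order $2^m/d$. So the smallness of $G$ minus its rank-one part comes from the interplay between those signs and the relation $\hat g_w=2^{-m/2}\sum_z(-1)^{w\cdot z}g_z$. Your sketch never extracts this structure.
\item Gordon and Chevet do not apply to $\mathcal{A}^\dagger(H^{\otimes m}\otimes I)\hat{\mathcal{A}}$. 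The two block-diagonal factors are linear images of the \emph{same} Gaussian vector, so this is an order-two Gaussian chaos, not a bilinear form in independent Gaussian matrices. Block-diagonal matrices with Gaussian-vector blocks are also not i.i.d.\ Gaussian matrices.
\item A trace-moment computation might be made to work. However, nothing in the proposal shows how the XOR-structured Wick contractions would be controlled at moment order $\approx\log 2^m\approx n$.
\end{itemize}

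\textbf{The missing idea.} You need an explicit reduction to a single chaos matrix that can be decoupled. Expanding gives $(-1)^{z\cdot w}\langle g_z|\hat g_w\rangle=2^{-m/2}\sum_{s}(-1)^{w\cdot s}\langle g_z|g_{z\oplus s}\rangle$. The $s=0$ term is the rank-one piece. It is annihilated \emph{exactly} if you compute $\|P'Q'\|=\sup_{\beta\perp\sqrt{p},\,\|\beta\|=1}\|G^\dagger\beta\|$; this is how the paper uses $\braket{f|g}=0$ in Lemma~\ref{lem:tradeoff}. What remains is the matrix $2^{-m/2}E$ with $E_{zs}=\langle g_z|g_{z\oplus s}\rangle$ for $s\neq 0^m$, up to diagonal normalizations of size $\approx d^{-1/2}$ and the unitary $H^{\otimes m}$. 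So you need $\|E\|=o(d\,2^{m/2})$.

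The paper proves the row-normalized form $\|E\|\le\sqrt{2^m n^{1.1}}$ in three steps:
\begin{enumerate}
\item It applies de la Pe\~na--Montgomery-Smith decoupling, replacing $g_{z\oplus s}$ by an independent copy. This is legitimate precisely because $s\neq 0$, so the sum runs over distinct index pairs.
\item It writes the decoupled matrix as a Gaussian series in the fresh copy and applies the matrix Gaussian-series tail bound.
\item It computes the variance parameter using the fact that the relevant sum of $B^\dagger B$ terms is XOR-circulant. Such a matrix is diagonalized by the Walsh--Hadamard transform, with eigenvalues that are sums of i.i.d.\ exponentials.
\end{enumerate}
The $\sqrt{\log 2^m}$ loss in step 2 is a second place where $C>1$ is needed, beyond your norm-concentration union bound.

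\textbf{A secondary error in your concentration step.} Your L\'evy step overstates the concentration. The map $\psi\mapsto\psi_z$ has Lipschitz constant $\approx p(z)^{-1/2}\approx 2^{m/2}$. The natural bound through $\|\Delta G\|\le\|\Delta U\|+\|\Delta V\|$ therefore gives tails $\exp(-\Omega(d t^2))$, not $\exp(-\Omega(2^n t^2))$. This still beats $2^{-n}$ for some $t=o(1)$ when $C>1$. But it only concentrates the norm around its mean, and your step two has not actually bounded that mean.
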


Theorem~\ref{thm:uncertainty} is perhaps more recognizable as an uncertainty principle in its equivalent trace distance formulation:
\begin{equation}
\label{eq:td-uncertainty}
\mathop{\E}_{z\sim \mu_\phi^\mathsf{std}} \|\psi_z-\phi_z\|^2_\textnormal{tr}
\;+\;
\mathop{\E}_{z\sim \mu_\phi^\mathsf{had}} \|\hat{\psi}_z-\hat{\phi}_z\|^2_\textnormal{tr}
\;\;\ge\;\;
\|\psi-\phi\|^2_\textnormal{tr}-o(1)\,.\end{equation}
We emphasize that \eqref{eq:td-uncertainty} is a true uncertainty principle in the sense that, while for far-apart $\ket{\psi}$ and $\ket{\phi}$ it is possible for one of the quantities on the left-hand side to vanish, their \textit{sum} may not.
For example, to make the standard basis conditional distance vanish, decompose any target state as $\ket{\psi}=\sum_{z}\alpha_z\ket{z}\ket{\psi_z}$, $\|\alpha\|_2=1$.
Choose any vector $\beta$ with $\|\beta\|_2=1$ and $\langle\beta,\alpha\rangle=0$.
Then $\ket{\phi}:=\sum_z\beta_z\ket{z}\ket{\psi_z}$ is orthogonal to $\ket{\psi}$ yet has $\E_{z\sim\mu_\phi^\mathsf{std}}|\!\braket{\psi_z|\phi_z}\!|^2=1$.
Theorem~\ref{thm:uncertainty} says that for such $\phi$, the expected conditional fidelity in the Hadamard basis must satisfy $\E_{z\in\mu_\phi^\mathsf{had}}|\!\braket{\hat{\psi}_z|\hat{\phi}_z}\!|^2\in o(1)$.

This uncertainty principle underlies both versions of our state certification protocol in the sense that
\[\Pr[\mathsf{accept}] = \frac12 \left(\mathop{\E}_{z\sim \mu_\phi^\mathsf{std}}\big|\big\langle\psi_z|\phi_z\big\rangle\big|^2+\mathop{\E}_{z\sim\mu_\phi^\mathsf{had}}\big|\big\langle\hat{\psi}_z|\hat{\phi}_z\big\rangle \big|^2\right).\]
Consequently, Theorem~\ref{thm:uncertainty} implies the following upper bound:
\[\Pr[\mathsf{accept}]\leq \tfrac12+\tfrac12|\!\braket{\psi|\phi}\!|^2 + o(1)\, ,\]
which is equivalent to the ``soundness'' condition in our main theorem.

\medskip

The uncertainty principle of Theorem~\ref{thm:uncertainty} is a rephrasing of our Theorem~\ref{thm:2}.
While it is conceptually intuitive and simple to state, its proof turns out to be quite technical.
The first key step is to find a way to write the Hadamard basis expression $\mathop{\E}_{z\sim\mu_\phi^\mathsf{had}}\big|\big\langle\hat{\psi}_z|\hat{\phi}_z\big\rangle \big|^2$ in terms of both the ``standard basis'' post-measurement states $\ket{\psi_z}$, $\ket{\phi_z}$, and the ``global'' inner product $\braket{\psi|\phi}$.
To control the resulting expression we rely on sophisticated machinery from random matrix theory.

\subsection{Discussion and future work}
\label{sec:future}
This work gives an $\Omega(1/\log n )$-robust algorithm for verifying Haar-typical states using single-qubit measurements only, as well as a $\Theta(1)$-robust algorithm using single-qubit measurements along with one $O(\log n)$-qubit measurement.
There are several exciting next steps for quantum state certification using simple measurements, as we outline next.

\paragraph{Constant robustness with single-qubit measurements.}
We conjecture that the version of our protocol that leaves out only a \emph{single} qubit (chosen uniformly at random), rather than $O(\log n)$ qubits, has constant robustness for Haar-typical target states.

\begin{conjecture}
    Consider a modified version of our main protocol from Section~\ref{sec:intro-test} that measures all but \textit{one} qubit---chosen uniformly at random---and then applies the ideal post-selected measurement on the final qubit.
    This test has constant robustness.
\end{conjecture}

In preliminary simulations, we have found strong evidence for the validity of this conjecture.
Note that if this conjecture were true, it would give a fully non-adaptive test, obtained by replacing the ideal one-qubit measurement by a random Pauli measurement (leveraging classical shadow tomography).
This would also show---at least for all but a negligible fraction of states---that the statistically optimal measurement is matched in performance (robustness, copy complexity) by a protocol that only uses single-qubit measurements. We also remark that our numerical simulations suggest that randomizing over the choice of qubit to ``leave out'' is essential for robustness.
This is in contrast to our test that uses one $O(\log n)$-qubit measurement, where, perhaps surprisingly, randomization over the choice of subset of $O(\log n)$ qubits is not needed.

\paragraph{Beyond typicality.}
The focus of this work was to obtain a high degree of robustness.
Although the tests presented in this work apply to all but a negligible fraction of hypothesis states, it is important to (\textit{i}) understand whether explicit families of states important in practice (phase states, ground states of certain Hamiltonians, etc.) are covered, and (\textit{ii}) whether our protocol can be extended or modified to verify \textit{all} states. 

There are some simple classes of target states that are not covered by our test, but are covered by other basic tests.
For example, if the target state had the form $\ket{\psi}=\ket{+i}\otimes\ket{\psi_\text{rest}}$, then a lab state of the form $\ket{\phi}=\ket{-i}\otimes \ket{\psi_\text{rest}}$
would be accepted with probability 1 although $\ket{\psi}$ and $\ket{\phi}$ are orthogonal (here $\ket{+i}$ and $\ket{-i}$ are the $\pm1$ eigenvectors of Pauli $Y$).
Of course, this pair of states can be trivially distinguished by the appropriate measurement on the first qubit.

Since our test can be made non-adaptive via shadow tomography, the adaptivity lower bound of \cite{guptaHeODonnell2025} suggests a new idea will be needed to obtain robustness for all states.
Robust certification of \textit{all} quantum states by single-qubit measurements remains a tantalizing open problem.

One may wonder if running one-shot GHO
    \cite{guptaHeODonnell2025} on both $\ket{\psi}$ and $H^{\otimes n}\ket{\psi}$  would suffice.
    Unfortunately, this will also not work, as the following proposition shows.
    
\begin{prop}[No universal robustness for ``conditional comparison tests'']
\label{prop:no-univ-robustness}
    Let $r < n$. Consider any state certification algorithm with the following structure.
    \begin{enumerate}[label=\textit{\roman*}.]
        \item[(i)] Measure all but $r$ qubits with single-qubit measurements $M_1,\ldots, M_{n-r}$ (where any measurement order and adaptivity is allowed).
        \item[(ii)] Measure the remaining $r$ qubits with a measurement $M^*$ depending arbitrarily on the outcomes of $M_1,\ldots, M_{n-r}$.
        \item[(iii)] Output \textsf{accept} or \textsf{reject} depending exclusively on the outcome of $M^*$.
    \end{enumerate}
    Then, provided the algorithm accepts the target state $\ket{\psi}$ itself with certainty, there exist target states $\ket{\psi}$ for which the algorithm has robustness $O(r/n)$.
\end{prop}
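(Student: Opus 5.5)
We will exhibit a hard target state together with a family of far-away lab states, chosen so that every test of the given form must leave them nearly indistinguishable from the target. The natural candidate is the $W$-type state
\[
\ket{\psi}=\frac{1}{\sqrt{n}}\sum_{j=1}^n \ket{e_j},
\]
where $e_j$ is the $j$-th weight-one string, or a GHZ-like analogue. For the lab state we use a locally perturbed version $\ket{\phi}=\frac{1}{\sqrt n}\sum_j \omega_j\ket{e_j}$ with phases $\omega_j\in\{\pm1\}$. These phases should be chosen so that $|\braket{\psi|\phi}|^2$ is bounded away from $1$; for instance, flip the signs on half the positions, which gives $\braket{\psi|\phi}=0$.

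The key observation concerns steps (i)--(iii). The acceptance decision depends only on the outcome of $M^*$, and the target must be accepted with certainty. Consequently, for every transcript $t$ of the single-qubit measurements, $M^*$ accepts the conditional target state $\psi_t$ on the remaining $r$ qubits. For the lab state, acceptance holds whenever its conditional state $\phi_t$ coincides with $\psi_t$ up to phase.

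Since the order and adaptivity of $M_1,\dots,M_{n-r}$ are arbitrary, the argument needs an averaging step over sign patterns. We pick a random set $S$ of flipped positions of size $\Theta(1)$, or a single random position, rather than half of them. The point is that each measured qubit touches only one coordinate, and the remaining $r$ unmeasured qubits (known only after adaptivity) are where phases can be detected. So we choose $\phi$ to differ from $\psi$ only by a sign on one index $j$, drawn uniformly.

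With probability about $(n-r)/n$, the index $j$ lies among the measured qubits. In that case I would try to show that the conditional post-measurement states of $\psi$ and $\phi$ on the remaining $r$ qubits are proportional, so the test accepts. The difficulty is that a single-qubit measurement in a non-$Z$ basis on qubit $j$ does see the sign. To handle this, I would switch to a state whose distinguishing information is spread out: use $\ket{\psi}=\ket{0^n}+\ket{1^n}$ (GHZ) and lab state $\ket{0^n}-\ket{1^n}$. These are orthogonal, yet after measuring $n-r$ qubits in the $Z$ basis both collapse to the same product state. In the $X$ basis, however, the relative phase is recorded by the measured parity.

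The main obstacle is handling adaptive and general single-qubit bases, for which any single fixed pair $(\psi,\phi)$ fails because the right basis reveals the difference. I would therefore argue by averaging over a family: take $\ket{\psi}$ to be a random product-like or structured state, say a tensor product of $n/r$ blocks, each an $r$-qubit state. The lab state is a mixture $\rho$ that corrupts a uniformly random block. Only the block containing the unmeasured qubits can be tested with a joint measurement. Each of the other blocks is measured entirely by single-qubit measurements, and we choose the block states to be entangled (e.g. Bell/GHZ within the block) and corruptions to be orthogonal states with identical single-qubit marginals, such as $\ket{\Phi^+}$ versus $\ket{\Phi^-}$ across the block. Then the conditional states on the unmeasured qubits agree, and the test accepts, unless the corrupted block overlaps the $r$ special qubits. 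Because the corrupted block is uniform among $n/r$ blocks, this happens with probability $O(r/n)$. The remaining care point is that the adaptive choice of which $r$ qubits remain cannot depend on the corruption beyond its marginals, which are identical. So $\Pr[\mathsf{reject}]\le O(r/n)$ while the fidelity is $0$, giving robustness $O(r/n)$.
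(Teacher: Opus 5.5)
\textbf{Gap 1: the block count.} Your final block construction does not give $O(r/n)$ as stated. You take $n/r$ blocks of size $r$ and argue as if the unmeasured set $R$ were a single block (``only the block containing the unmeasured qubits can be tested''). But $R$ is chosen by the algorithm and can meet up to $r$ different blocks, and a partially measured block does leak the corruption into $R$. If $r-1$ qubits of $\mathrm{GHZ}^{\pm}_r$ are measured in the $X$ basis with outcome parity $p$, the last qubit is left in $(\ket{0}\pm(-1)^{p}\ket{1})/\sqrt{2}$. Consider the non-adaptive test that leaves one qubit of each of $r$ blocks unmeasured, measures everything else in $X$, and projects onto the target's conditional state. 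It accepts $\ket{\psi}$ with certainty. Yet it rejects your mixture with probability $r\cdot\frac{r}{n}=\frac{r^2}{n}$ (when $r^2\le n$), even though the mixture has fidelity $0$.

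The repair is to use constant-size blocks, e.g.\ $n/2$ Bell pairs with a uniformly random pair flipped to $\ket{\Phi^-}$. Then $R$ meets at most $r$ pairs. Target and lab state are products across pairs, and $\ket{\Phi^{\pm}}$ have identical single-qubit marginals. Hence any measurement history in which pair $b$ is never fully measured has the same probability under $\ket{\psi}$ as under the lab state corrupted at $b$. So the probability that the corrupted pair meets $R$ equals the corresponding probability under $\ket{\psi}$, and its average over $b$ is at most $\frac{2}{n}\cdot r$. This coupling is the precise content of your ``care point''. For blocks of size $k>2$ it would require all $(k-1)$-qubit marginals to agree, not only the single-qubit ones.

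\textbf{Gap 2: why a fully measured corrupted block is harmless.} Identical single-qubit marginals do not make $\ket{\Phi^+}$ and $\ket{\Phi^-}$ indistinguishable by single-qubit measurements. Measuring both qubits in $X$ reveals the parity deterministically, the same phenomenon you noted for GHZ. Under the literal wording, $M^*$ may depend arbitrarily on earlier outcomes, so the test can simply output reject on the odd-parity branch. That branch has probability zero on $\ket{\psi}$, so perfect completeness does not forbid it, and your lab state is then rejected with probability close to $1$.

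What actually makes the argument work is the conditional-comparison reading: on every branch, $M^*$ accepts the target's conditional state on $R$. For a block-product target, the fully measured corrupted block contributes only a scalar factor, so the lab's and the target's conditional states on $R$ coincide.

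\textbf{Comparison with the paper.} The paper's route avoids entanglement entirely. It uses a product target and a product lab state whose factors are slightly rotated (per-qubit fidelity $1-1/n$ suffices). The state on $R$ is then the same on every branch, so adaptivity and steering are irrelevant. Every $r$-qubit marginal has fidelity at least $(1-1/n)^r\ge 1-r/n$, while the global fidelity tends to $1/e$. That argument, too, implicitly relies on the conditional-comparison reading of $M^*$. Your repaired Bell-pair version buys an exactly orthogonal lab state, at the price of the coupling argument.
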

The proof is simple and included in Appendix~\ref{sec:no-univ-robustness}. This suggests that a test with constant robustness for \textit{all} target states using local measurements would require a decision procedure that takes into account the statistics of all the measurements, not just the final measurement $M^*$.
In particular, it would need to deviate substantially from existing paradigms for state certification with one-qubit measurements.

\paragraph{Beyond pure states.} Another interesting direction is to ask whether or not one can extend the algorithms for state certification with single or few-qubit measurements beyond the setting where the target state is a pure state.
Note that the analysis of our algorithm, as well as the prior work of~\cite{Huang2025,guptaHeODonnell2025} crucially require that the target state is pure.
On the other hand, it is known (see \textit{e.g.}~\cite{o2015quantum}) that exponentially many copies are needed for state certification of arbitrary mixed states, even with arbitrarily powerful measurements, and thus some structural assumptions are required for us to hope for efficient algorithms.
A natural open question thus is whether we can still obtain efficient and robust algorithms for state certification with single-qubit measurements for sufficiently simple classes of mixed target states, such as random low-rank states, or Gibbs states of simple Hamiltonians.

\subsection{Access model}
\label{sec:access-model}
As in earlier protocols \cite{Huang2025,guptaHeODonnell2025}, we assume classical access to an oracle-based description of the target state $\psi$. The access model that we work with is, in terms of strength,  between that of HPS and GHO.
In particular, we require knowledge of amplitudes of the target state in \emph{two} bases: standard and Hadamard.
For comparison, HPS only requires amplitudes in the standard basis, whereas GHO require amplitudes in all product bases (but is capable of certifying \emph{all} target states, rather than ``almost all'').

We acknowledge that our access model creates a bit of a tension: Haar-random states are highly entangled and exponentially complex, whereas states for which this kind of amplitude access can be simulated efficiently are generally more structured and tractable. For example, tensor network states with polynomial bond dimension allow efficient amplitude queries but are low-entangled, and neural quantum states typically support efficient queries in only one basis. This means that our test will generally require inefficient classical post-processing. 
We note that, while GHO require an even stronger access model, their certification works for all states (which thus includes classes of states for which the access model is efficiently reproducible). That said, we find the existence of a robust test that uses few-qubit measurements surprising even from a purely information-theoretic standpoint.
\begin{definition}
\label{def:oracle}
We consider an oracle model which accepts queries in the format $(z,\mathsf{b})$ for $z\in\{0,1\}^n$ and $\mathsf{b}\in\{\mathsf{std},\mathsf{had}\}$\,, and returns the complex number
\[\mathcal{O}_\psi(z,\mathsf b)=\begin{cases}
\braket{z|\psi} & \text{if } \mathsf{b}=\mathsf{std}\\[0.5em]
\braket{z|H^{\otimes n}|\psi} & \text{if } \mathsf{b}=\mathsf{had}\,.
\end{cases}\]
\end{definition}

This access model allows us to explicitly and efficiently learn the leftover state on $O(\log n)$ qubits conditioned on the outcome of any measurement on $\ket{\psi}$ used in our protocols.
\begin{lem}
    Let $C>1$ be a constant and consider the standard- and Hadamard-basis decompositions of an $n$-qubit state $\ket{\psi}$:
    \[\ket{\psi}=\sum_{z\in\{0,1\}^{n-C\log n}}\ket{z}\ket{\psi_z}=H^{\otimes n}\Big(\sum_{z\in\{0,1\}^{n-C\log n}}\ket{z}\ket{\hat{\psi}_z}\Big).\]
    Then, using access to $\mathcal{O}_\psi$ as in Definition~\ref{def:oracle}, it is possible to obtain a complete description of any of the $\ket{\psi_z}$'s or the $\ket{\hat{\psi}_z}$'s in polynomial time.
\end{lem}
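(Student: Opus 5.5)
The standard-basis case is direct. Write $r = C\log n$ and $m = n-r$. For fixed $z\in\{0,1\}^{m}$, the unnormalized conditional state is
\[\ket{\psi_z}=\sum_{y\in\{0,1\}^{r}}\braket{zy|\psi}\ket{y}.\]
Each coefficient $\braket{zy|\psi}$ is returned by the single query $\mathcal{O}_\psi(zy,\mathsf{std})$. Querying all $2^{r}=n^{C}$ strings $y$ therefore gives the full amplitude vector of $\ket{\psi_z}$ with polynomially many queries. If the normalized state is wanted, we additionally compute its norm $\sqrt{\sum_y|\braket{zy|\psi}|^2}$, which costs $\mathrm{poly}(n)$ arithmetic operations.

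For the Hadamard case, we apply $H^{\otimes n}$ to both sides of the decomposition. Since $H$ is an involution, $H^{\otimes n}\ket{\psi}=\sum_z\ket{z}\ket{\hat\psi_z}$, which gives
\[\ket{\hat\psi_z}=\sum_{y}\braket{zy|H^{\otimes n}|\psi}\ket{y}.\]
Each of these coefficients is exactly the query $\mathcal{O}_\psi(zy,\mathsf{had})$, so again $n^{C}$ queries suffice.

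The conditional states in the protocol itself are defined slightly differently: $\ket{\hat\psi_z}\propto[(\bra{z}H^{\otimes m})\otimes I]\ket{\psi}$, with no Hadamard on the last $r$ qubits. If we need that version, we simply multiply the vector obtained above by $H^{\otimes r}$, a $2^{r}\times 2^{r}$ matrix. This is explicit and costs $O(r\,2^{r})=\mathrm{poly}(n)$ time by the fast Walsh–Hadamard transform.

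The only point requiring any care is the size of the last block. Its dimension is $2^{C\log n}=n^{C}$, which is polynomial exactly because $C$ is a constant, so both the number of queries and the linear algebra are polynomial. Everything else is bookkeeping about which block receives the Hadamard transform.
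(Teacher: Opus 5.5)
Your proposal is correct and matches the paper's argument: query $\mathcal{O}_\psi$ on all $n^C$ extensions $z\|y$ of the fixed prefix $z$, using $\mathsf{std}$ queries for $\ket{\psi_z}$ and $\mathsf{had}$ queries for $\ket{\hat{\psi}_z}$. The paper writes out only the standard-basis case. Your explicit treatment of the Hadamard case, and of the extra $H^{\otimes C\log n}$ needed to match the protocol's convention for $\ket{\hat{\psi}_z}$, fills in details the paper leaves implicit.
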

\begin{proof}
    We can learn the conditional standard basis state $\ket{\psi_z}$, for some $z\in\{0,1\}^{n-C\log n}$, by querying $\mathcal O_{\psi}(z||w,\mathsf{std})$ for all $w\in\{0,1\}^{C\log n}$, for a total query complexity (and runtime) of $n^C$.
\end{proof}
As a consequence of this lemma, we are also able to simulate the $O(\log n)$-qubit version of the oracle needed for the one-shot GHO subroutine needed for Theorem~\ref{thm:informal-2}.

\subsection{Additional related work}
Beyond the aforementioned line of work on state certification with single-qubit measurements, there is by now a vast literature on quantum state certification in various other settings. Here we will focus only on the most relevant works.
Quantum state certification was first considered in the quantum information theory community in the asymptotic regime, where it was referred to as \emph{quantum state discrimination}; see \textit{e.g.}~\cite{audenaert2008asymptotic} for a survey of these results.
More recently, there have been a number of works studying the copy complexity of quantum state certification of mixed states, see \textit{e.g.}~\cite{o2015quantum,Badescu2019,haah2016sample,flammia2011direct,da2011practical,aolita2015reliable,chen2022toward,o2025instance}.
However, the methods employed either required heavily entangled and complex measurements, or a number of copies that scales exponentially with the system size, or both, and it is known that certification of general mixed states requires exponentially many copies, even when arbitrarily complex measurements are allowed~\cite{childs2007weak,o2015quantum}.
Efficient state certification algorithms that use simpler measurements have also been developed for several classes of special states, including stabilizer states~\cite{flammia2011direct,huang2020predicting,huang2024learning}, bosonic and fermionic Gaussian states~\cite{aolita2015reliable,gluza2018fidelity}, and states generated by specific types of circuits~\cite{takeuchi2018verification,huang2024learning}.
Yet other methods achieve good fidelity estimation, but only under specific noise models~\cite{arute2019quantum,choi2023preparing,cotler2023emergent}.
We also mention that state certification itself fits into a larger literature on quantum property testing and learning, see \textit{e.g.}~\cite{montanaro2013survey,anshu2024survey} for a more in-depth discussion. 

Finally, the problem of state certification also has long tradition in the setting of Bell's inequalities, where it goes by the name of \emph{self-testing}~\cite{mayers2004self}. We refer to \cite{vsupic2020self} for a survey.

\section*{Acknowledgments}
The authors are grateful to Chinmay Nirkhe and Yihui Quek for various discussions during this work.
The authors also thank the anonymous STOC reviewers for helpful comments that improved the manuscript. A.C.\ is thankful for support from the Google Research Scholar program. E.W.\ was supported by the Quantum@UW REU program, funded by NSF award CCF-2446908. 

\section{Robust certification with one logarithmic-qubit measurement}
\subsection{The test}
\label{sec:algo-log}
Let $\ket{\psi}$ be the target state, and $\rho$ be the lab state. Let $\gamma>0$ be an arbitrary constant. Let $r = n - (1+\gamma)\log n$. 
\begin{itemize}
\item With probability $\frac12$, measure the first $r$ qubits of $\rho$ in the \emph{standard} basis. Let $z \in \{0,1\}^r$ be the outcome, and let $\rho_z$ be the post-measurement state on the last $(1+\gamma)\log n$ qubits. Compute the ``ideal'' post-measurement state $\ket{\psi_z}\propto (\bra{z}\otimes I)\ket{\psi}$. Measure the last $(1+\gamma)\log n$ qubits in the basis $\{ \ket{\psi_z}\bra{\psi_z}, I - \ket{\psi_z}\bra{\psi_z}\}$. Accept if the first outcome is obtained, and reject otherwise.
\item With probability $\frac12$, measure the first $r$ qubits of $\rho$ in the \emph{Hadamard} basis. Let $z \in \{0,1\}^r$ be the outcome, and let $\hat{\rho}_z$ be the post-measurement state on the last $(1+\gamma)\log n$ qubits. Compute the ``ideal'' post-measurement state $\ket{\hat{\psi}_z}\propto [(\bra{z}H^{\otimes n- C\log n})\otimes I]\ket{\psi}$. Measure the last $(1+\gamma)\log n$ qubits in the basis $\{ \ket{\hat{\psi}_z}\bra{\hat{\psi}_z}, I - \ket{\hat{\psi}_z}\bra{\hat{\psi}_z}\}$. Accept if the first outcome is obtained, and reject otherwise.
\end{itemize}

\begin{remark}
    We note that this test can be made both (\textit{i}) simpler to implement and (\textit{ii}) fully non-adaptive by replacing the optimal $O(\log n)$-qubit measurement with a random Clifford measurement and applying the classical shadows framework \cite{huang2020predicting} to estimate the overlap with the target conditional state after the fact.
\end{remark}

\subsection{Proof of robustness}
Completeness of the test follows from the fact, for any lab state $\ket{\phi}$, its probability of acceptance is $\textnormal{Tr}\left[A \ket{\phi}\!\!\bra{\phi}\right]$, where $A$ is the PSD operator $$A = \frac12 \sum_{z} \ket{z}\!\!\bra{z}\otimes  \ket{\psi_z}\!\!\bra{\psi_z} + \frac12  \sum_{z} (H^{\otimes r}\ket{z}\!\!\bra{z} H^{\otimes r})\otimes\ket{\hat{\psi}_z}\!\!\bra{\hat{\psi}_z} \,.$$
Expanding $\ket{\phi}$ in an eigenbasis of $A$, we see that $\textnormal{Tr}\left[A \ket{\phi}\bra{\phi}\right]$ is a convex combination of the contributions from each eigenvector (since the eigenvectors are orthogonal to each other). Since $\ket{\psi}$ itself is an eigenvector with eigenvalue $+1$, we have $\textnormal{Tr}\left[A \ket{\phi}\!\!\bra{\phi}\right] \geq |\braket{\psi| \phi}|^2$. This extends to mixed lab states $\rho$, to give $\textnormal{Tr}[A \rho] \geq \bra{\psi} \rho \ket{\psi}$.  

\medskip
So, from here on, we focus on proving soundness of our test. For convenience of notation, we will take $\gamma = 1$ in this proof, so $r = n-2\log n$. However, the proof works for any $\gamma >0$. We will show the following.

\begin{theorem}
\label{thm:1}
With probability $1 -O(2^{-n})$ over sampling a Haar random $n$-qubit target pure state $\ket{\psi}$, the test from Section~\ref{sec:algo-log} has the following guarantee: for any $\epsilon \geq 0$, if a (potentially mixed) state $\rho$ passes the test with probability $1-\epsilon$, then $\braket{\psi|\rho| \psi} \geq 1-\big( 2+o(1)\big) \epsilon$.
\end{theorem}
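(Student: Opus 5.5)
Theorem~\ref{thm:1} follows from the uncertainty principle (Theorem~\ref{thm:uncertainty}), so the main work is proving that principle for Haar-typical $\ket{\psi}$; the reduction is short. First, the acceptance probability is $\mathrm{Tr}[A\rho]$ with $A$ as above, so it suffices to bound the spectrum of $A$. The operator $A$ is PSD, $A\preceq I$, and $A\ket{\psi}=\ket{\psi}$. Suppose we show that the restriction of $A$ to $\ket{\psi}^\perp$ has operator norm at most $\tfrac12+\delta$ with $\delta=o(1)$. Write $f=\bra{\psi}\rho\ket{\psi}$. Since $\ket{\psi}$ is an eigenvector, $A=\ket\psi\!\bra\psi+A'$ with $A'$ supported on $\ket\psi^\perp$. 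Then $1-\epsilon=\mathrm{Tr}[A\rho]\le f+(\tfrac12+\delta)(1-f)$, which rearranges to $1-f\le \epsilon/(\tfrac12-\delta)=(2+o(1))\epsilon$, as claimed. The norm bound on $\ket\psi^\perp$ is exactly Theorem~\ref{thm:uncertainty} applied to $\ket\phi\perp\ket\psi$. The left side of that theorem equals $2\bra\phi A\ket\phi$, since $\bra\phi\,(\ket z\!\bra z\otimes\ket{\psi_z}\!\bra{\psi_z})\,\ket\phi=\mu^{\mathsf{std}}_\phi(z)|\braket{\psi_z|\phi_z}|^2$, and similarly in the Hadamard basis.

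To prove the spectral bound, I would write $A=\tfrac12(P+Q)$, where $P=\sum_z\ket z\!\bra z\otimes\ket{\psi_z}\!\bra{\psi_z}$ and $Q=(H^{\otimes r}\otimes I)\,\hat P\,(H^{\otimes r}\otimes I)$ are projections of rank $2^r$ that both contain $\ket\psi$. Bounding $\|A|_{\psi^\perp}\|\le\tfrac12+\delta$ amounts to showing that the two subspaces $\mathrm{range}(P)\ominus\psi$ and $\mathrm{range}(Q)\ominus\psi$ are nearly orthogonal: $\|P'Q'\|\le 2\delta$, where $P'=P-\ket\psi\!\bra\psi$ and $Q'=Q-\ket\psi\!\bra\psi$. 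Since $\|\tfrac12(P'+Q')\|\le\tfrac12(1+\|P'Q'\|)$, this suffices. Now take an orthonormal basis $\ket{z}\ket{\tilde\psi_z}$ of $\mathrm{range}(P)$, with $\ket{\tilde\psi_z}$ the normalized conditional states, and similarly for $Q$. The overlap matrix has entries
\[
M_{z,y}=\bigl(\bra z\otimes\bra{\tilde\psi_z}\bigr)\bigl(H^{\otimes r}\ket y\otimes\ket{\tilde{\hat\psi}_y}\bigr)=2^{-r/2}(-1)^{z\cdot y}\braket{\tilde\psi_z|\tilde{\hat\psi}_y}.
\]
This is a $2^r\times2^r$ matrix, and we need its norm after projecting out the rank-one component along $\psi$ to be $o(1)$.

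The main obstacle is controlling this random matrix. The amplitudes of a Haar state are approximately i.i.d.\ complex Gaussians $g_{z,w}/2^{n/2}$, and then $\ket{\hat\psi_y}=\sum_{z}2^{-r/2}(-1)^{y\cdot z}\ket{\psi_z}$. So the inner products $\braket{\tilde\psi_z|\tilde{\hat\psi}_y}$ are $2^{-r/2}\sum_{z'}(-1)^{y\cdot z'}\braket{\tilde\psi_z|\tilde\psi_{z'}}$ up to normalization, and $M$ becomes a sum of the diagonal term $z'=z$, which is the rank-one $\psi$ direction, plus a Gram-matrix fluctuation. The vectors $\ket{\tilde\psi_z}$ are $2^r=n^{-2}2^n$ random vectors in dimension $d=n^{2}$, and their off-diagonal overlaps are of size $d^{-1/2}$. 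I would bound the norm of the fluctuation part by the moment method or by a matrix concentration / free-probability norm bound (e.g.\ a Bandeira--van Handel type inequality, or a decoupling that treats $M$ as a structured Wishart-like matrix). I expect the norm to be of order $\sqrt{2^r/(2^r d)}\cdot\mathrm{polylog}=O(\sqrt{\log(2^n)/d})$. This is $o(1)$ exactly because $d=n^{1+\gamma}\gg n$, which explains the $(1+\gamma)\log n$ choice. Finally, I would handle the normalization of the $\tilde\psi_z$ by concentration of $\|\psi_z\|^2$ around $2^{-r}$. I would also choose a tail bound with failure probability $O(2^{-n})$, using concentration of measure (Lipschitz functions on the sphere, Levy's lemma) applied to the spectral norm.
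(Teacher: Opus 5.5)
Your reduction of Theorem~\ref{thm:1} to the bound $\|A|_{\psi^\perp}\|\le\frac12+o(1)$ is exactly the paper's: it is Theorem~\ref{thm:2} together with $A\ket\psi=\ket\psi$. Recasting that bound as $\|P'Q'\|=o(1)$ via $\|P'+Q'\|=1+\|P'Q'\|$ is also correct, and it is cleaner than the paper's three-term expansion. It absorbs \eqref{eq:7} and \eqref{eq:8} into one quantity and makes the $\|F_z\|$ estimate unnecessary.

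Keep the exact form $\braket{\tilde\psi_z|\tilde{\hat\psi}_y}=\|\hat\psi_y\|^{-1}2^{-r/2}\sum_{z'}(-1)^{y\cdot z'}\braket{\tilde\psi_z|\psi_{z'}}$, normalizing only the bra. Rescaling each $\psi_{z'}$ to $2^{-r/2}\tilde\psi_{z'}$ is not ``just normalization'' and would add another random matrix to control. With the exact form, your overlap matrix (call it $R$, to avoid a clash with the paper's $M$) has off-diagonal part
\[
R^{\mathrm{off}}_{z,y}=\frac{2^{-r}}{\|\hat\psi_y\|}\sum_{s\neq 0^r}(-1)^{y\cdot s}\braket{\tilde\psi_z|\psi_{z\oplus s}}=\frac{1}{\|G\|_F}\,(E\,W\,D)_{z,y}.
\]
Here $E$ is exactly the matrix of Lemma~\ref{lem:5}, $W$ is the normalized Walsh--Hadamard matrix with row $0^r$ removed, and $D=\mathrm{diag}(2^{-r/2}/\|\hat\psi_y\|)=(1+o(1))I$ by Lemma~\ref{lem:3}. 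The $s=0^r$ part differs from the rank-one $\psi$ term by $o(1)$ in norm. Hence $\|P'Q'\|\le(1+o(1))\|E\|/2^{n/2}+o(1)$, so your route lands on precisely the random matrix the paper studies.

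The gap is that your argument stops where the paper's real work begins. The bound $\|E\|=o(2^{n/2})$ with probability $1-O(2^{-n})$ is only conjectured (``I expect''), and none of the tools you name applies off the shelf.
\begin{itemize}
\item The entries $\braket{g_w|g_{w\oplus s}}/\|g_w\|$ are degree-two chaoses in a shared Gaussian family: each $g_v$ enters every entry of row $v$ and one entry of every other row. So independent-entry bounds of Bandeira--van Handel type are unavailable.
\item A moment method would need a combinatorial count that you do not provide.
\item L\'evy's lemma only controls deviations about the median. Normalizing conditional states of norm $\approx 2^{-r/2}$ gives a Lipschitz constant $\approx 2^{r/2}$, so it works only after restricting to a good event. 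It does not supply the median bound, which is the missing estimate.
\end{itemize}

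The missing idea is the paper's, in three steps.
\begin{enumerate}
\item Because the diagonal $s=0^r$ has been split off, the de la Pe\~na--Montgomery-Smith decoupling inequality (Corollary~\ref{cor:10}) lets you replace the ket vectors by an independent copy $\tilde g$.
\item Conditional on $g$, the decoupled matrix is a Gaussian series in $\tilde g$, so the matrix Chernoff bound (Lemma~\ref{lem:matrix-chernoff}) applies once the variance proxy is shown to be $O(2^r)$.
\item The paper gets the variance proxy from one variance matrix being diagonal and the other being a sum of $\mathbb{Z}_2^r$-circulant matrices. These are simultaneously diagonalized by the Walsh--Hadamard transform, with eigenvalues equal to $2^r$ times sums of $d$ i.i.d.\ $\mathrm{Exp}(1)$ variables.
\end{enumerate}
The dimension factor $2^{r+1}$ in that bound costs roughly $\sqrt n$ over the naive $\sqrt{2^r}$ scale. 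This gives $\|E\|\le\sqrt{2^r n^{1.1}}$, hence $\|E\|/2^{n/2}\le n^{-0.45}$ for $d=n^2$. That confirms your heuristic $\sqrt{n/d}$ and explains why $d=n^{1+\gamma}$ is needed. Without such a rigorous bound, what you have is a correct reduction plus an unproven central lemma.
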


To show the above, it suffices to show the following.
\begin{theorem}
\label{thm:2}
With probability $1 -O(2^{-n})$ over sampling a Haar random $n$-qubit target pure state $\ket{\psi}$, the following holds: for all states $\ket{\phi}$ \emph{orthogonal} to $\ket{\psi}$, the probability that $\ket{\phi}$ passes the test from Section~\ref{sec:algo-log} is at most $\frac12 +o(1)$.
\end{theorem}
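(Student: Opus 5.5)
The plan is to reduce the statement to an operator-norm bound on a product of two projectors, and then bound that norm with random matrix theory. Write $A=\frac12(P_{\mathsf{std}}+P_{\mathsf{had}})$, where $P_{\mathsf{std}}=\sum_z \ket{z}\!\bra{z}\otimes \ket{\tilde\psi_z}\!\bra{\tilde\psi_z}$ and $P_{\mathsf{had}}=\sum_w H^{\otimes r}\ket{w}\!\bra{w}H^{\otimes r}\otimes\ket{\tilde{\hat\psi}_w}\!\bra{\tilde{\hat\psi}_w}$, with tildes denoting normalized conditional states. Both are projectors, and both fix $\ket{\psi}$. Set $Q_{\mathsf{std}}=P_{\mathsf{std}}-\ket{\psi}\!\bra{\psi}$ and $Q_{\mathsf{had}}=P_{\mathsf{had}}-\ket{\psi}\!\bra{\psi}$. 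For $\ket{\phi}\perp\ket{\psi}$ we have $\bra{\phi}A\ket{\phi}=\frac12\bra{\phi}(Q_{\mathsf{std}}+Q_{\mathsf{had}})\ket{\phi}$. For any two projectors, $\|Q_1+Q_2\|\le 1+\|Q_1Q_2\|$, so it suffices to show that $\|Q_{\mathsf{std}}Q_{\mathsf{had}}\|=o(1)$ with probability $1-O(2^{-n})$. A mixed $\rho$ is then handled by convexity.

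Next I would pass to the $2^r\times 2^r$ cross-Gram matrix $G_{z,w}=\langle u_z|v_w\rangle$. Here $u_z=\ket{z}\ket{\tilde\psi_z}$ and $v_w=H^{\otimes r}\ket{w}\ket{\tilde{\hat\psi}_w}$ are orthonormal bases of the two ranges, so $\|P_{\mathsf{std}}P_{\mathsf{had}}\|=\|G\|$. Let $\alpha_z=\|\psi_z\|$ and $\beta_w=\|\hat\psi_w\|$. Then $G\beta=\alpha$ and $G^\dagger\alpha=\beta$ exactly, because $\ket{\psi}=\sum_z\alpha_z u_z=\sum_w\beta_w v_w$. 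So $\|Q_{\mathsf{std}}Q_{\mathsf{had}}\|$ is the norm of $G$ restricted to $\alpha^\perp\to\beta^\perp$. Using $\ket{\hat\psi_w}=2^{-r/2}\sum_{z'}(-1)^{z'\cdot w}\ket{\psi_{z'}}$ and the Gram matrix $K_{zz'}=\langle\psi_z|\psi_{z'}\rangle$, I get
\[G_{z,w}=\frac{2^{-r}}{\alpha_z\beta_w}\sum_{x\in\setr}(-1)^{x\cdot w}K_{z,z\oplus x}.\]

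I would model $\ket{\psi}$ as $g/\|g\|$, where $g$ has i.i.d.\ complex Gaussian entries of variance $2^{-n}$. Then $K=\Psi\Psi^\dagger$ with $\Psi$ a $2^r\times d$ Gaussian matrix, $d=n^2$. First step: concentration of norms. Each $\alpha_z^2$ and $\beta_w^2$ is $2^{-r}$ times a normalized $\chi^2$ variable with $2d=2n^2$ degrees of freedom. Its tail is $e^{-cn^2t^2}$, so a union bound over $2^{r+1}$ strings gives $\alpha_z^2,\beta_w^2=2^{-r}(1\pm O(1/\sqrt n))$ for all $z,w$ with probability $1-O(2^{-n})$. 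This is where $d=n^{1+\gamma}$ with $\gamma>0$ is used. Second step: replace the normalizations by $2^{-r/2}$ at cost $o(1)$ in operator norm, and split $K=\mathrm{diag}(K)+K^{\mathrm{off}}$. The diagonal part contributes essentially $2^{-r}J$, where $J$ is the all-ones matrix. That is the rank-one piece carrying the singular value $1$ along $(\alpha,\beta)\approx$ uniform. So the norm on the complements is at most $\|E\|+o(1)$, where
\[E_{z,w}=2^{-r}\sum_{x\neq 0}(-1)^{x\cdot w}\,(K^{\mathrm{off}})_{z,z\oplus x}.\]

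The main obstacle is bounding $\|E\|=o(1)$. A heuristic count gives entries of size about $2^{-r/2}/\sqrt d$, hence a norm of roughly $2\cdot 2^{r/2}\cdot 2^{-r/2}/\sqrt d=O(1/n)$. The entries are far from independent, though, since every one is a quadratic form in the same Gaussian matrix $\Psi$. I would first try to rewrite $E$ as a structured matrix product. Conjugating by $H^{\otimes r}$ turns the signed shifts $x\mapsto(-1)^{x\cdot w}$ into translations, so $E$ becomes something like $2^{-r}\sum_{x}T_x\,\mathrm{diag}(\Psi\Psi^\dagger\text{-shifted})$. From there I would apply the trace/moment method: bound $\E\,\tr (EE^\dagger)^{p}$ for $p\approx n$ via Wick/genus expansion of the Gaussian moments, and show it is at most $2^{r}\,(C p/\sqrt d)^{2p}$. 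Markov's inequality then gives $\|E\|\le O(n/\sqrt{d})\cdot$polylog, or better with a finer count, with failure probability $2^{-n}$. If $p\approx n$ loses too much against $d=n^{1+\gamma}$, the fallback is Gaussian concentration of $\|E\|$ as a Lipschitz function of $\Psi$ (Lipschitz constant roughly $2^{r/2}\cdot$ per-entry scale), combined with a bound on $\E\|E\|$ from the noncommutative Khintchine inequality. Finally I would undo the Gaussian normalization, noting that $\|g\|^2=1\pm O(2^{-n/2})$ with overwhelming probability. Collecting the error terms gives $\|Q_{\mathsf{std}}Q_{\mathsf{had}}\|=o(1)$, and hence acceptance probability at most $\frac12+o(1)$.
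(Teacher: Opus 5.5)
Your reduction is correct, and it is cleaner than the paper's. It rests on two facts: the two-projection identity $\|Q_1+Q_2\|=1+\|Q_1Q_2\|$, and your cross-Gram matrix satisfying $G\beta=\alpha$, $G^\dagger\alpha=\beta$. Together these replace the paper's decomposition $\ket{g_z}=c_z\ket{f_z}/\|f_z\|+c_z^\perp\ket{f_z^\perp}$ and its three-term expansion \eqref{eq:6}--\eqref{eq:8}. That expansion amounts to $\Pr[\mathrm{accept}\mid\mathsf{had}]\le(1+o(1))\bigl(|c|\sqrt{\|M\|}+|c^\perp|\bigr)^2$, so the paper's $\sqrt{\|M\|}$ plays the role of your $\|Q_{\mathsf{std}}Q_{\mathsf{had}}\|$. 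In your setup, $(I-\alpha\alpha^\dagger)$ also kills the $x=0$ contribution $2^{-r}\alpha\,(1/\beta)^{T}$ exactly, so no approximation is needed there. However, you land on the same random matrix as the paper. Set $F_{z,x}=K_{z,z\oplus x}$ for $x\neq 0$ and $\tilde H_{x,w}=(-1)^{x\cdot w}$. Your off-diagonal matrix is then $F\tilde H$, and $\tilde H\tilde H^\dagger=2^r I$, so its norm is exactly $2^{r/2}\|F\|$. Up to row normalizations, $F$ is the matrix $E$ of Lemma~\ref{lem:5}. So the whole proof still hinges on that lemma.

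That lemma is where your proposal has a genuine gap: you give no working argument that the normalized $\|E\|$ is $o(1)$ with probability $1-O(2^{-n})$, and both routes you sketch fail as stated.

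\emph{Moment route.} To beat the $2^r$ dimension prefactor at failure probability $2^{-n}$ you need $p$ of order $n$. At such $p$, your conjectured bound $2^r(Cp/\sqrt d)^{2p}$ only gives $\|E\|=O(n/\sqrt d)$. That is $O(1)$, not $o(1)$, for $d=n^2$, and it diverges for $d=n^{1+\gamma}$ with $\gamma<1$. What is actually needed is sub-Gaussian scaling $\sqrt{p/d}$, which is exactly the ``finer count'' you leave undone.

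\emph{Fallback route.} Noncommutative Khintchine controls linear Gaussian series $\sum_k\xi_kA_k$ with fixed $A_k$. Your $E$ is a second-order Gaussian chaos in $\Psi$, and the map $\Psi\mapsto E$ is not globally Lipschitz, so the tool does not apply directly.

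\emph{The missing idea is decoupling.}
\begin{itemize}
\item Because $x\neq 0$, every entry pairs two distinct rows $\Psi_z$ and $\Psi_{z\oplus x}$. The de la Pe\~na--Montgomery-Smith inequality therefore lets you replace the second row by an independent copy.
\item Conditioned on the first copy, $E$ becomes a matrix Gaussian series, so the matrix Chernoff/Khintchine bound applies.
\item Its variance parameter takes two computations. One side is diagonal. The other is a sum of $\mathbb{Z}_2^r$-circulant matrices, simultaneously diagonalized by the Walsh--Hadamard transform, whose eigenvalues are sums of $d$ i.i.d.\ exponentials.
\item This gives $\sigma^2=O(2^r)$, and hence a normalized bound $O(\sqrt{n/d})$.
\end{itemize}
The final bound is $o(1)$ precisely because $d=n^{1+\gamma}$ beats the $\log(\mathrm{dimension})\approx n$ loss. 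This is a second place where $\gamma>0$ is genuinely needed, beyond the norm concentration you identified.
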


\begin{proof}[Proof of Theorem~\ref{thm:1} from Theorem~\ref{thm:2}.]
    Let $A$, $0\preceq A \preceq I$, be the POVM representing the protocol. Then, $\braket{\psi|A|\psi}=1$, and, by Theorem~\ref{thm:2}, with probability $1 -O(2^{-n})$ over the target state $\ket{\psi}$, the following holds for all $\ket{\phi}$ orthogonal to $\ket{\psi}$: $\braket{\phi|A|\phi}\leq \frac12 + c$, where $c =o(1)$.
    Then, $A\preceq \ket{\psi}\!\!\bra{\psi}+(\frac12+c)P$ for $P=I-\ket{\psi}\!\!\bra{\psi}$.
    That is,
    \[A\preceq \left(\frac12+c\right)I + \left(\frac12 -c\right)\ket{\psi}\!\!\bra{\psi}\,.\]
    Now, from our assumption on $\rho$, we have
    \[1-\epsilon\leq \tr[\rho A] \leq \frac12 + c + \left(\frac12-c\right)\braket{\psi|\rho|\psi} \,.\]
    Rearranging gives $\braket{\psi|\rho|\psi}\geq \frac{\frac12 -\epsilon -c }{\frac12 - c} = 1- \frac{2\epsilon}{1-2c} = 1-\big( 2+o(1)\big) \epsilon$.
\end{proof}

The rest of the section is dedicated to proving Theorem~\ref{thm:2} (and hence Theorem~\ref{thm:1}). 

For convenience and readability of notation, we denote the target state as $\ket{f}$, where $f: \{0,1\}^n \rightarrow \mathbb{C}$ such that $\ket{f} = \sum_{x \in \{0,1\}^n} f(x) \ket{x}$. So $\sum_{x} |f(x)|^2 = 1$. We denote the lab state as $\ket{g}$ where $g: \{0,1\}^n \rightarrow \mathbb{C}$ and $\ket{g} = \sum_{x \in \{0,1\}^n} g(x) \ket{x}$. So $\sum_{x} |g(x)|^2 = 1$. We suppose $\langle f|g \rangle = 0$.

We can write the probability that $\ket{g}$ passes the test as $\frac12 \Pr[\textnormal{accept}|\textsf{std}] + \frac12 \Pr[\textnormal{accept}|\textsf{had}]$, where $\Pr[\textnormal{accept}|\textsf{std}]$ is the probability $\ket{g}$ passes the ``standard basis test'', and $\Pr[\textnormal{accept}|\textsf{had}]$ is the probability that $\ket{g}$ passes the ``Hadamard basis test''. 

Now, we write 
$$\ket{f} = \sum_{z \in \{0,1\}^r} \ket{z}\ket{f_z}\,,$$ where the $\ket{f_z}$ are $(n-r)$-qubit states satisfying $\sum_{z \in \{0,1\}^r} \| \ket{f_z}\|^2 = 1$. Further, we denote $\ket{\hat{f}} = H^{\otimes n} \ket{f}$, and write $$\ket{\hat{f}} = \sum_{z \in \{0,1\}^r} \ket{z}\ket{\hat{f}_z}\,,$$ 
where $\sum_{z \in \{0,1\}^r} \| \ket{\hat{f}_z}\|^2 = 1$. We use analogous notation for $\ket{g}$.
\begin{remark}
The attentive reader might notice that our use of the notation $\hat{f}_z$ for the conditional state on the last $n-r$ qubits in the ``Hadamard'' basis differs from the use of $\hat{\psi}_z$ in the introduction (Sections~\ref{sec:intro-test} and \ref{sec:intro-proof-ideas}). The two uses differ only by an application of Hadamards on the $n-r$ qubits. We find the use of $\hat{f}_z$ described here to be the more natural one: it is the leftover state of the last $n-r$ qubits conditioned on obtaining outcome $z$ when measuring the first $r$ qubits of $\ket{\hat{f}} = H^{\otimes n} \ket{f}$. 
\end{remark}
With this notation in place, observe that we have
\begin{align}
\Pr[\textnormal{accept}|\textsf{std}] &= \sum_{z \in \{0,1\}^r} \frac{|\langle f_z, g_z \rangle |^2}{\| \ket{f_z}\|^2} \,, \label{eq:score_std}\\
\Pr[\textnormal{accept}|\textsf{had}] &= \sum_{z \in \{0,1\}^r} \frac{|\langle \hat{f}_z, \hat{g}_z \rangle |^2}{\| \ket{\hat{f}_z}\|^2} \,. \label{eq:score_had}
\end{align}
Now, it will be convenient, for each $z \in \{0,1\}^r$, to write 
\begin{equation}
\ket{g_z} = c_z \frac{\ket{f_z}}{\|f_z\|} + c^{\perp}_z \ket{f^{\perp}_z} \,, \label{eq:3}
\end{equation} for some complex numbers $c_z, c^{\perp}_z$, and a (normalized) $(n-r)$-qubit state $\ket{f^{\perp}_z}$, \textit{i.e.}\ we are writing $\ket{g_z}$ as a sum of a component that is parallel to the ``ideal'' vector, and a component orthogonal to it. Then, substituting this into \eqref{eq:score_std} gives
\begin{equation}
\Pr[\textnormal{accept}|\textsf{std}] = \sum_{z \in \{0,1\}^r} |c_z|^2 \,. \label{eq:score_std_2}
\end{equation}

Recall that our goal is to show that the acceptance probability $\frac12 \Pr[\textnormal{accept}|\textsf{std}] + \frac12 \Pr[\textnormal{accept}|\textsf{had}]$ is bounded away from 1 by a constant. In other words, we want to show that there is a constant $\delta>0$ such that at least one of $\Pr[\textnormal{accept}|\textsf{std}]$ or $\Pr[\textnormal{accept}|\textsf{had}]$ is at most $1-\delta$. We will thus have to derive an ``uncertainty principle''-like tradeoff between $\Pr[\textnormal{accept}|\textsf{std}]$ and $\Pr[\textnormal{accept}|\textsf{had}]$. One crucial point is that, if we allow $\ket{g} = \ket{f}$, then both expressions can of course be made equal to $1$. Instead, we are assuming $\braket{f|g} = 0$, and so our tradeoff (and our argument) will have to crucially rely on this. We will derive the following expression for $\Pr[\textnormal{accept}|\textsf{had}]$ in terms of the $c_z, c^{\perp}_z$, and the $\ket{f_z}$, $\ket{f^{\perp}_z}$ (and their Fourier transforms). Importantly, this expression isolates precisely the term $\braket{f|g}$.
\begin{lem}
\label{lem:tradeoff}
$$\Pr[\textnormal{accept}|\mathsf{had}] = \frac{1}{2^{2r}} \sum_{z \in \{0,1\}^r} \frac{1}{\|\ket{\hat{f}_z}\|^2} \left| \braket{f|g} + \sum_{w \neq w' \in \{0,1\}^r} (-1)^{z \cdot (w \oplus w')} \Big(c_w \frac{\braket{f_{w'}| f_w}}{\|\ket{f_w}\|} + c^{\perp}_w \braket{f_{w'}| f^{\perp}_w}\Big)\right|^2 \,.$$
\end{lem}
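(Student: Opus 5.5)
The plan is a direct computation. Write each Hadamard-basis conditional vector in terms of the standard-basis conditional vectors, expand the inner product $\braket{\hat f_z|\hat g_z}$, and split it into diagonal and off-diagonal terms in the summation indices. Substituting the result into \eqref{eq:score_had} then gives the lemma.

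\textbf{Step 1: expressing $\ket{\hat g_z}$ and $\ket{\hat f_z}$.} Since $\ket{\hat g} = H^{\otimes r}\otimes H^{\otimes (n-r)} \sum_{w}\ket{w}\ket{g_w}$, projecting the first $r$ qubits onto $\bra{z}$ gives
\[
\ket{\hat g_z} = (\bra{z}\otimes I)\ket{\hat g} = 2^{-r/2}\sum_{w\in\{0,1\}^r} (-1)^{z\cdot w}\, H^{\otimes(n-r)}\ket{g_w},
\]
and the same formula holds for $\ket{\hat f_z}$ with $f$ in place of $g$.

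\textbf{Step 2: the inner product.} Because $H^{\otimes(n-r)}$ is unitary and the signs are real, we get
\[
\braket{\hat f_z|\hat g_z} = 2^{-r}\sum_{w,w'} (-1)^{z\cdot w'}(-1)^{z\cdot w}\braket{f_{w'}|g_w} = 2^{-r}\sum_{w,w'}(-1)^{z\cdot(w\oplus w')}\braket{f_{w'}|g_w}.
\]
The terms with $w=w'$ carry sign $+1$ and sum to $\sum_w\braket{f_w|g_w}=\braket{f|g}$. This is how the global overlap is isolated. For the terms with $w\neq w'$, substitute the decomposition \eqref{eq:3}, $\ket{g_w}=c_w\ket{f_w}/\|f_w\| + c_w^\perp\ket{f_w^\perp}$. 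This gives
\[
\braket{f_{w'}|g_w} = c_w\frac{\braket{f_{w'}|f_w}}{\|\ket{f_w}\|} + c_w^\perp\braket{f_{w'}|f_w^\perp}.
\]

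\textbf{Step 3: substitution.} Taking the modulus squared pulls out the factor $2^{-2r}$. Dividing by $\|\ket{\hat f_z}\|^2$ and summing over $z$ in \eqref{eq:score_had} then yields exactly the claimed formula.

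There is no real obstacle; the computation is routine. The only points needing care are the bookkeeping of conjugation and the sign convention $(-1)^{z\cdot w}(-1)^{z\cdot w'}=(-1)^{z\cdot(w\oplus w')}$. One should also record, per the earlier remark, that the $\hat f_z$ used here omits the final Hadamards on the last $n-r$ qubits. Those Hadamards cancel inside the inner product, so this does not change the overlap.
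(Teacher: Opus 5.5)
Your proof is correct and follows the paper's argument: expand $\ket{\hat f_z},\ket{\hat g_z}$ as in Lemma~\ref{lem:hat}, remove $H^{\otimes(n-r)}$ by unitarity, and isolate the diagonal $w=w'$ terms. The only difference is inessential: you read off $\sum_w\braket{f_w|g_w}=\braket{f|g}$ before substituting the decomposition, whereas the paper substitutes first and uses $\braket{f_w|f^{\perp}_w}=0$. One small slip, harmless as you note since the Hadamards cancel in both the overlap and the norm: your closing remark has the convention backwards, because here $\ket{\hat f_z}$ (defined from $H^{\otimes n}\ket{f}$) includes the Hadamards on the last $n-r$ qubits, as your own Step 1 formula shows, and it is the introduction's $\ket{\hat\psi_z}$ that omits them.
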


We start by deriving an expression for the $\ket{\hat{f}_z}$ and $\ket{\hat{g}_z}$.
\begin{lem}
\label{lem:hat}
Let $r \in [n]$. Let $\ket{h}$ be an $n$-qubit state. Write
$$ \ket{h} = \sum_{z \in \{0,1\}^r} \ket{z}\ket{h_z}$$ for some $\ket{h_z}$ such that $\sum_{z \in \{0,1\}^r} \|\ket{h_z}\|^2 = 1$. Let $\ket{\hat{h}} = H^{\otimes n} \ket{h}$, and write
$$ \ket{\hat{h}} = \sum_{z \in \{0,1\}^r} \ket{z}\ket{\hat{h}_z}$$ for some $\ket{\hat{h}_z}$ such that $\sum_{z \in \{0,1\}^r} \|\ket{\hat{h}_z}\|^2 = 1$. Then,
for all $z \in \{0,1\}^r$,
$$ \ket{\hat{h}_z} = \frac{1}{\sqrt{2^r}}\sum_{w \in \{0,1\}^r} (-1)^{z\cdot w} \ket{\widehat{h_w}}\,,$$
where $\ket{\widehat{h_w}}= H^{\otimes n-r} \ket{h_r}$.
\end{lem}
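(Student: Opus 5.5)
The plan is to use the fact that $H^{\otimes n} = H^{\otimes r}\otimes H^{\otimes n-r}$ factors across the split into the first $r$ and last $n-r$ qubits. Then compute $\ket{\hat h}$ by applying $H^{\otimes r}$ to the first register of each term $\ket{w}\ket{h_w}$ and $H^{\otimes n-r}$ to the second. (We read the $\ket{h_r}$ in the statement as the evident typo for $\ket{h_w}$, so that $\ket{\widehat{h_w}} = H^{\otimes n-r}\ket{h_w}$.) Concretely, starting from $\ket{h}=\sum_{w}\ket{w}\ket{h_w}$, linearity gives
\begin{align*}
\ket{\hat h} \;=\; \sum_{w\in\{0,1\}^r} \big(H^{\otimes r}\ket{w}\big)\otimes\big(H^{\otimes n-r}\ket{h_w}\big)
\;=\; \sum_{w\in\{0,1\}^r} \Big(\frac{1}{\sqrt{2^r}}\sum_{z\in\{0,1\}^r}(-1)^{z\cdot w}\ket{z}\Big)\otimes \ket{\widehat{h_w}}\,,
\end{align*}
where we used the standard identity $H^{\otimes r}\ket{w} = 2^{-r/2}\sum_z (-1)^{z\cdot w}\ket{z}$.

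Next, exchange the two finite sums to get $\ket{\hat h}=\sum_z \ket{z}\otimes\big(2^{-r/2}\sum_w(-1)^{z\cdot w}\ket{\widehat{h_w}}\big)$. The decomposition $\ket{\hat h}=\sum_z\ket{z}\ket{\hat h_z}$ is unique, since $\ket{\hat h_z} = (\bra{z}\otimes I)\ket{\hat h}$. Reading off the coefficient of $\ket{z}$ therefore yields the claimed formula for every $z$.

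Finally, the normalization $\sum_z\|\ket{\hat h_z}\|^2=1$ holds automatically, because $H^{\otimes n}$ is unitary and so $\|\ket{\hat h}\|=\|\ket{h}\|=1$; nothing further needs checking. No step is a genuine obstacle. The only points needing care are the tensor factorization of $H^{\otimes n}$ respecting the chosen qubit ordering (first $r$ qubits versus last $n-r$) and the typo in the statement noted above.
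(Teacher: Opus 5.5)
Your proposal is correct and follows the paper's proof essentially step for step: factor $H^{\otimes n}=H^{\otimes r}\otimes H^{\otimes n-r}$ across the split, expand $H^{\otimes r}\ket{w}$, exchange the sums, and read off the coefficient of $\ket{z}$. You are also right to read $\ket{h_r}$ in the statement as a typo for $\ket{h_w}$.
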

\begin{proof}
We have 
\begin{align*}
    \ket{\hat{h}} = H^{\otimes n} \sz \ket{z}\ket{h_z} &= \sz (H^{\otimes r}\ket{z}) (H^{\otimes n-r} \ket{h_z}) \\ &= \sz (H^{\otimes r}\ket{z}) \ket{\widehat{h_z}} \\
    &= \frac{1}{\sqrt{2^r}}\sum_{w,z \in \{0,1\}^r} (-1)^{z\cdot w} \ket{w} \ket{\widehat{h_z}} \\
    &=\frac{1}{\sqrt{2^r}} \sz \ket{z} \Big(\sum_{w \in \{0,1\}^r} (-1)^{z\cdot w}\ket{\widehat{h_w}}\Big) \,.
\end{align*}
Thus, we have $\ket{\hat{h}_z} = \frac{1}{\sqrt{2^r}}\sum_{w \in \{0,1\}^r} (-1)^{z\cdot w}\ket{\widehat{h_w}}$.
\end{proof}
We can now prove Lemma~\ref{lem:tradeoff}.

\begin{proof}[Proof of Lemma \ref{lem:tradeoff}]
Recall from \eqref{eq:score_had} that 
$$\Pr[\textnormal{accept}|\textsf{had}] = \sum_{z \in \{0,1\}^r} \frac{|\langle \hat{f}_z, \hat{g}_z \rangle |^2}{\| \ket{\hat{f}_z}\|^2} \,.$$
From Lemma~\ref{lem:hat}, we have
\begin{align}
\ket{\hat{f}_z} &= \frac{1}{\sqrt{2^r}}\sum_{w \in \{0,1\}^r} (-1)^{z\cdot w} \ket{\widehat{f_w}} \nonumber\\
\ket{\hat{g}_z} &= \frac{1}{\sqrt{2^r}}\sum_{w \in \{0,1\}^r} (-1)^{z\cdot w} \ket{\widehat{g_w}} \,. \nonumber
\end{align}
Moreover, by definition of $\ket{g_w}$ (from Equation~\ref{eq:3}) and linearity of $H^{\otimes n-r}$, we have $\ket{\widehat{g_w}} = c_w \frac{\ket{\widehat{f_w}}}{\|\ket{f_w} \|} + c^{\perp}_w \ket{\widehat{f^{\perp}_w}}$.
Thus, 
\begin{align}
\Pr[\textnormal{accept}|\textsf{had}] &= \frac{1}{2^{2r}} \sz \frac{1}{\| \ket{\hat{f}_z}\|^2} \left| \sum_{w,w' \in \setr} (-1)^{z \cdot (w\oplus w')} \langle \widehat{f_{w'}}, \widehat{g_w} \rangle \right|^2 \nonumber \\
&= \frac{1}{2^{2r}} \sz \frac{1}{\| \ket{\hat{f}_z}\|^2} \left| \sum_{w,w' \in \setr} (-1)^{z \cdot (w\oplus w')} \left(  c_w \frac{\braket{\widehat{f_{w'}}| \widehat{f_w}}}{\| \ket{f_w}\|} + c_w^{\perp} \braket{\widehat{f_{w'}}| \widehat{f^{\perp}_w}} \right) \right|^2 \nonumber \\
&=\frac{1}{2^{2r}} \sz \frac{1}{\| \ket{\hat{f}_z}\|^2} \left| \sum_{w,w' \in \setr} (-1)^{z \cdot (w\oplus w')} \left(  c_w \frac{\braket{f_{w'}| f_w}}{\| \ket{f_w}\|} + c_w^{\perp} \braket{f_{w'}| f^{\perp}_w} \right) \right|^2 \nonumber  \\
&=  \frac{1}{2^{2r}} \sz \frac{1}{\| \ket{\hat{f}_z}\|^2} \left| \sum_{w \in \setr} c_w \| \ket{f_w}\| + \sum_{w\neq w' \in \setr} (-1)^{z \cdot (w \oplus w')}  \left(  c_w \frac{\braket{f_{w'}| f_w}}{\| \ket{f_w}\|} + c_w^{\perp} \braket{f_{w'}| f^{\perp}_w} \right)  \right|^2 \label{eq:5}
\end{align}
where the third equality follows by the fact that $H^{\otimes n-r}$ preserves inner products. Finally, observe that we have precisely
\begin{equation}
\label{eq:122}
\braket{f|g} = \sum_{w \in \setr} \braket{f_w | g_w} = \bra{f_w} (c_w \frac{\ket{f_w}}{\| \ket{f_w}\|}+ c^{\perp}_w \ket{f^{\perp}_w} ) = \sum_{w \in \setr} c_w \| \ket{f_w}\| \,. 
\end{equation}
Substituting the latter into \eqref{eq:5} gives the desired equality.
\end{proof}
\begin{cor}
When $\braket{f|g}=0$, 
$$\Pr[\textnormal{accept}|\mathsf{had}] = \frac{1}{2^{2r}} \sum_{z \in \{0,1\}^r} \frac{1}{\|\ket{\hat{f}_z}\|^2} \left| \sum_{w \neq w' \in \{0,1\}^r} (-1)^{z \cdot (w \oplus w')} \Big(c_w \frac{\braket{f_{w'}| f_w}}{\|\ket{f_w}\|} + c^{\perp}_w \braket{f_{w'}| f^{\perp}_w}\Big)\right|^2 \,.$$
\end{cor}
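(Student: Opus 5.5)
This follows by direct substitution into Lemma~\ref{lem:tradeoff}, which holds for arbitrary $\ket{f}$ and $\ket{g}$ and expresses $\Pr[\textnormal{accept}|\mathsf{had}]$ exactly. We take that identity as given and set $\braket{f|g}=0$. The first term inside the absolute value in each summand indexed by $z$ then vanishes. What remains is the double sum over $w\neq w'$, which is precisely the claimed expression.

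For safety we would recheck the one step in the proof of Lemma~\ref{lem:tradeoff} where $\braket{f|g}$ enters, namely equation~\eqref{eq:122}. The claim there is $\braket{f|g}=\sum_w\braket{f_w|g_w}=\sum_w c_w\|\ket{f_w}\|$, using $\braket{f_w|f^\perp_w}=0$ from the decomposition \eqref{eq:3}. This identifies the diagonal ($w=w'$) part of the sum in \eqref{eq:5}. Under orthogonality that diagonal part is exactly zero, for every $z$, independent of the phase $(-1)^{z\cdot(w\oplus w')}$, which equals $1$ on the diagonal.

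There is one degenerate case to handle. If some $\ket{f_w}=0$, the decomposition \eqref{eq:3} is ill-defined. For a Haar-random target this happens with probability zero, and the corresponding terms can simply be dropped, since they contribute nothing to either \eqref{eq:score_std} or \eqref{eq:score_had}. The same reasoning applies to $\|\ket{\hat f_z}\|$ in the denominators. Apart from this convention the statement is immediate, so there is no genuine obstacle.
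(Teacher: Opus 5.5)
Your proposal is correct and matches the paper, which obtains the corollary immediately from Lemma~\ref{lem:tradeoff} by setting $\braket{f|g}=0$, with \eqref{eq:122} identifying the $w=w'$ diagonal part of the sum with $\braket{f|g}$. One small quibble about the probability-zero case $\ket{f_w}=0$, which the paper simply ignores: you cannot drop the index $w$ there, because $\ket{g_w}$ still enters every $\ket{\hat g_z}$; the consistent convention is to set $c_w=0$ and $c^{\perp}_w\ket{f^{\perp}_w}=\ket{g_w}$.
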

We will now expand the squared magnitude on the RHS above, and spend the rest of the proof obtaining a bound on the resulting terms. We have
\begin{align}
\Pr[\textnormal{accept}|\mathsf{had}] = \frac{1}{2^{2r}} \sz \frac{1}{\|\ket{\hat{f}_z}\|^2} \cdot \sum_{\substack{w\neq w' \\ v\neq v'}} (-1)^{z \cdot (w\oplus w' \oplus v \oplus v')} 
 \bigg[ &\overline{c_w} c_v \overline{A_{w,w'}}\cdot A_{v,v'} \label{eq:6}\\
+&2 \textnormal{Re}(\overline{c_w}c_v^{\perp} \overline{A_{w,w'}}\cdot B_{v,v'}) \label{eq:7}\\
+& \overline{c^{\perp}_w}c_v^{\perp} \overline{B_{w,w'}}\cdot B_{v,v'} \bigg] \,, \label{eq:8}
\end{align}
where we defined $A_{w,w'} = \frac{\braket{f_{w'}| f_w}}{\|\ket{f_w}\|}$ and $B_{w,w'} = \braket{f_{w'}| f^{\perp}_w}$.
We now proceed to bound each of the terms \eqref{eq:6}, \eqref{eq:7}, and \eqref{eq:8}. We start with \eqref{eq:8}. 
\begin{align}
\eqref{eq:8} &= \frac{1}{2^{2r}} \sz \frac{1}{\|\ket{\hat{f}_z}\|^2} \cdot \sum_{\substack{w\neq w' \\ v\neq v'}} (-1)^{z \cdot (w\oplus w' \oplus v \oplus v')} 
\overline{c^{\perp}_w} c^{\perp}_v \overline{B_{w,w'}} B_{v,v'} \nonumber \\
&= \frac{1}{2^{2r}} \sz \frac{1}{\|\ket{\hat{f}_z}\|^2} \cdot \sum_{\substack{w, w' \\ v, v'}} (-1)^{z \cdot (w\oplus w' \oplus v \oplus v')} 
\overline{c^{\perp}_w} c^{\perp}_v \overline{B_{w,w'}} B_{v,v'} \,, \nonumber
\end{align}
where in the last line the sum is over all $w,w',v,v'$ (rather than just $w\neq w'$ and $v\neq v'$), and the equality follows from the fact that $B_{w,w} = \braket{f_w | f^{\perp}_w} = 0 $. Continuing from above, we have
\begin{align}
\eqref{eq:8} & = \frac{1}{2^{2r}} \sz \frac{1}{\|\ket{\hat{f}_z}\|^2} \cdot \left| \sum_{w,w'} (-1)^{z \cdot (w\oplus w')} c^{\perp}_w B_{w,w'} \right|^2 \label{eq:99}\\
&= \frac{1}{2^{2r}} \sz \frac{1}{\|\ket{\hat{f}_z}\|^2} \cdot \bigg| \sum_{w,w'} (-1)^{z \cdot (w\oplus w')} c^{\perp}_w  \braket{f_{w'}| f^{\perp}_w} \bigg|^2 \nonumber\\
&= \frac{1}{2^{2r}} \sz \frac{1}{\|\ket{\hat{f}_z}\|^2} \bigg| \Big( \sum_{w'} (-1)^{z \cdot w'} \bra{f_{w'}} \,\,\sum_w (-1)^{z\cdot w} c^{\perp}_w \ket{f^{\perp}_w} \Big) \bigg|^2 \nonumber \\
&\leq \frac{1}{2^{2r}} \sz \frac{1}{\|\ket{\hat{f}_z}\|^2} \Big\| \sum_{w'} (-1)^{z\cdot w'} \ket{f_{w'}} \Big\|^2 \cdot \Big\| \sum_w (-1)^{z\cdot w} c^{\perp}_w \ket{f^{\perp}_w} \Big\|^2 \nonumber\\
&= \frac{1}{2^{2r}} \sz \frac{1}{\|\ket{\hat{f}_z}\|^2} \|\ket{F_z} \|^2 \cdot \| \ket{G_z} \|^2 \,, \label{eq:13}
\end{align}
where the last inequality is by Cauchy--Schwarz, and we are defining $\ket{F_z} = \sum_{w}(-1)^{z\cdot w}\ket{f_w}$, and $\ket{G_z} = \sum_{w}(-1)^{z\cdot w}c^{\perp}_w \ket{f_w}$. We will now leverage the fact that our robustness guarantee of Theorem~\ref{thm:1} is ``with high probability over a Haar random target state''. We establish two lemmas that leverage the Haar measure's concentration to obtain high probability bounds on $\| \ket{\hat{f}_z}\|^2$ and $\|\ket{F_z} \|^2$.
\begin{lem}
\label{lem:3}
With probability $1 -O(2^{-n})$ over sampling a Haar random $n$-qubit target pure state $\ket{\psi} = \sum_{z} \ket{z} \ket{f_z}$, the following hold for all $z\in \{0,1\}^r$ (where recall that $r = n-2\log n$):
\begin{itemize}
    \item $\| \ket{\hat{f}_z}\|^2 = (1 \pm o(1))\cdot \frac{1}{2^r}$, where $\ket{\hat{f}_z} = \frac{1}{\sqrt{2^r}}\sum_{w \in \{0,1\}^r} (-1)^{z\cdot w} \ket{\widehat{f_w}}$. 
    \item $\|\ket{F_z} \|^2 = (1 \pm o(1))$, where $\ket{F_z} = \sum_{w}(-1)^{z\cdot w}\ket{f_w}$.
\end{itemize}
\end{lem}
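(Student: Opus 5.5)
Both quantities are squared norms of fixed orthogonal projections of the Haar random vector $\ket{\psi}$ onto subspaces of dimension $d = 2^{n-r} = n^2$. I would therefore prove the two items by one concentration-of-measure argument plus a union bound over $z$.

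\textbf{Step 1: identify the projections.} For the first item, $\ket{\hat f_z}$ is the conditional vector of $H^{\otimes n}\ket{\psi}$ on outcome $z$. Hence $\|\ket{\hat f_z}\|^2 = \|(\bra{z}\otimes I)H^{\otimes n}\ket{\psi}\|^2$, the squared norm of the projection of $H^{\otimes n}\ket{\psi}$ onto $\mathrm{span}\{\ket{z}\ket{y}\}_y$. Since $H^{\otimes n}$ is unitary, $H^{\otimes n}\ket{\psi}$ is again Haar random. For the second item, $\ket{F_z} = \sqrt{2^r}\,(\bra{z}H^{\otimes r}\otimes I)\ket{\psi}$. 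So $\|\ket{F_z}\|^2 = 2^r\|(\bra{z}\otimes I)(H^{\otimes r}\otimes I)\ket{\psi}\|^2$, which is again $2^r$ times the squared norm of a rank-$d$ projection of a Haar random vector. Thus both items reduce to the claim that, for a fixed $d$-dimensional subspace with projector $P$, $\|P\ket{\psi}\|^2 = (1\pm o(1))\,d/2^n = (1\pm o(1))2^{-r}$ simultaneously for all $2^r$ choices of $z$.

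\textbf{Step 2: concentration for a single $z$.} I would write the Haar vector as $\ket{\psi} = \mathbf g/\|\mathbf g\|$ with $\mathbf g$ a standard complex Gaussian vector in $\mathbb{C}^{2^n}$. Then $\|P\ket{\psi}\|^2 = X/(X+Y)$, where $X = \|P\mathbf g\|^2$ and $Y = \|(I-P)\mathbf g\|^2$ are independent sums of $d$ and $2^n-d$ exponential variables (chi-square type). Equivalently, $\|P\ket{\psi}\|^2$ is $\mathrm{Beta}(d, 2^n-d)$ distributed.

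By standard Gamma/chi-square tail bounds (Laurent--Massart), $\Pr[|X/d-1|>t] \le 2e^{-c\,d\,t^2}$. Also, $Y/(2^n-d)$ lies within $1\pm 2^{-n/3}$ except with probability $e^{-\Omega(2^{n/3})}$. Combining these gives $\|P\ket{\psi}\|^2 = (1\pm O(t))\,d/2^n$ except with probability $2e^{-c\,n^2t^2}$.

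\textbf{Step 3: union bound (the main obstacle).} We need a union bound over $2\cdot 2^r \le 2^{n+1}$ events, with total failure probability $O(2^{-n})$. So each event should fail with probability at most about $2^{-2n}$. This requires $c\,n^2t^2 \ge 2n\ln 2 + O(1)$, i.e. $t = \Theta(1/\sqrt n) = o(1)$.

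This is where the choice $d = n^2$ (more generally $n^{1+\gamma}$) is essential. The deviation we need is only $t \sim \sqrt{n/d}$, which tends to $0$ exactly because $d \gg n$. This also explains why $(1+\gamma)\log n$ leftover qubits suffice but $\log n$ would not.

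With these choices, all $\|\ket{\hat f_z}\|^2$ and $\|\ket{F_z}\|^2/2^r$ lie in $(1\pm O(n^{-1/2}))2^{-r}$ with probability $1-O(2^{-n})$, which gives both items of the lemma.
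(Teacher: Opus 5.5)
Your proposal is correct and is essentially the paper's argument: represent the Haar vector by i.i.d.\ complex Gaussians, use unitary invariance to reduce both items to squared norms of $n^2$-dimensional blocks, apply exponential-sum concentration to the numerator and the normalization, and take a union bound over $z$ (the paper handles $\ket{F_z}$ through the Hadamard-mixed columns $\ket{\tilde g_z}$, which is your rank-$d$ projection). Your explicit choice $t=\Theta(n^{-1/2})$ makes the union-bound bookkeeping cleaner than the paper's, whose stated $\epsilon=n^{1/4}$ is evidently meant to be $n^{-1/4}$, giving per-$z$ failure probability $e^{-n^{3/2}/3}$.
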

\begin{proof}
These are somewhat standard bounds, but we provide a proof for completeness. The distribution of the vectors $\{\ket{f_z}\}$ is identical to the following: sample i.i.d. Gaussian vectors, and normalize so that their squared norms sum to $1$. More precisely, \ $\ket{f_z} \sim \frac{\ket{g_z}}{\| G\|_F }$, where $G$ is a $2^{n-r} \times 2^r$ matrix with i.i.d. $\mathcal{CN}(0,1)$ entries (\textit{i.e.}\ standard complex Gaussian), and the $\ket{g_z}$ are the columns of $G$. So, $\| f_z\|^2 = \frac{\| \ket{g_z}\|^2}{\|G\|_F^2}$. Now, $\| G\|_F^2$ is the sum of the magnitudes squared of $2^n$ i.i.d.\ $\mathcal{CN}(0,1)$ random variables. Thus, its distribution is identical to the sum of $2^n$ $\textnormal{Exp}(1)$ random variables (which has mean $2^n$). We can invoke a standard Chernoff bound for sub-exponential random variables to deduce that
$$ \Pr\Big[\big\| G\|^2_F - 2^n \big| > \epsilon \cdot 2^n\Big] < e^{\frac{-2^n \epsilon^2}{3}} \,.$$
Similarly, recalling that $2^{n-r} = n^2$, we have that, for any $z$,
$$ \Pr\Big[\big | \| \ket{g_z}\|^2 - n^2 \big| > \epsilon \cdot n^2\Big] < e^{\frac{-n^2 \epsilon^2}{3}}\,.$$

Taking $\epsilon = n^{\frac14}$, the RHS becomes $e^{-\frac{2^{3n/2}}{3}}$. This is enough to enable a union bound over all $z \in \{0,1\}^{r}$, and establish that, with probability at least $1-O(2^{-n})$, $\| G\|^2_F =  (1\pm o(1)) 2^n$, and $\| \ket{g_z}\|^2  = (1 \pm o(1)) n^2$ for all $z$. This implies that, with the same probability, for all $z$, 
$$ \| \ket{f_z} \|^2 = (1\pm o(1)) \frac{n^2}{2^n} = (1\pm o(1)) \frac{1}{2^r} \,.$$
The vectors $\ket{\hat{f}_z}$ in the lemma statement satisfy the exact same bound since $\ket{\hat{f}} = H^{\otimes n} \ket{f}$ is also distributed as a Haar random state, by the unitary invariance of the Haar measure.

To bound $\|\ket{F_z}\|^2$, note first that $\ket{F_z}$ is distributed identically to $\frac{1}{\|G\|_F}\sum_{w} (-1)^{z \cdot w}\ket{g_w} $, where $G$ and $\ket{g_w}$ are as before. Let $\ket{\tilde{g}_z} = \frac{1}{\sqrt{2^r}} \sum_w (-1)^{z\cdot w} \ket{g_w}$. Then, we claim that the $\ket{\tilde{g}_z}$ are i.i.d.\ vectors with i.i.d.\ $\mathcal{CN}(0,1)$ entries. This follows by unitary invariance, by considering the linear map on $\mathbb{C}^{2^r \times n^2}$ that sends the vector obtained by ``stacking''
the $\ket{g_z}$ one after the other, to the vector obtained by stacking the $\ket{\tilde{g}_z}$. One can easily verify that this map is unitary. Thus, the latter vector also has i.i.d.\ $\mathcal{CN}(0,1)$ entries, and so does each $\ket{\tilde{g}_z}$. So $\ket{F_z}$ is distributed as $\frac{\sqrt{2^r}}{\|G\|_F} \ket{\tilde{g}_z}$, where the $\tilde{g}_z$ have i.i.d.\ $\mathcal{CN}(0,1)$ entries. Invoking the same bounds obtained previously, we have that, with probability at least $1-O(2^{-n})$, 
$$ \| \ket{F_z} \|^2  = (1 \pm o(1)) 2^r \cdot \frac{1}{2^n} \cdot n^2 = (1 \pm o(1)) \,,$$
as desired.
\end{proof}

\medskip

Applying Lemma~\ref{lem:3} to \eqref{eq:13}, we have that, with probability $1 -O(2^{-n})$ over sampling a Haar random target state, 
\begin{align}
    \eqref{eq:8} &\leq \frac{(1+o(1))}{2^r} \sum_{z \in \{0,1\}^r} \| \ket{G_z} \|^2  \nonumber\\ &= \frac{(1+o(1))}{2^r} \sum_{z \in \{0,1\}^r} \| \sum_w (-1)^{z\cdot w} c^{\perp}_w \ket{f^{\perp}_w} \|^2 \nonumber \\ 
    &= \frac{(1+o(1))}{2^r} \cdot 2^r \sum_{w \in \{0,1\}^r} \| c^{\perp}_w\ket{f^{\perp}_w}\|^2 \nonumber\\
    &= (1+o(1))\sum_{w \in \{0,1\}^r} |c^{\perp}_w|^2 \label{eq:155}
\end{align}
where the second equality is due to Parseval's identity for vector-valued functions. 

We now proceed to bounding term \eqref{eq:7}. We will spend almost the entire rest of the proof doing this. The key bounds that we will prove along the way will then also allow us to bound term \eqref{eq:6} with little extra work. 

We start by simplifying the expression.
\begin{align}
     \eqref{eq:7} &= \frac{1}{2^{2r}} \sz \frac{1}{\|\ket{\hat{f}_z}\|^2} \cdot \sum_{\substack{w\neq w' \\ v\neq v'}} (-1)^{z \cdot (w\oplus w' \oplus v \oplus v')}  2 \textnormal{Re}(\overline{c_w}c_v^{\perp} \overline{A_{w,w'}}\cdot B_{v,v'}) \nonumber\\
     &= \frac{1}{2^{2r}} \sz \frac{1}{\|\ket{\hat{f}_z}\|^2} \cdot 2 \textnormal{Re} \left(\sum_{w\neq w'}(-1)^{z \cdot (w\oplus w')}\overline{c_w}  \overline{A_{w,w'}}  \cdot \sum_{v\neq v'} (-1)^{z \cdot (v \oplus v')} c_v^{\perp} B_{v,v'}\right) \nonumber \\
     & \leq  \frac{1}{2^{2r}} \sz \frac{1}{\|\ket{\hat{f}_z}\|^2} \cdot 2 \left|\sum_{w\neq w'}(-1)^{z \cdot (w\oplus w')}\overline{c_w}  \overline{A_{w,w'}}\right| \cdot \left| \sum_{v\neq v'} (-1)^{z \cdot (v \oplus v')} c_v^{\perp} B_{v,v'} \right| \nonumber\\
     &\leq \frac{2+ o(1)}{2^r} \sz \left|\sum_{w\neq w'}(-1)^{z \cdot (w\oplus w')}\overline{c_w}  \overline{A_{w,w'}}\right| \cdot \left| \sum_{v\neq v'} (-1)^{z \cdot (v \oplus v')} c_v^{\perp} B_{v,v'} \right| \nonumber \\
     &\leq \frac{2+ o(1)}{2^r} \sqrt{\sum_z \Bigg|\sum_{w\neq w'}(-1)^{z \cdot (w\oplus w')}\overline{c_w}  \overline{A_{w,w'}}\Bigg|^2} \cdot \sqrt{\sum_z \Bigg| \sum_{v\neq v'} (-1)^{z \cdot (v \oplus v')} c_v^{\perp} B_{v,v'} \Bigg|^2}\,, \label{eq:15}
\end{align}
where, by Lemma~\ref{lem:3}, the second inequality holds with probability $1 -O(2^{-n})$ over sampling a Haar random target state; and the third inequality is by Cauchy--Schwarz. From here on, our bounds on \eqref{eq:7} will hold only with high probability. Now, notice that, combining \eqref{eq:99} with the guarantee from Lemma~\ref{lem:3}, we have that, with probability $1 -O(2^{-n})$,
\begin{equation*}
 \sum_z \bigg| \sum_{v\neq v'} (-1)^{z \cdot (v \oplus v')} c_v^{\perp} B_{v,v'} \bigg|^2 \leq (1\pm o(1)) \cdot 2^r \cdot \eqref{eq:8} \,,
\end{equation*}
But, we have already previously obtained a high probability bound on \eqref{eq:8} in Equation~\eqref{eq:155}. Using this, we have
\begin{equation*}
 \sum_z \bigg| \sum_{v\neq v'} (-1)^{z \cdot (v \oplus v')} c_v^{\perp} B_{v,v'} \bigg|^2 \leq (1+o(1)) \cdot 2^r \sum_{w} |c^{\perp}_w|^2 
\end{equation*}
(overall, in our entire proof, our uses of Lemma~\ref{lem:3} can be accounted for by a union bound which does not affect the $1 -O(2^{-n})$ probability). Denoting $| c^{\perp} | := \sqrt{\sum_w |c^{\perp}_w|^2}$, and substituting this into \eqref{eq:15}, we get
\begin{align}
     \eqref{eq:7} &\leq \frac{2+o(1)}{\sqrt{2^r}}\cdot |c^{\perp}| \cdot \sqrt{\sum_z \Bigg|\sum_{w\neq w'}(-1)^{z \cdot (w\oplus w')}\overline{c_w}  \overline{A_{w,w'}}\Bigg|^2} \label{eq:17}
\end{align}
Now, focusing on the term inside the square root, and remembering the definition of $A_{w,w'}$, we have
\begin{align}
    &\sum_z \Bigg|\sum_{w\neq w'}(-1)^{z \cdot (w\oplus w')}\overline{c_w}  \overline{A_{w,w'}}\Bigg|^2 \nonumber \\ 
    = & \sum_z \Bigg|\sum_{w\neq w'}(-1)^{z \cdot (w\oplus w')}\overline{c_w}  \frac{\braket{f_{w'}|f_w}}{\| f_w\|} \Bigg|^2 \nonumber \\
    =& \sum_z \sum_{\substack{ w \neq w' \\ v \neq v' }} (-1)^{z \cdot ( w\oplus w'\oplus v \oplus v')} \overline{c_w} c_v \cdot \frac{\braket{f_{w}|f_{w'}}}{\| f_w\|}  \cdot \frac{\braket{f_{v'}|f_v}}{\| f_v\|} \nonumber \\
    =& \, 2^r \cdot \sum_{\substack{ w \neq w' \\ v \neq v': \\ w \oplus w'\oplus v \oplus v' =0^r}} \overline{c_w} c_v\frac{\braket{f_{w}|f_{w'}}}{\| f_w\|}   \cdot \frac{\braket{f_{v'}|f_v}}{\| f_v\|} \nonumber \\
     =& \, 2^r \cdot \sum_{w,v} \overline{c_w} c_v M_{wv} \,,
\end{align}
where $M$ is $2^r \times 2^r$ matrix with entries $$M_{wv} = \sum_{\substack{ w'\neq w \\ v' \neq v: \\ w \oplus w'\oplus v \oplus v' =0^r}}  \frac{\braket{f_{w}|f_{w'}}}{\| f_w\|} \frac{\braket{f_{v \oplus w \oplus w'}|f_v}}{\| f_v\|} \,. $$
Letting $|c|$ denote the norm of the vector with entries $c_w$, and noting that $|c| = \sqrt{1 - |c^{\perp}|^2}$, we have
\begin{equation}
\label{eq:200}
\sum_z \Bigg|\sum_{w\neq w'}(-1)^{z \cdot (w\oplus w')}\overline{c_w}  \overline{A_{w,w'}}\Bigg|^2 \leq 2^r (1-|c^{\perp}|^2) \cdot \| M \| \,.
\end{equation}
Substituting the latter bound into \eqref{eq:17}, we have 
\begin{equation}
     \eqref{eq:7} \leq (2+o(1)) \cdot |c^{\perp}| \sqrt{1-|c^{\perp}|^2} \cdot \sqrt{\| M \|}  \,. \label{eq:22}
\end{equation}
Now, notice that, when the target state is sampled from the Haar measure, the distribution from which the $(n-r)$-qubit vectors $\ket{f_w}$ are sampled is identical to the following: sample each of their entries as i.i.d.\ $\mathcal{CN}(0,1)$ (\textit{i.e.}\ standard complex Gaussian) entries, and then normalize so that the squared norms add up to~1. Equivalently, sample a $2^{n-r} \times 2^{r}$ matrix $G$ with i.i.d.\ $\mathcal{CN}(0,1)$ entries; let $\ket{g_w}$ denote the column indexed by $w$, and let $\| G\|_F$ be the Frobenius norm of $G$; let $\ket{f_w} = \frac{\ket{g_w}}{\|G\|_F}$. Then, we can rewrite the entries of $M$ as 
$$M_{wv} = \frac{1}{\| G\|^2_F} \sum_{\substack{ w'\neq w \\ v' \neq v: \\ w \oplus w'\oplus v \oplus v' =0^r}}   \frac{\braket{g_w|g_{w'}}}{\| g_w\|} \frac{\braket{g_{v \oplus w \oplus w'}|g_v}}{\| g_v\|} \,, $$
and our goal is to obtain a high probability bound on the operator norm of $M$. We will now make use of the following lemma about Gaussian concentration.
\begin{lem}
\label{lem:gaussian-matrix}
Let $\epsilon>0$ and $d,d'\in \mathbb{N}$. Let $G$ be a $d \times d'$ matrix with i.i.d.\ $\mathcal{CN}(0,1)$ (\textit{i.e.}\ standard complex Gaussian) entries. Then, 
\begin{itemize}
\item $\Pr[\|G\|_F^2 < (1-\epsilon)dd'] < 2^{-\frac{\epsilon^2dd'}{4}}$. And, consequently, $\Pr[\|G\|_F < (1-\epsilon)\sqrt{dd'}] < 2^{-\frac{\epsilon^2dd'}{8}}$.
\item Similarly, for any $w\in [d']$, letting $\ket{g_w}$ denote the $w$-th column of $G$, we have that,
$$\Pr\Big[\|\ket{g_w}\|^2 \notin [(1-\epsilon)d, (1+\epsilon)d]\Big] < 2\cdot 2^{-\frac{\epsilon^2d}{4}}\,.$$ 
\end{itemize}
\end{lem}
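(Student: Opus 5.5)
The plan is to reduce everything to a Chernoff bound for a sum of i.i.d.\ exponential random variables. If $X\sim\mathcal{CN}(0,1)$, then $|X|^2\sim\textnormal{Exp}(1)$. Hence $\|G\|_F^2$ is a sum of $N=dd'$ i.i.d.\ $\textnormal{Exp}(1)$ variables, and $\|\ket{g_w}\|^2$ is a sum of $N=d$ of them. So it suffices to show, for $S_N=\sum_{i=1}^N Y_i$ with $Y_i\sim\textnormal{Exp}(1)$ i.i.d.,
\[\Pr[S_N<(1-\epsilon)N]\le 2^{-\epsilon^2N/4},\qquad \Pr[S_N>(1+\epsilon)N]\le 2^{-\epsilon^2N/4}.\]
The second bullet then follows from a union bound over the two tails, which accounts for the factor $2$.

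\textbf{Lower tail.} For $\lambda>0$ we have $\E[e^{-\lambda Y}]=\frac{1}{1+\lambda}$. By Markov's inequality,
\[\Pr[S_N<(1-\epsilon)N]\le e^{\lambda(1-\epsilon)N}(1+\lambda)^{-N}.\]
Optimizing at $\lambda=\epsilon/(1-\epsilon)$ gives the bound $\exp\big(-N(-\epsilon-\ln(1-\epsilon))\big)$. Since $-\ln(1-\epsilon)-\epsilon\ge \epsilon^2/2$, this is at most $e^{-\epsilon^2N/2}\le 2^{-\epsilon^2N/4}$, because $e^{-1/2}<2^{-1/4}$.

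\textbf{Upper tail.} For $0<\lambda<1$ we have $\E[e^{\lambda Y}]=\frac{1}{1-\lambda}$. Taking $\lambda=\epsilon/(1+\epsilon)$ yields $\exp\big(-N(\epsilon-\ln(1+\epsilon))\big)$. This is the step I expect to need the most care. The inequality $\epsilon-\ln(1+\epsilon)\ge \frac{\epsilon^2}{2}-\frac{\epsilon^3}{3}$ only gives roughly $\epsilon^2/6$ for $\epsilon\le1$. What is actually needed is $\epsilon-\ln(1+\epsilon)\ge \frac{\ln 2}{4}\epsilon^2\approx 0.173\,\epsilon^2$.
\begin{itemize}
\item For $\epsilon\le 1$: the function $h(\epsilon)=\epsilon-\ln(1+\epsilon)-c\epsilon^2$ satisfies $h(0)=0$ and $h'(\epsilon)=\frac{\epsilon}{1+\epsilon}-2c\epsilon\ge0$ whenever $2c\le\frac{1}{1+\epsilon}$, i.e.\ $c\le 1/4$ on $[0,1]$. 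Since $\frac{\ln 2}{4}<\frac14$, the bound holds.
\item For $\epsilon>1$: a quadratic exponent cannot hold for large $\epsilon$, since $\epsilon-\ln(1+\epsilon)$ grows only linearly. In the paper's application $\epsilon$ is small (indeed $o(1)$), so I would restrict the statement to $\epsilon\in(0,1]$. For $\epsilon\ge1$ one can instead fall back on a linear-exponent bound.
\end{itemize}
In the lower tail only $\epsilon<1$ is meaningful, since otherwise the probability is $0$.

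\textbf{Square-root consequence.} This is immediate. The event $\|G\|_F<(1-\epsilon)\sqrt{dd'}$ implies $\|G\|_F^2<(1-\epsilon)^2dd'\le(1-\epsilon)dd'$. Applying the first bound with the same $\epsilon$ then gives even the stronger $2^{-\epsilon^2dd'/4}$, so the stated $2^{-\epsilon^2dd'/8}$ holds a fortiori. Alternatively, use $(1-\epsilon)^2\le 1-\epsilon$ directly.
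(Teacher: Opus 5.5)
Your proof is correct and takes the same route as the paper: both reduce $\|G\|_F^2$ and $\|\ket{g_w}\|^2$ to sums of i.i.d.\ $\textnormal{Exp}(1)$ variables and apply a Chernoff bound, except that you carry out the MGF optimization and the constant checks explicitly, where the paper only cites ``a standard Chernoff bound for sub-exponential random variables.'' Your caveat is also right: the quadratic-exponent upper tail genuinely fails for large $\epsilon$ (e.g.\ $d=1$, $\epsilon=10$ gives probability $e^{-11}>2\cdot 2^{-25}$), so the lemma should be read with $\epsilon\le 1$, which covers every use in the paper ($\epsilon=0.1$, and the stated $\epsilon=2^{n/4}$ is evidently a typo for $2^{-n/4}$).
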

\begin{proof}
$\|G\|_F^2$ is a sum of $dd'$ i.i.d.\ $\textnormal{Exp}(1)$ random variables. This sum concentrates strongly around the mean, which is $dd'$. A standard Chernoff bound for sub-exponential random variables gives the desired bound. Likewise, $\| \ket{g_w}\|^2$ is a sum of $d$ i.i.d.\ $\textnormal{Exp}(1)$ random variables. The same Chernoff bound gives the desired bound.
\end{proof}
In our setting, $d d' = 2^n$. Taking $\epsilon = 2^{n/4}$, gives that, except with doubly exponentially small probability in $n$, $\|G\|_F >(1-o(1))\sqrt{2^n}$. 
Now, define the matrix $\widetilde{M} = \| G\|_F^2 \cdot M$, \textit{i.e.}\ the matrix with entries 
\begin{equation}
\label{eq:m-tilde}
\widetilde{M}_{wv}=  \sum_{\substack{ w'\neq w \\ v' \neq v: \\ w \oplus w'\oplus v \oplus v' =0^r}}  \frac{\braket{g_w|g_{w'}}}{\| g_w\|} \frac{\braket{g_{v \oplus w \oplus w'}|g_v}}{\| g_v\|} \,.
\end{equation}
Then, substituting this into \eqref{eq:22}, gives 
\begin{equation*}
     \eqref{eq:7} \leq (2+o(1)) \cdot |c^{\perp}| \sqrt{1-|c^{\perp}|^2} \cdot \frac{1}{\| G\|_F}\sqrt{\| \widetilde{M} \|}  \,.
\end{equation*}
And given what we just derived, we have that, except with doubly exponentially small probability in $n$ (and merging the two terms that contain $o(1)$ terms),
\begin{equation}
     \eqref{eq:7} \leq \frac{(2+o(1))}{\sqrt{2^n}} \cdot |c^{\perp}| \sqrt{1-|c^{\perp}|^2} \cdot \sqrt{\| \widetilde{M} \|}  \,. \label{eq:23}
\end{equation}
Our goal is now to show that $\| \widetilde{M}\| = O(2^n)$. This is sufficient to conclude the proof of Theorem~\ref{thm:1}, as we do at the end, starting from Equation~\eqref{eq:end}.

The first key observation to bound $\| \widetilde{M}\|$ is that $\widetilde{M} = E \cdot E^{\dagger}$, where $E$ is the $2^r \times (2^r-1)$ matrix with the following entries, for $w \in \{0,1\}^r$, $s \in \{0,1\}^r\setminus\{0^r\}$:
$$E_{ws} = \frac{\braket{g_w| g_{w\oplus s}}}{\| \ket{g_w}\|} \,.$$
To see this, notice that
\begin{align}
\widetilde{M}_{w,v} &= \sum_{\substack{ w'\neq w \\ v' \neq v: \\ w \oplus w'\oplus v \oplus v' =0^r}}   \frac{\braket{g_w|g_{w'}}}{\| g_w\|} \frac{\braket{g_{v \oplus w \oplus w'}|g_v}}{\| g_v\|} \\
&= \sum_{s\neq 0^r} \frac{\braket{g_w|g_{w \oplus s}}}{\| g_w\|} \frac{\braket{g_{v \oplus s}|g_v}}{\| g_v\|} \\
&= \sum_{s\neq 0^r} E_{ws} \cdot E^{\dagger}_{sv} \,,
\end{align}
where the first equality is by changing variables in the summation setting $s = w'\oplus w$. 

Now, since $\| \widetilde{M}\| = \| E\|^2$, our goal of showing that $\|\widetilde{M}\| = O(2^n)$ is equivalent to showing that $ \| E\| = O(\sqrt{2^n})$. Thus, in the rest of the proof, we will focus on showing the following lemma. 
\begin{lem}
\label{lem:5}
Let $r = n - 2\log n$. Let $G \in \mathbb{C}^{2^{n-r} \times 2^r}$ be a matrix with i.i.d. $\mathcal{CN}(0,1)$ entries. For $w \in \{0,1\}^r$, let $\ket{g_w} \in \mathbb{C}^{2^{n-r}}$ be the column of $G$ indexed by $w$ (when indexing columns of $G$ by $r$-bit strings). Let $E \in \mathbb{C}^{2^r \times (2^r - 1)}$ be the matrix with the following entries, for $w \in \{0,1\}^r$, $s \in \{0,1\}^r\setminus\{0^r\}$:
$$E_{ws} = \frac{\braket{g_w| g_{w\oplus s}}}{\| \ket{g_w}\|} \,.$$
Then, with probability $1 -O(2^{-n})$, 
$$ \| E \| \leq  \sqrt{2^n} \,.$$
\end{lem}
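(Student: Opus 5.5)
The approach is to bound $\|Ex\|$ for an arbitrary unit vector $x \in \mathbb{C}^{2^r-1}$, with coordinates indexed by $s \neq 0^r$. I first remove the troublesome normalization $1/\|\ket{g_w}\|$ with Cauchy--Schwarz, and then control the remaining quadratic form using the XOR-translation structure and the unitary invariance of $G$. Write $N = 2^r$ and $d = 2^{n-r} = n^2$, so $dN = 2^n$. Extend $x$ by $x_{0^r} := 0$. For each $w$, set $y^{(w)} \in \mathbb{C}^{N}$ with $y^{(w)}_u = x_{u \oplus w}$. Then
\[
(Ex)_w = \frac{\bra{g_w}\, G y^{(w)}}{\|\ket{g_w}\|},
\qquad\text{so}\qquad
|(Ex)_w| \le \|G y^{(w)}\|
\]
by Cauchy--Schwarz. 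Hence
\[
\|Ex\|^2 \le \sum_{w} y^{(w)\dagger} G^\dagger G\, y^{(w)} = \tr\big(G^\dagger G\, K\big),
\qquad
K := \sum_w y^{(w)} y^{(w)\dagger}.
\]

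Next I diagonalize $K$. Its entries are $K_{u,u'} = \sum_w x_{u\oplus w}\overline{x_{u'\oplus w}}$, which depend only on $u \oplus u'$. So $K$ is an XOR-convolution matrix and is diagonalized by the normalized Hadamard vectors $h_\chi = 2^{-r/2}\big((-1)^{\chi\cdot u}\big)_u$. Its eigenvalues are $\lambda_\chi = \big|\sum_s x_s (-1)^{\chi \cdot s}\big|^2 \ge 0$, and by Parseval $\sum_\chi \lambda_\chi = N\|x\|^2 = N$. Therefore
\[
\|Ex\|^2 \le \sum_\chi \lambda_\chi \,\|G h_\chi\|^2 \le N \cdot \max_\chi \|G h_\chi\|^2 .
\]
This bound holds simultaneously for all $x$, so it bounds $\|E\|^2$.

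It remains to control $\max_\chi \|Gh_\chi\|^2$. The map $G \mapsto G H^{\otimes r}$ is unitary on $\mathbb{C}^{d\times N}$, so by unitary invariance the vectors $Gh_\chi$ are i.i.d.\ $d$-dimensional vectors with i.i.d.\ $\mathcal{CN}(0,1)$ entries. This is the same observation used in the proof of Lemma~\ref{lem:3}. By the second item of Lemma~\ref{lem:gaussian-matrix} with $\epsilon = n^{-1/4}$, each satisfies $\|Gh_\chi\|^2 \le (1+n^{-1/4})d$ except with probability $2\cdot 2^{-n^{3/2}/4}$. A union bound over the $N \le 2^n$ characters costs only $2^{n - \Omega(n^{3/2})} = O(2^{-n})$. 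Altogether $\|E\|^2 \le (1+o(1))\, dN = (1+o(1))\,2^n$ with probability $1 - O(2^{-n})$.

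The main obstacle is the normalization and the dependence among entries. The entries $E_{ws}$ are strongly correlated, since each column $g_w$ appears in many of them, and $1/\|\ket{g_w}\|$ prevents a direct Gaussian-matrix argument. The Cauchy--Schwarz step above sidesteps both issues, at the cost of the heuristic slack: each entry is roughly $\mathcal{CN}(0,1)$, which would suggest $\|E\| \approx 2\sqrt{N} \ll \sqrt{2^n}$. Its price is that the constant comes out as $1+o(1)$ rather than exactly $1$. This suffices for the downstream use, since only $\|\widetilde{M}\| = \|E\|^2 = O(2^n)$ is needed in \eqref{eq:23}, and the $(1+o(1))$ factor is absorbed into the existing $o(1)$ terms. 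If the literal constant $1$ is wanted, I would recover it by keeping the exact identity $\sum_w \|\ket{g_w}\|^{-2}|\bra{g_w}Gy^{(w)}|^2$. The plan there is to exploit that $g_w$ is nearly orthogonal to $Gy^{(w)}$, because $y^{(w)}_w = x_{0^r} = 0$ and so $Gy^{(w)}$ is independent of $g_w$. Conditioning on the other columns then gives an extra factor of roughly $1/d$ in expectation, with plenty of room to spare.
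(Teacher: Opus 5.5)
Your chain of inequalities is correct. The identity $(Ex)_w=\braket{g_w|Gy^{(w)}}/\|\ket{g_w}\|$, the Walsh--Hadamard diagonalization of the XOR-convolution matrix $K$, and the union bound over the columns $Gh_\chi$ all check out, and together they give $\|E\|^2\le N\max_\chi\|Gh_\chi\|^2=(1+o(1))\,2^n$. This is not the lemma, for two reasons.

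First, the constant is $1+o(1)$ rather than $1$. Your chain cannot reach $1$, because $\max_\chi\|Gh_\chi\|^2$ exceeds $d$ with high probability.

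Second, and more seriously, your claim that $\|\widetilde{M}\|=O(2^n)$ suffices downstream is false. Put $\|\widetilde{M}\|=(1+o(1))2^n$ into \eqref{eq:23}, and $\|M\|=\|\widetilde{M}\|/\|G\|_F^2=1+o(1)$ into \eqref{eq:end3}. You get only $\eqref{eq:7}\le(2+o(1))|c^{\perp}|\sqrt{1-|c^{\perp}|^2}$ and $\eqref{eq:6}\le(1+o(1))(1-|c^{\perp}|^2)$. Hence $\Pr[\textnormal{accept}|\mathsf{had}]\le\big(|c^{\perp}|+\sqrt{1-|c^{\perp}|^2}\big)^2+o(1)$, which is vacuous. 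Theorem~\ref{thm:2} needs $\|\widetilde{M}\|=o(2^n)$. The paper's prose before the lemma also says $O(2^n)$ suffices, but its proof actually establishes $\|E\|\le\sqrt{2^r n^{1.1}}=n^{-0.45}\sqrt{2^n}$. That stronger bound is what enters \eqref{eq:mtilde-bound}--\eqref{eq:end4}.

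The missing idea is the one you defer to your last sentences. The Cauchy--Schwarz step $|(Ex)_w|\le\|Gy^{(w)}\|$ discards the independence of $g_w$ from $Gy^{(w)}$, and this costs roughly a factor $d=n^2$ in $\|E\|^2$. Concretely, your bound reads $\|E\|^2\le\max_t\lambda_t$, the top eigenvalue of the circulant matrix $C$ in the paper's computation of $\sigma^2$. The paper's route pays only $\sigma^2\lesssim\|C\|/d$ times a logarithmic factor.

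Recovering that factor \emph{uniformly over unit $x$} is the real difficulty. For a fixed $x$ one does get $\mathbf{E}\|Ex\|^2=N$, but that does not bound the operator norm. Moreover, the summands are dependent across $w$, since $g_w$ sits inside $Gy^{(w')}$ for every $w'\neq w$.

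The paper handles exactly this in two steps. It first applies decoupling (Corollary~\ref{cor:10}), which replaces $g_{w\oplus s}$ by an independent copy. It then applies the matrix Gaussian series bound (Lemma~\ref{lem:matrix-chernoff}) conditionally on $\{g_w\}$, with $\sigma^2\le 1.3\cdot 2^r$ computed via the same Walsh--Hadamard diagonalization you use. To complete your route you would need an argument of comparable strength, for instance a net over the unit sphere of $\mathbb{C}^{2^r-1}$ combined with chaos concentration at tail level $e^{-\Omega(2^r)}$.
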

To prove Lemma~\ref{lem:5}, we will leverage a key result by de la Pe\~{n}a and Montgomery-Smith~\cite{de1995decoupling} that allows us to ``decouple'' the two Gaussian random vectors that appear in the definition of $E_{ws}$. We state the general version of the result, and then the specialization that applies to our setting.
\begin{lem}[\cite{de1995decoupling}]
\label{lem:decoupling}
Let $X_i$ be a sequence of independent random variables taking values in a measure space $S$, and let $f_{i_1 \ldots i_k}$, for $i_1, \ldots, i_k \in [m]$, be measurable functions from $S^k$ to a Banach space $B$. Let $(X_i^{(j)})$ be independent copies of $(X_i)$. The following inequality holds for all $t\geq 0$ and all $m \geq2$, for some universal constant $C_k$ that only depends on $k$.
$$\Pr\Bigg[\bigg\| \sum_{1\leq  i_1 \neq \cdots \neq i_m \leq m} f_{i_1\ldots i_k}(X_{i_1}, \ldots, X_{i_k}) \bigg\| \geq t\Bigg] \leq C_k \cdot \Pr\Bigg[C_k \cdot\bigg\|\sum_{1\leq  i_1 \neq \cdots \neq i_m \leq m}  f_{i_1\ldots i_k}(X_{i_1}^{(1)}, \ldots, X_{i_k}^{(k)}) \bigg \| \geq t \Bigg] \,.$$
\end{lem}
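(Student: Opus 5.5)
The lemma is quoted from \cite{de1995decoupling}, so in the paper one could simply cite it. If I were to reprove it, I would follow the randomization-plus-chaos strategy and prove the case $k=2$ in detail, since that is the only case Lemma~\ref{lem:5} uses. The case $k=2$ already contains every idea, and general $k$ follows by the same argument with degree-$k$ chaos.

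\textbf{Step 1: randomized selection.} Let $(X_i^{(1)})$ and $(X_i^{(2)})$ be independent copies of $(X_i)$, and let $(\varepsilon_i)$ be independent Rademacher signs. Set $Z_i = X_i^{(1)}$ if $\varepsilon_i = 1$ and $Z_i = X_i^{(2)}$ otherwise. Since $(Z_i)$ has the law of $(X_i)$, the coupled sum $U=\sum_{i\neq j} f_{ij}(X_i,X_j)$ has the same law as $\sum_{i\neq j} f_{ij}(Z_i,Z_j)$. Writing $\mathbf 1[\varepsilon_i=1] = (1+\varepsilon_i)/2$ and expanding, conditionally on $X^{(1)},X^{(2)}$ the sum is a $B$-valued Rademacher chaos of degree at most $2$. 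Its coefficients are combinations of the four sums
\[
A_{ab}=\sum_{i\neq j} f_{ij}\big(X^{(a)}_i,X^{(b)}_j\big),\qquad a,b\in\{1,2\}.
\]
Here $A_{11}$ and $A_{22}$ are copies of $U$, and $A_{12}$ and $A_{21}$ have the law of the decoupled sum $D$.

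\textbf{Step 2: extracting the decoupled part.} I would symmetrize so that a suitable coefficient or linear combination isolates decoupled terms. Concretely, I would compare $\sum f_{ij}(Z_i,Z_j)$ with $\sum f_{ij}(Z_i,Z'_j)$, where $Z'_i$ is the complementary choice. In the second sum the $\varepsilon$-free part is $\tfrac14(A_{12}+A_{21}+A_{11}+A_{22})$. Subtracting the two chaoses leaves a chaos whose coefficients involve only the decoupled sums $A_{12}$ and $A_{21}$, up to controlled diagonal corrections. Taking conditional expectations over $\varepsilon$ then expresses the coupled object as a linear image of decoupled ones.

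\textbf{Step 3: tail comparison via chaos anti-concentration.} The key tool is the following fact. For a $B$-valued Rademacher chaos $Q(\varepsilon)$ of degree at most $k$ with $\varepsilon$-free term $a_0$, one has $\Pr_\varepsilon[\|Q(\varepsilon)\|\ge \|a_0\|]\ge c_k>0$. I would prove this from Bonami hypercontractivity (moments of degree-$k$ chaos are comparable) combined with Paley--Zygmund, applied to $\|Q\|$ after testing against a norming functional for $a_0$. Conditioning on $X^{(1)},X^{(2)}$ and integrating gives
\[
\Pr[\|U\|\ge t]\le c_k^{-1}\Pr\big[\|\text{decoupled chaos}\|\ge t/C\big].
\]
A union bound over the finitely many decoupled pieces, all equal in law to $D$, then yields $C_k\Pr[C_k\|D\|\ge t]$.

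The main obstacle is Step 3: making the anti-concentration lemma hold uniformly over arbitrary Banach spaces, and arranging the algebra of Step 2 so that the $\varepsilon$-free term of the relevant chaos is exactly the coupled sum while all remaining terms are decoupled. For $k>2$ I would induct by decoupling one coordinate at a time, with the constants $C_k$ compounding.
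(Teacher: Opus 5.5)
The paper gives no proof of this lemma; it only cites de la Pe\~na--Montgomery-Smith, so citing is all that is needed there. Your reproof sketch, however, has a genuine gap, and it sits in Step 2, which is where the actual content lies.

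Your anti-concentration fact in Step 3 is correct and routine. Take a norming functional $\phi$ for $a_0$. Then $\mathrm{Re}\,\phi(Q(\varepsilon))$ is a real chaos with mean $\|a_0\|$, and Bonami plus Paley--Zygmund give the constant $c_k$. Banach-space uniformity is therefore not the obstacle.

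The obstacle is building a chaos whose law is decoupled and whose $\varepsilon$-free term is $U=A_{11}$ plus purely decoupled pieces. Your two-copy construction cannot do this.

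\begin{itemize}
\item Both $\sum_{i\ne j}f_{ij}(Z_i,Z_j)$ and $\sum_{i\ne j}f_{ij}(Z_i,Z'_j)$ have $\varepsilon$-free term $\tfrac14(A_{11}+A_{12}+A_{21}+A_{22})$.
\item Their difference therefore has zero constant term, so it carries no information about $U$.
\item Its coefficients are not decoupled either. The $\varepsilon_i\varepsilon_j$ coefficient is $\tfrac12[f_{ij}(X^{(1)}_i,X^{(1)}_j)-f_{ij}(X^{(1)}_i,X^{(2)}_j)-f_{ij}(X^{(2)}_i,X^{(1)}_j)+f_{ij}(X^{(2)}_i,X^{(2)}_j)]$.
\item More generally, take any independent two-copy selection with $\Pr[Z_i=X^{(1)}_i]=q$ that makes $(Z,Z')$ independent copies. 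It weights $A_{11}$ and $A_{22}$ equally, both by $q(1-q)$.
\end{itemize}

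So the most this route yields is a tail bound for $U+U'$, with $U'$ an independent copy of $U$. Since $f_{ij}$ is arbitrary, $U$ is not symmetric, and there is no free passage from $U+U'$ back to $U$.

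The missing idea is to break the symmetry between copies, for example with a third independent copy $X^{(3)}$.

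\begin{itemize}
\item Set $(Z_i,Z'_i)=(X^{(1)}_i,X^{(2)}_i)$ if $\varepsilon_i=1$, and $(Z_i,Z'_i)=(X^{(3)}_i,X^{(1)}_i)$ if $\varepsilon_i=-1$.
\item Conditionally on $\varepsilon$, each $(Z_i,Z'_i)$ is a pair of distinct members of an i.i.d.\ triple. Hence $(Z,Z')$ are independent copies of $X$, and $T=\sum_{i\ne j}f_{ij}(Z_i,Z'_j)$ has the law of $D$.
\item Averaging over the four equally likely values of $(\varepsilon_i,\varepsilon_j)$ gives $\E_\varepsilon T=\tfrac14(A_{11}+A_{12}+A_{31}+A_{32})$. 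Here only $A_{11}$ is coupled, and each of $A_{12},A_{31},A_{32}$ has the law of $D$.
\item Apply your Step 3 to the degree-$2$ chaos $T-\tfrac14(A_{12}+A_{31}+A_{32})$, whose constant term is $\tfrac14U$. This gives $c\Pr[\|U\|\ge 4s]\le\Pr[\|D\|\ge s/2]+3\Pr[\|D\|\ge 2s/3]$.
\end{itemize}

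That is the lemma for $k=2$, the only case the paper uses.

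Your closing remark for general $k$, to induct one coordinate at a time, is also underspecified. The intermediate sums are partially coupled, so the $k=2$ statement does not directly apply to them.
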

\begin{cor}
\label{cor:10}
Let $E \in \mathbb{C}^{2^r \times (2^r - 1)}$ be the random matrix defined in Lemma~\ref{lem:5}. Let $\tilde{E}  \in \mathbb{C}^{2^r \times (2^r - 1)}$ be the random matrix with entries:
$$\tilde{E}_{ws} = \frac{\braket{g_w| \tilde{g}_{w\oplus s}}}{\| \ket{g_w}\|} \,,$$
where the only difference from $E$ is that $\{\ket{\tilde{g}_w}\}$ is an independent set of vectors with the same distribution as $\{\ket{g_w}\}$ (and recall that each $\ket{g_w}$ has i.i.d.\ $\mathcal{CN}(0,1)$ entries, and the vectors themselves are i.i.d). Then, for any $t \geq 0$,
$$\Pr\Big[ \| E\| \geq t \| \Big] \leq C \cdot \Pr\Big[ C \cdot \| \tilde{E} \|\geq t \Big] \,,$$
where $C$ is the constant $C_2$ from Lemma~\ref{lem:decoupling}.
\end{cor}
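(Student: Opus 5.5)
The plan is to obtain Corollary~\ref{cor:10} as a direct instance of Lemma~\ref{lem:decoupling} with $k=2$. Take the index set to be $\{0,1\}^r$, so $m=2^r$. The independent random variables are $X_w=\ket{g_w}\in S=\mathbb{C}^{2^{n-r}}$, and the Banach space $B$ is $\mathbb{C}^{2^r\times(2^r-1)}$ with the operator norm. Everything then reduces to writing $E$ as an off-diagonal sum $\sum_{u\neq u'} f_{u,u'}(X_u,X_{u'})$ of matrix-valued functions.

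First, reparametrize each entry by the pair $(u,u')=(w,w\oplus s)$. Since $s\neq 0^r$, the map $(w,s)\mapsto(w,w\oplus s)$ is a bijection onto the pairs with $u\neq u'$. For $u\neq u'$, define $f_{u,u'}:S^2\to B$ by
\[
f_{u,u'}(x,y)=\frac{\braket{x|y}}{\|x\|}\,\ket{u}\!\bra{u\oplus u'},
\]
where $\ket{u}\!\bra{s}$ is the matrix unit in row $u$ and column $s=u\oplus u'\neq 0^r$. At $x=0$ set it to $0$; this event has probability zero. Each $f_{u,u'}$ is measurable. Because every matrix unit is hit by exactly one pair, we get $E=\sum_{u\neq u'} f_{u,u'}(X_u,X_{u'})$. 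Note that the functions $f_{u,u'}$ depend on their index pair; Lemma~\ref{lem:decoupling} allows this.

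Next, apply Lemma~\ref{lem:decoupling} with $X^{(1)}=\{\ket{g_w}\}$ and $X^{(2)}=\{\ket{\tilde g_w}\}$. These are independent copies, and the lemma's conclusion depends only on their joint law, so we may take the first copy to be the original vectors. The decoupled sum is
\[
\sum_{u\neq u'} f_{u,u'}\big(\ket{g_u},\ket{\tilde g_{u'}}\big),
\]
whose $(w,s)$ entry is $\braket{g_w|\tilde g_{w\oplus s}}/\|g_w\|$. This is exactly $\tilde E_{ws}$. The lemma then gives
\[
\Pr\big[\|E\|\ge t\big]\le C_2\Pr\big[C_2\|\tilde E\|\ge t\big]
\]
for all $t\ge 0$, as claimed. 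The stray ``$\|$'' in the displayed statement is a typo.

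The only points needing care are bookkeeping rather than real obstacles. One must make sure the lemma's sum over distinct indices matches the constraint $s\neq 0^r$; this is why the diagonal $w=w'$ is excluded. One must also check that the norm used is the operator norm on a finite-dimensional, hence Banach, space. I expect no genuine difficulty here; the substantive work lies in subsequently bounding $\|\tilde E\|$.
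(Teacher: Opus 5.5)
Your proposal is correct and follows the paper's proof. Both apply the de la Pe\~{n}a--Montgomery-Smith decoupling lemma with $k=2$, $X_w=\ket{g_w}$, the matrix space under the operator norm as $B$, and single-entry matrix-valued $f_{ww'}$, so that the off-diagonal sum is $E$ and its decoupled version is $\tilde{E}$. Your bookkeeping is cleaner than the paper's: you place the entry at position $(w,w\oplus w')$ rather than $(w,w')$, and you define $f_{ww'}$ as a function of its two arguments rather than writing $\tilde{g}_{w'}$ directly inside it.
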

\begin{proof}
Apply Lemma~\ref{lem:decoupling} with $k=2$, and the following settings: let the Banach space $B$ be $\mathbb{C}^{2^r \times (2^r - 1)}$ equipped with the operator norm; let $S$ be $\mathbb{C}^{2^{n-r}}$ with Euclidean norm (any norm is fine here); denote the indices $i_1, i_2$ by $w, w'$, and take them to range over $\{0,1\}^{r}$; let $X_w = \ket{g_w}$; let $f_{ww'}(\ket{g_w},\ket{g_{w'}})$ be the $2^r \times (2^r - 1)$ matrix that is zero everywhere except at the $(w,w')$ entry, where it is equal to $\frac{\braket{g_w| \tilde{g}_{w'}}}{\| \ket{g_w}\|}$. Then, notice that 
$$ \sum_{w \neq w'} f_{ww'}(\ket{g_w}, \ket{g_{w'}}) = E \,,$$
where, crucially, the summation is only over $w, w'$ such that $w\neq w'$. Similarly, if we consider a second independent set of random variables $\{\tilde{g}_w\}$, we have
\[ \sum_{w \neq w'} f_{ww'}(\ket{g_w}, \ket{\tilde{g}_{w'}}) = \tilde{E} \,.\qedhere\]
\end{proof}
So, from here on, our goal is to show that, there is a constant $C'$ such that, with overwhelming probability over sampling the vectors $\{\ket{g_w}\}$ and $\{\ket{\tilde{g}_w}\}$, $\| \tilde{E} \| \leq C' \sqrt{2^n} $, where $\tilde{E} \in \mathbb{C}^{2^{r} \times (2^{r}-1)}$ is the matrix with entries
$$\tilde{E}_{ws} = \frac{\braket{g_w| \tilde{g}_{w\oplus s}}}{\| \ket{g_w}\|} \,.$$

Our strategy will be to put $\tilde{E}$ in a form that is amenable to a standard matrix Chernoff bound (this step will leverage the ``decoupled'' form of $\tilde{E}$).

For $j \in \{0,1\}^{n-r}$, let $g^{(j)}_w \in \mathbb{C}$ denote the $j$-th entry of $\ket{g_w}$ (here we are taking the entries to be indexed by strings). Then, note that $\{g^{(j)}_w\}$ is a collection of i.i.d.\ $\mathcal{CN}(0,1)$ random variables. We use the analogous notation for $\{\tilde{g}^{(j)}_w\}$. The first key observation is that we can write 
\begin{equation}
\label{eq:tildeE}
\tilde{E} = \sum_{\substack{w \in \{0,1\}^r\\ j\in \{0,1\}^{n-r}}} \tilde{g}^{(j)}_w \cdot B^{wj} \,,
\end{equation}
where each $B^{wj} \in \mathbb{C}^{2^r \times (2^r-1)}$ is a matrix that is zero everywhere except for the $w$-th row: for $s \in \{0,1\}^r \setminus \{0^r\}$, the $s$-th entry of the $w$-th row is $\overline{g^{(j)}_{w\oplus s}}/\| \ket{g_w}\|$. We are now ready to apply the following standard matrix Chernoff bound.
\begin{lem}
\label{lem:matrix-chernoff}
    Let $m \in \mathbb{N}$. Let $\{B_k : k \in [m]\}$ be a collection of complex matrices with dimension $d_1 \times d_2$, and let $\{\xi_k: k \in [m]\}$ be a collection of i.i.d.\ $\mathcal{CN}(0,1)$. Let $$\sigma^2 = \max \left\{\Big\|\sum_k B_kB_k^{\dagger}\Big\|, \Big\|\sum_k B_k^{\dagger}B_k\Big\|\right\} \,.$$
Then, for all $t \geq 0$,
$$ \Pr\bigg[\Big\| \sum_k \xi_k B_k \Big\| \geq t \bigg] \leq (d_1+d_2) \cdot e^{-t^2/2\sigma^2} \,.$$
\end{lem}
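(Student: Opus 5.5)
The plan is to reduce the statement to the standard Hermitian matrix Gaussian series bound (Tropp's inequality), using a Hermitian dilation and a split of each complex Gaussian into real and imaginary parts.

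\textbf{Step 1: real Gaussians.} Write each $\xi_k = (a_k + i b_k)/\sqrt2$, where $\{a_k,b_k\}$ are i.i.d.\ real $\mathcal N(0,1)$. Then
\[
\sum_k \xi_k B_k = \sum_k a_k \tfrac{B_k}{\sqrt2} + \sum_k b_k \tfrac{iB_k}{\sqrt2}.
\]
This is a real Gaussian series with $2m$ coefficient matrices $\{B_k/\sqrt2,\; iB_k/\sqrt2\}$. Its variance parameter is unchanged:
\[
\sum_k \Bigl(\tfrac{B_kB_k^\dagger}{2} + \tfrac{(iB_k)(iB_k)^\dagger}{2}\Bigr) = \sum_k B_kB_k^\dagger,
\]
and likewise for the products $B_k^\dagger B_k$. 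So $\sigma^2$ is the same.

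\textbf{Step 2: Hermitian dilation.} For $X\in\mathbb C^{d_1\times d_2}$, let $\mathcal H(X)=\begin{pmatrix}0 & X\\ X^\dagger & 0\end{pmatrix}$. This is Hermitian of dimension $d=d_1+d_2$ and is real-linear in $X$. Its spectrum is symmetric about $0$, so $\lambda_{\max}(\mathcal H(X))=\|X\|$. Moreover $\mathcal H(X)^2=\mathrm{diag}(XX^\dagger, X^\dagger X)$. Writing $A_j$ for the dilations of the $2m$ coefficient matrices and $\gamma_j$ for the real Gaussians, we get $\|\sum_j\gamma_jA_j\|_{\mathrm{op}}$ equal to $\|\sum_k\xi_kB_k\|$. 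We also get $\|\sum_j A_j^2\|=\sigma^2$, since the $A_j^2$ are block diagonal with the two blocks from Step 1.

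\textbf{Step 3: Laplace transform and Lieb.} Set $Y=\sum_j\gamma_jA_j$. For any $\theta>0$, Markov's inequality gives
\[
\Pr[\lambda_{\max}(Y)\ge t]\le e^{-\theta t}\,\E\,\tr e^{\theta Y}.
\]
The main step is to control $\E\,\tr e^{\theta Y}$ despite noncommutativity. For this I would invoke Lieb's concavity theorem: $H\mapsto \tr\exp(K+\log H)$ is concave on positive definite $H$. Applying it iteratively, conditioning on one $\gamma_j$ at a time, gives
\[
\E\,\tr e^{\theta Y}\le \tr\exp\Bigl(\sum_j\log\E e^{\theta\gamma_jA_j}\Bigr).
\]
Diagonalizing each $A_j$ shows $\E e^{\theta\gamma_jA_j}=e^{\theta^2A_j^2/2}$, hence $\log\E e^{\theta\gamma_jA_j}=\theta^2A_j^2/2$. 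Then bound the trace by dimension times largest eigenvalue:
\[
\tr\exp\Bigl(\tfrac{\theta^2}{2}\sum_jA_j^2\Bigr)\le d\,e^{\theta^2\sigma^2/2}.
\]

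\textbf{Step 4: optimize.} Choosing $\theta=t/\sigma^2$ yields
\[
\Pr[\|\textstyle\sum_k\xi_kB_k\|\ge t]\le (d_1+d_2)\,e^{-t^2/2\sigma^2}.
\]
No union over $\pm\lambda$ is needed, because of the symmetric spectrum noted in Step 2.

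The only nontrivial ingredient is the Lieb-concavity step in Step 3. Alternatively, I could cite Tropp's matrix Gaussian series theorem directly and present Steps 1–2 as the reduction to it.
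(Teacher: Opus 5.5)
Your proposal is correct and follows the standard route. The paper gives no proof of this lemma; it simply cites it as the standard matrix Gaussian series (matrix Chernoff) bound of Tropp, whose proof is exactly your Hermitian dilation, Lieb-concavity subadditivity, and the choice $\theta = t/\sigma^2$. Your split $\xi_k = (a_k + i b_k)/\sqrt{2}$ is the right reduction to real Gaussians, and you correctly checked that it leaves $\sigma^2$ unchanged under the paper's normalization $\mathbb{E}|\xi_k|^2 = 1$; the edge cases $t=0$ and $\sigma=0$ are trivial.
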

We can apply Lemma~\ref{lem:matrix-chernoff} to our matrix $\tilde{E}$ to get
\begin{equation}
\label{eq:266}
\Pr[\| \tilde{E}\| \geq t] \leq (2^{r+1}-1) \cdot e^{-t^2/2\sigma^2} \,,
\end{equation}
where 
\begin{equation}
\label{eq:sigma}
\sigma^2 = \max \left\{\Bigg\| \sum_{\substack{w \in \{0,1\}^r\\ j\in \{0,1\}^{n-r}}} B^{wj} \cdot (B^{wj})^{\dagger} \Bigg\|, \Bigg\|\sum_{\substack{w \in \{0,1\}^r\\ j\in \{0,1\}^{n-r}}}(B^{wj})^{\dagger}\cdot B^{wj} \Bigg\|\right\} \,,
\end{equation}
and the matrices $B^{wj}$ are defined right after Equation~\eqref{eq:tildeE}.

We will now bound each of the two operator norms inside the maximization. Observe first that $$\sum_{\substack{w \in \{0,1\}^r\\ j\in \{0,1\}^{n-r}}} B^{wj} \cdot (B^{wj})^{\dagger} = \sum_{w \in \{0,1\}^r} \frac{1}{\|\ket{g_w} \|^2} \sum_{j\in \{0,1\}^{n-r}} \widetilde{B^{wj}} \cdot (\widetilde{B^{wj}})^{\dagger} \,,$$
where $\widetilde{B^{wj}} = \| \ket{g_w}\|\cdot B^{wj}$, \textit{i.e.}\ $\widetilde{B^{wj}}  \in \mathbb{C}^{2^r \times (2^r-1)}$ is a matrix that is zero everywhere except for the $w$-th row: for $s \in \{0,1\}^r \setminus \{0^r\}$, the $s$-th entry of the $w$-th row is $\overline{g^{(j)}_{w\oplus s}}$. Now, by Lemma~\ref{lem:gaussian-matrix}, taking $\epsilon = 0.1$, and $d = 2^{n-r} = 2^{2\log n} = n^2$, we have that, for any $w \in \{0,1\}^r$, 
$$\Pr\Big[\|\ket{g_w}\|^2 \notin [0.9d, 1.1 d]\Big] < 2\cdot 2^{-\frac{0.01 n^2}{4}}\,.$$ 
By a union bound, we have that, with probability at least $1- 2\cdot2^{-\frac{0.01 n^2}{4}+r}$, the above holds for all $w$. Now, crucially, each term $\widetilde{B^{wj}} \cdot (\widetilde{B^{wj}}^{\dagger})$ is positive semi-definite. Thus, it follows that, with probability at least  $1- 2\cdot 2^{-\frac{0.01 n^2}{4}+r}$, 
\begin{equation}
\label{eq:345}
\Bigg\| \sum_{\substack{w \in \{0,1\}^r\\ j\in \{0,1\}^{n-r}}} B^{wj} \cdot (B^{wj})^{\dagger} \Bigg\| \leq \frac{1.12}{d} \cdot \Bigg\| \sum_{\substack{w \in \{0,1\}^r\\ j\in \{0,1\}^{n-r}}} \widetilde{B^{wj}} \cdot (\widetilde{B^{wj}})^{\dagger} \Bigg\|\,.
\end{equation}

Now, notice that $\widetilde{B^{wj}} \cdot (\widetilde{B^{wj}})^{\dagger}$ is particularly simple: it is a $2^r \times 2^r$ matrix that is zero everywhere except at the $(w,w)$ entry, where it is equal to $\sum_{s \in \{0,1\}^r\setminus\{0^r\}} |g^{(j)}_{w\oplus s}|^2$. Thus, the matrix on the RHS of \eqref{eq:345} is diagonal with $(w,w)$ entry $\sum_{s \in \{0,1\}^r\setminus\{0^r\}} \| \ket{g_{w\oplus s}}\|^2 = \sum_{w' \neq w} \| \ket{g_{w'}}\|^2 \leq 2^r \cdot 1.1 d$, which, with probability at least $1- 2\cdot 2^{-\frac{0.01 n^2}{4}+r}$, holds for all $w$. So, with the same probability,
\begin{equation}
\Bigg\| \sum_{\substack{w \in \{0,1\}^r\\ j\in \{0,1\}^{n-r}}} B^{wj} \cdot (B^{wj})^{\dagger} \Bigg\| \leq \frac{1.12}{d} \Bigg\| \sum_{\substack{w \in \{0,1\}^r\\ j\in \{0,1\}^{n-r}}} \widetilde{B^{wj}} \cdot (\widetilde{B^{wj}})^{\dagger} \Bigg\| \leq \frac{1.12}{d} \cdot 2^r \cdot 1.1d \leq 1.3 \cdot 2^r\,.
\end{equation}
As we will see soon enough, this bound is sufficient for us to use in Equation~\eqref{eq:266}. So, our final task is to show a similar bound on the \emph{other} operator norm in the definition of $\sigma^2$, namely
\begin{equation}
\label{eq:267}
\Bigg\|\sum_{\substack{w \in \{0,1\}^r\\ j\in \{0,1\}^{n-r}}}(B^{wj})^{\dagger}\cdot B^{wj} \Bigg\| \,.
\end{equation}
This will require a bit more work. First, analogously to \eqref{eq:345}, with probability at least $1- 2\cdot 2^{-\frac{0.01 n^2}{4}+n}$,
\begin{equation}
\label{eq:346}
\Bigg\|\sum_{\substack{w \in \{0,1\}^r\\ j\in \{0,1\}^{n-r}}}(B^{wj})^{\dagger}\cdot B^{wj} \Bigg\| \leq \frac{1.12}{d} \Bigg\|\sum_{\substack{w \in \{0,1\}^r\\ j\in \{0,1\}^{n-r}}}(\widetilde{B^{wj}})^{\dagger}\cdot \widetilde{B^{wj}} \Bigg\| \,,
\end{equation}
where $d = 2^{n-r}$.

Fix $w \in \{0,1\}^r$ and $j \in \{0,1\}^{n-r}$. Then, notice that $(\widetilde{B^{wj}})^{\dagger}\cdot \widetilde{B^{wj}}$ is $2^{r-1} \times 2^{r-1}$ matrix with $(i,k)$ entry: $g^{(j)}_{w\oplus i}\cdot \overline{g^{(j)}_{w\oplus k}}$. Now, fixing $j$ and summing over $w$ yields the matrix $$C^{(j)} = \sum_{w\in \{0,1\}^r} (\widetilde{B^{wj}})^{\dagger}\cdot \widetilde{B^{wj}}$$ with $(i,k)$ entry: 
$$\sum_{w\in \{0,1\}^r}  g^{(j)}_{w\oplus i}\cdot \overline{g^{(j)}_{w\oplus k}} = \sum_{w\in \{0,1\}^r} g^{(j)}_{w}\cdot \overline{g^{(j)}_{w\oplus s}} \,,$$ where $s = i\oplus k$. Thus, crucially, $C^{(j)}$ is a \emph{circulant} matrix, \textit{i.e.}\ the $(i,k)$ entry only depends on $i\oplus k$ (formally, it is circulant with respect to $\mathbb{Z}_2^n$).
Since $C^{(j)}$ is circulant, it is diagonalized by the Walsh-Hadamard transform (this is formalized in the lemma below)! Crucially, the operator whose operator norm we want to bound (on the RHS of~\eqref{eq:346}) is $C = \sum_{j \in \{0,1\}^{n-r}} C^{(j)}$, and the $C^{(j)}$ are all diagonal in the same basis. So, all we need to do is find the eigenvalues, and understand the distribution of their sums. The following lemma characterizes the eigenvalues and eigenvectors of circulant matrices.
\begin{lem}
\label{lem:circulant}
Let $m \in \mathbb{N}$. Let $W \in \mathbb{C}^{2^m \times 2^m}$, with entries indexed by $m$-bit strings. Suppose $W$ is circulant, \textit{i.e.}\ there is a function $\alpha(\cdot)$ such that, for all $i,k \in \{0,1\}^m$, $W(i,k) = \alpha(s)$, where $s = i\oplus k$. Then, $W$ has eigenvectors $v_t$, for $t \in \{0,1\}^m$, where the $x$-th entry of $v_t$ is
$$ v_t(x) = (-1)^{t \cdot x} \,.$$
Its corresponding eigenvalue is $\lambda_t = \sum_{s \in \{0,1\}^m} \alpha(s) (-1)^{t\cdot s}$.
\end{lem}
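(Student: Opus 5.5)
The plan is to verify the eigenvector equation $W v_t = \lambda_t v_t$ directly, entrywise, by reindexing the sum with the group structure of $\mathbb{Z}_2^m$. Fix $t \in \{0,1\}^m$ and any $i \in \{0,1\}^m$. The $i$-th entry of $W v_t$ is
\[
(Wv_t)(i) = \sum_{k\in\{0,1\}^m} \alpha(i\oplus k)\,(-1)^{t\cdot k}.
\]
We substitute $s = i\oplus k$. Since $k\mapsto i\oplus k$ is a bijection of $\{0,1\}^m$ and $k = i\oplus s$, the sum becomes $\sum_s \alpha(s)(-1)^{t\cdot(i\oplus s)}$. The key identity is $(-1)^{t\cdot(i\oplus s)} = (-1)^{t\cdot i}(-1)^{t\cdot s}$, which holds because the inner product mod $2$ is additive over XOR. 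This lets us factor out $(-1)^{t\cdot i} = v_t(i)$, and the remaining factor is $\sum_s \alpha(s)(-1)^{t\cdot s} = \lambda_t$. Hence $(Wv_t)(i) = \lambda_t v_t(i)$ for every $i$.

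To show these eigenvectors give a full diagonalization, as used for $C^{(j)}$ and $C=\sum_j C^{(j)}$, we note that the $2^m$ vectors $v_t$ are mutually orthogonal. Indeed,
\[
\langle v_t, v_{t'}\rangle = \sum_x (-1)^{(t\oplus t')\cdot x} = 2^m\,\delta_{t,t'},
\]
by the standard character-sum identity. So $\{v_t/\sqrt{2^m}\}$ is an orthonormal basis, namely the columns of $H^{\otimes m}$, and $W = H^{\otimes m}\,\mathrm{diag}(\lambda_t)\,H^{\otimes m}$. Since this basis does not depend on $\alpha$, all such circulant matrices are simultaneously diagonalized. In particular the eigenvalues of a sum of circulants are the sums of the corresponding $\lambda_t$'s.

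No step is a real obstacle; the only point needing care is the bijective reindexing together with the bilinearity of $t\cdot x$ over $\mathbb{F}_2$. One small caveat for the application: the matrices $C^{(j)}$ are indexed by $s\neq 0^r$ rather than all of $\{0,1\}^r$. The cleanest fix is to view them as principal submatrices of full $2^r\times 2^r$ circulants, whose operator norm is bounded by that of the full matrix. I would invoke the lemma on the full circulant for this reason.
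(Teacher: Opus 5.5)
Your proposal is correct and follows the paper's proof: both check the eigenvector equation entrywise, reindex with $s=i\oplus k$, and factor $(-1)^{t\cdot(i\oplus s)}=(-1)^{t\cdot i}(-1)^{t\cdot s}$. Your orthogonality argument and your remark that $C^{(j)}$, indexed by $s\neq 0^r$, is only a principal submatrix of a full $2^r\times 2^r$ circulant go beyond the paper, and they correctly repair a point the paper glosses over: the compression can only decrease the operator norm, so the bound on $\|C\|$ still holds.
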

\begin{proof}
We directly verify that, for any $y \in \{0,1\}^m$,
$$ (W v_t) (i) = \sum_k W(i,k) v_t(k) = \sum_s \alpha(s) (-1)^{t \cdot (i \oplus s)} = (-1)^{t\cdot i} \sum_s \alpha(s) (-1)^{t\cdot s} = v_t(i) \cdot \lambda_t\,,$$
where $\lambda_t$ is defined as in the lemma statement. So, $W v_t = \lambda_t v_t$.
\end{proof}
Going back to $C^{(j)}$, we can apply Lemma~\ref{lem:circulant} with $\alpha(s) = \sum_{w\in \{0,1\}^r} g^{(j)}_{w}\cdot \overline{g^{(j)}_{w\oplus s}}$, for $s \in \{0,1\}^r$, to deduce that the eigenvalues of $C^{(j)}$ are \begin{align*}
\lambda^{(j)}_t &= \sum_s \alpha(s) (-1)^{t\cdot s} \\
&= \sum_{w,s} g^{(j)}_{w}\cdot \overline{g^{(j)}_{w\oplus s}} (-1)^{t\cdot s}  \\ 
&=  \sum_{w,w'} g^{(j)}_{w}\cdot \overline{g^{(j)}_{w'}} (-1)^{t\cdot (w\oplus w')} \\
&=  \left(\sum_w g^{(j)}_{w} (-1)^{t\cdot w} \right) \left(\sum_{w'} \overline{g^{(j)}_{w'}} (-1)^{t\cdot w'}\right) \\
&= \left| \sum_w g^{(j)}_{w} (-1)^{t\cdot w}    \right|^2
\end{align*}
Now, let $g^{(j)} \in \mathbb{C}^{2^r}$ denote the vector with entries $g^{(j)}_w$. Then, notice that
$$ \lambda_t^{(j)} = 2^r \big| (H g^{(j)})_t \big|^2 \,,$$
where $H$ is the Walsh-Hadamard transform (which can be equivalently viewed as the tensor product of $r$ ``single-qubit Hadamard'' matrices) - this is the $2^r \times2^r$ matrix with entries $H_{ww'} = \frac{(-1)^{w\cdot w'}}{\sqrt{2^r}}$. 

Now, $g^{(j)}$ is a vector with i.i.d.\ $\mathcal{CN}(0,1)$ entries. By unitary invariance, $H g^{(j)}$ has the same distribution. Thus, the $\lambda^{(j)}_t/2^r$ are i.i.d.\ $\textnormal{Exp}(1)$ random variables. 

Recall that we wish to bound the operator norm of $C =  \sum_{j \in \{0,1\}^{n-r}} C^{(j)}$. Crucially, all of the $C^{(j)}$ are diagonalized by the Walsh-Hadamard transform, \textit{i.e.}\ they have the same set of eigenvectors from Lemma~\ref{lem:circulant}. Thus, the eigenvalues of $C$ are simply given by
$$ \lambda_t = \sum_{j \in \{0,1\}^{n-r}} \lambda^{(j)}_t \textnormal{ for } t \in \{0,1\}^r \,.$$ 
For a fixed $t \in \{0,1\}^r$, we can now leverage the fact that $\lambda_t/2^r$ is a sum of i.i.d.\ Exp(1) random variables, and invoke a standard Chernoff to get that, for any $\epsilon >0$,
$$ \Pr[\lambda_t/2^r \geq (1+\epsilon) \cdot 2^{n-r} ] < e^{-2^{n-r}  \epsilon^2/3} \,.$$
Taking $\epsilon = 0.1$ and rearranging, we have
$$ \Pr[\lambda_t \geq 1.1 \cdot 2^n ] < e^{-2^{n-r} 0.01/3} \,.$$
Recall that $n-r = 2\log n$. So, by a union bound, with probability at least $1-2^r \cdot e^{\frac{0.01n^2}{3}}$, we have that, for all $t$,
$$\lambda_t \leq 1.1\cdot 2^n \,,$$
which implies that $\|C\| \leq 1.1\cdot 2^n$ with the same probability. Recall that $\|C\|$ is the operator norm appearing on the RHS of \eqref{eq:346}. So, substituting there, we have that 
\begin{equation}
\label{eq:B2}
\Bigg\|\sum_{\substack{w \in \{0,1\}^r\\ j\in \{0,1\}^{n-r}}}(B^{wj})^{\dagger}\cdot B^{wj} \Bigg\| \leq \frac{1.12}{d} \|C\| \leq \frac{1.12}{d} \cdot 1.1\cdot 2^n \leq 1.3 \cdot 2^r\,,
\end{equation}
where recall that $d = 2^{n-r} = n^2$. 

We are finally ready to obtain a bound on $\sigma^2$ from \eqref{eq:sigma}. Substituting \eqref{eq:345} and \eqref{eq:B2} into \eqref{eq:sigma}, we get that, with probability at least $1- 2\cdot 2^{-\frac{0.01 n^2}{4}+r} -  e^{\frac{0.01 n^2}{3}+r}$, where recall that this probability is over sampling the vectors $\{\ket{g_w}\}$, we have 
$$ \sigma^2 \leq 1.3\cdot 2^r \,.$$
We can now finally backtrack to \eqref{eq:267} and use our high probability bound on $\sigma^2$, to get a high probability bound on $\tilde{E}$. Recall from \eqref{eq:tildeE} that 
$$\tilde{E} = \sum_{\substack{w \in \{0,1\}^r\\ j\in \{0,1\}^{n-r}}} \tilde{g}^{(j)}_w \cdot B^{wj}\,,$$ where 
the matrices $B^{wj}$, defined right after Equation~\eqref{eq:tildeE} crucially depend only on the vectors $\{\ket{g_w}\}$. Here is where the ``decoupling'' we invoked much earlier becomes useful: we can now condition on the high probability event that $\sigma^2 \leq 1.3\cdot 2^r$. Denote this event by $S$. By the independence, conditioning on $S$ does not affect the distribution of the vectors $\{\ket{\tilde{g}_w}\}$. So, we can apply the matrix Chernoff bound of Lemma~\ref{lem:matrix-chernoff} to deduce that, for any $t>0$, 
\begin{equation}
\Pr[\| \tilde{E}\| \geq t \,| \, S] \leq (2^{r+1}-1) \cdot e^{-t^2/2\sigma^2} \leq (2^{r+1}-1) \cdot e^{-t^2/(3\cdot 2^r)} \,,
\end{equation}
where the probability is now only over the vectors $\{\ket{\tilde{g}_w}\}$. We can then bound $\Pr[\| \tilde{E}\| \geq t]$ as follows:
\begin{align}
\label{eq:2677}
\Pr[\| \tilde{E}\| \geq t] &\leq \Pr[\neg S] + \Pr[\| \tilde{E}\| \geq t \,| \, S] \nonumber \\ 
&\leq 2\cdot 2^{-\frac{0.01 n^2}{4}+r} +  e^{\frac{0.01 n^2}{3}+r} +  (2^{r+1}-1) \cdot e^{-t^2/(3\cdot 2^r)}  \,. 
\end{align}
We are finally ready to prove the crucial Lemma~\ref{lem:5}. 
\begin{proof}[Proof of Lemma~\ref{lem:5}] Recall the definition of the matrix $E$ in the statement of Lemma~\ref{lem:5}. By Corollary~\ref{cor:10}, we have that, for any $t \geq 0$, 
$$\Pr\Big[ \| E\| \geq t \| \Big] \leq C \cdot \Pr\Big[ C \cdot \| \tilde{E} \|\geq t \Big] \,,$$
for some constant $C>0$. Using \eqref{eq:2677}, we can bound the RHS to get
$$\Pr\Big[ \| E\| \geq t \| \Big] \leq C \cdot \left( 2\cdot 2^{-\frac{0.01 n^2}{4}+r} +  e^{\frac{0.01 n^2}{3}+r} +  (2^{r+1}-1) \cdot e^{-t^2/(C^2 \cdot 3\cdot 2^r)} \right) \,.$$
Taking $t = \sqrt{2^{r+1.1\log n}}$ we have
$$\Pr\Big[ \| E\| \geq \sqrt{2^{r+1.1\log n}} \Big] \leq C \cdot \left( 2\cdot 2^{-\frac{0.01 n^2}{4}+r} +  e^{\frac{0.01 n^2}{3}+r} +  (2^{r+1}-1) \cdot e^{-\frac{n^{1.1}}{3 C^2}} \right) \,,$$
which implies the desired statement.
\end{proof}
With this bound in hand, we can now move to our even earlier goal of bounding $\| \tilde{M}\| $, which appears on the RHS of \eqref{eq:23}, and is defined in~\eqref{eq:m-tilde}. Since $\| \widetilde{M}\| = \| E\|^2$ as argued previously, we now have, by Lemma~\ref{lem:5}, that, with probability $1-O(2^{-n})$, 
\begin{equation}
\label{eq:mtilde-bound}
\| \widetilde{M} \| \leq 2^{r+1.1\log n} = 2^r \cdot n^{1.1} \,.
\end{equation}

Finally, substituting the latter into \eqref{eq:23}, we get that, with probability $1-O(2^{-n})$, 
\begin{equation}
\label{eq:end}
\eqref{eq:7} \leq \frac{(2+o(1))}{\sqrt{2^n}} \cdot |c^{\perp}| \sqrt{1-|c^{\perp}|^2} \cdot \sqrt{2^r \cdot n^{1.1}} = \frac{(2+o(1))}{n^{0.45}} \cdot |c^{\perp}| \sqrt{1-|c^{\perp}|^2} = O\left(\frac{1}{n^{0.45}}\right)\,.
\end{equation}
By now, we have likely forgotten where it all started! \eqref{eq:7} was one of the three terms in the expression for $\Pr[\textnormal{accept}|\mathsf{had}]$, which is what we originally set out to upper bound. We have fortunately already upper bounded the term \eqref{eq:8}, and so we are now left with bounding the last term \eqref{eq:6}. Luckily, we have already done almost all of the work to bound this term. 

Recall the definition of \eqref{eq:6}, we have 
\begin{align}
\eqref{eq:6} &= \frac{1}{2^{2r}} \sz \frac{1}{\|\ket{\hat{f}_z}\|^2} \cdot \sum_{\substack{w\neq w' \\ v\neq v'}} (-1)^{z \cdot (w\oplus w' \oplus v \oplus v')} 
\overline{c_w} c_v \overline{A_{w,w'}} A_{v,v'} \nonumber \\
&= \frac{1}{2^{2r}} \sz \frac{1}{\|\ket{\hat{f}_z}\|^2} \cdot \left| \sum_{w,w'} (-1)^{z \cdot (w\oplus w')} c_w A_{w,w'} \right|^2 \nonumber\\
&\leq \frac{(1+o(1))}{2^r} \sz \left| \sum_{w,w'} (-1)^{z \cdot (w\oplus w')} c_w A_{w,w'} \right|^2 \,,  \label{eq:end2}
\end{align}
where the inequality holds with probability $1 -O(2^{-n})$ by Lemma~\ref{lem:3}.

Notice that we have already bounded this quantity! From \eqref{eq:200}, we have
\begin{equation} \frac{1}{2^r} \sz\left| \sum_{w,w'} (-1)^{z \cdot (w\oplus w')} c_w A_{w,w'} \right|^2 \leq (1-|c^{\perp}|^2) \cdot \| M \|\,,
\label{eq:end3}
\end{equation}
where $\widetilde{M} = \| G\|_F^2 \cdot M$. Now, by Lemma~\ref{lem:gaussian-matrix}, $\| G\|_F^2 \geq (1-o(1)) 2^n$, except with doubly exponentially small probability in $n$, and $\|\widetilde{M} \|\leq 2^r \cdot n^{1.1}$ with probability $1-O(2^{-n})$ from our recent bound in \eqref{eq:mtilde-bound}. Putting these together with \eqref{eq:end2} and \eqref{eq:end3}, we get that, with probability $1-O(2^{-n})$,
\begin{equation}
    \eqref{eq:6} \leq \big(1+o(1)\big) \cdot (1-|c^{\perp}|^2) \cdot \frac{1}{n^{0.9}} = O\left(\frac{1}{n^{0.9}}\right)\,. \label{eq:end4}
\end{equation}
We are finally ready to bound $\Pr[\textnormal{accept}|\mathsf{had}]$. 
Combining Equations \eqref{eq:155}, \eqref{eq:end}, and \eqref{eq:end4}, we have that, with probability $1-O(2^{-n})$,
$$ \Pr[\textnormal{accept}|\mathsf{had}] = \eqref{eq:6} + \eqref{eq:7} + \eqref{eq:8} \leq O\left(\frac{1}{n^{0.9}}\right) + O\left(\frac{1}{n^{0.45}}\right) + (1+o(1)) \cdot |c^{\perp}|^2 \,.$$

Now, recall from \eqref{eq:score_std_2} (and the definition of $|c^{\perp}|$, right before Equation~\eqref{eq:17}) that 
\begin{equation*}
\Pr[\textnormal{accept}|\textsf{std}] = \sum_{z \in \{0,1\}^r} |c_z|^2 = 1-|c^{\perp}|^2\,.
\end{equation*}
With this, we can at last conclude the proof of Theorem~\ref{thm:2}: we have
$$ \frac12 \Pr[\textnormal{accept}|\textsf{std}] + \frac12 \Pr[\textnormal{accept}|\mathsf{had}] \leq \frac12 (1-|c^{\perp}|^2) + \frac12 (1+o(1)) \cdot |c^{\perp}|^2 = \frac12 + o(1) \,.$$

\section{\texorpdfstring{\boldmath A $1/\log(n)$-robust algorithm with single-qubit measurements}{A 1/log(n)-robust algorithm with single-qubit measurements}}

Here we replace the ideal measurement of the $O(\log n)$-qubit post-measurement state $\ket{\phi_z}$ by the recent 1-qubit-measurement test of Gupta, He, and O'Donnell.

\begin{theorem}[One-shot GHO \cite{guptaHeODonnell2025}]
    There is a single-copy algorithm making single-qubit measurements such that for all $n$-qubit mixed lab states $\rho$, the test
    \begin{itemize}
        \item Accepts with probability at least $\bra{\psi}\rho\ket{\psi}$.
        \item Rejects with probability at least $\big(1-\bra{\psi}\rho\ket{\psi}\big)/n$.
    \end{itemize}
\end{theorem}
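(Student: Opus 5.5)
Since this is a result of \cite{guptaHeODonnell2025}, one option is simply to invoke it. If I were to reprove it, I would use a \emph{telescoping (hybrid) argument} that reduces certification of an $n$-qubit pure state to $n$ one-qubit checks, one of which is picked uniformly at random; the $1/n$ loss in soundness is exactly the price of this random choice. I expect the key difficulty to be making each check implementable with single-qubit measurements.

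The plan is as follows. For each $k\in[n]$, I would design a subtest $T_k$ of the following form. Measure the qubits other than $k$ one at a time with single-qubit measurements. Each basis may be chosen adaptively, as a function of the target description and the previous outcomes, using amplitude access in product bases. Then measure qubit $k$ with the two-outcome POVM $\{\ket{\psi_{k,o}}\!\bra{\psi_{k,o}}, I-\ket{\psi_{k,o}}\!\bra{\psi_{k,o}}\}$. Here $o$ denotes all the earlier outcomes, and $\ket{\psi_{k,o}}$ is the ideal conditional state of qubit $k$. The point of choosing the bases adaptively is that the ideal conditional state on qubit $k$ becomes a \emph{pure} single-qubit state. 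This holds because, once all other qubits are projected onto product basis states, the residual one-qubit state of $\ket{\psi}$ is pure. Let $A_k$ be the accepting POVM element of $T_k$ and $A=\frac1n\sum_k A_k$. Completeness is immediate: each $A_k$ satisfies $A_k\ket{\psi}=\ket{\psi}$, so $A\succeq \ket{\psi}\!\bra{\psi}$, as in the eigenvector argument used earlier for our own test.

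For soundness I would aim to prove the operator inequality
\[
I-\ket{\psi}\!\bra{\psi}\;\preceq\;\sum_{k=1}^n (I-A_k),
\]
or at least its expectation against every $\rho$. This gives $1-\bra{\psi}\rho\ket{\psi}\le \sum_k \tr[\rho(I-A_k)] = n\cdot\Pr[\mathsf{reject}]$, which is the claimed bound. To obtain the inequality, I would order the qubits and define the nested subspaces $V_k$ of states that ``agree with $\ket{\psi}$ on qubits $1..k$ given any outcomes on the rest'', with $V_0$ the whole space and $V_n=\mathrm{span}\{\ket{\psi}\}$. I would then show that the rejection of $T_k$ controls the squared distance from $V_{k-1}$ to $V_k$, and chain these bounds together by a quantum union bound.

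The main obstacle is the choice of the adaptive bases, which must do two things at once. They must make the conditional target on qubit $k$ pure, and they must make the failure events of different $k$ add up subadditively, rather than only up to a factor like $n^2$ as a naive quantum union bound would give. This is where I would follow GHO's idea: choose each measurement basis so that the post-measurement target stays ``balanced'', meaning it is never close to a basis state. Otherwise an orthogonal lab state could hide in a branch that is measured with tiny probability.
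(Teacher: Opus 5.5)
Invoking \cite{guptaHeODonnell2025} is exactly what the paper does. The paper gives no proof of this theorem and uses it only as a black box, applied to the $C\log n$-qubit conditional states in the proof of the $1/\log n$-robust test. So your first option is all that is needed.

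Your optional reproof, however, is not a proof, and what it leaves out is the entire content of the GHO result.

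Completeness is fine: for any adaptive product-basis tree, $A_k=\sum_o \ket{o}\!\bra{o}\otimes\ket{\psi_{k,o}}\!\bra{\psi_{k,o}}$ fixes $\ket{\psi}$.

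The soundness inequality $I-\ket{\psi}\!\bra{\psi}\preceq\sum_k(I-A_k)$, by contrast, is only stated as a goal. The subspaces $V_k$ are not well defined when each $T_k$ has its own adaptive basis tree. The claim that the rejection probability of $T_k$ controls the distance from $V_{k-1}$ to $V_k$ is asserted rather than argued.

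You also misplace the role of adaptivity. Purity of the conditional target on qubit $k$ is automatic for any rank-one product measurement of the other qubits, since $(\bra{o}\otimes I)\ket{\psi}$ is just a vector. Adaptivity is needed only for soundness, and there the basis choice is decisive. For $n\ge 2$, take standard-basis measurements (the HPS choice), $\ket{\psi}=(\ket{0^n}+\ket{1^n})/\sqrt2$, and the orthogonal $\ket{\phi}=(\ket{0^n}-\ket{1^n})/\sqrt2$. Every $T_k$ accepts $\ket{\phi}$ with certainty, because the first measured qubit collapses both states onto the same branch up to a global phase.

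That first measurement has exactly equiprobable outcomes under $\ket{\psi}$, so ``balanced'' in that sense is not sufficient. The failure mode is also not only an orthogonal state hiding in rare branches: early measurements can erase the relative phases that distinguish $\ket{\phi}$ from $\ket{\psi}$.

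The precise basis-selection rule, and the lemma turning it into the additive bound with loss exactly $n$, are what you defer to ``GHO's idea''. So the sketch adds nothing verifiable beyond the citation itself.
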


Alone this test is not robust, but because we only need to apply it to an $O(\log n)$-qubit state, the robustness we get in our case is $O(1/\log n)$.

\begin{theorem}
    Let $\delta,\epsilon>0$ and fix any $C>1$.
    Let $\ket{\psi}$ be the target state and $\rho$ be any (mixed) lab state and consider the algorithm which repeats the following test $O(\log (n)\cdot\epsilon^{-1}\log(1/\delta))$ times (on fresh copies of $\rho$) and accept if the number of GHO rejects is below $\Theta(\log(1/\delta))$, otherwise reject.
    \begin{enumerate}
        \item With equal probability select the standard or Hadamard basis: $\mathsf{b}\sim\{\mathsf{std},\mathsf{had}\}$.
        \item Measure all but the last $C\log n$ qubits in the selected basis, obtaining outcome $z$.
        \item With $\ket{\psi^\mathsf{b}_z}$ and $\rho^\mathsf{b}_z$ denoting the respective $C\log n$-qubit marginals of the target and lab states postselected on outcome $z$ in basis $\mathsf{b}$, run the one-shot Gupta--He--O'Donnell (GHO) test on the remainder.
    \end{enumerate}
    Then for all but a $\exp(-\Omega(n))$-Haar-fraction of target states $\ket{\psi}$ this test \ldots
    \begin{itemize}
        \item Accepts with probability $1-\delta$ if $\bra{\psi}\rho\ket{\psi}\geq 1- \epsilon/\log n$,
        \item Rejects with probability $1-\delta$ if $\bra{\psi}\rho\ket{\psi}\leq 1-\epsilon$.
    \end{itemize}
    Moreover, this test comprises only single-qubit measurements, only $O(\log n)$ of which are adaptive.
\end{theorem}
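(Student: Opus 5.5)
We analyze a single round of the test as a two-outcome POVM and then amplify. Write $m=C\log n$. For a fixed basis $\mathsf b$ and outcome $z$, the one-shot GHO theorem applied to the $m$-qubit pair $(\ket{\psi^{\mathsf b}_z},\rho^{\mathsf b}_z)$ says the round accepts with probability at least $F^{\mathsf b}_z:=\bra{\psi^{\mathsf b}_z}\rho^{\mathsf b}_z\ket{\psi^{\mathsf b}_z}$. It rejects with probability at least $(1-F^{\mathsf b}_z)/m$. GHO only needs amplitude access to the $m$-qubit conditional state in product bases, and this is computable in $\mathrm{poly}(n)$ time by the oracle lemma of Section~\ref{sec:access-model}.

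Averaging over $\mathsf b$ and over $z\sim\mu^{\mathsf b}_\rho$ gives
\[
\tfrac12\sum_{\mathsf b}\E_z F^{\mathsf b}_z=\tr[A\rho],
\]
where $A$ is exactly the POVM of the test in Section~\ref{sec:algo-log}. The paper's proof runs with $C=2$ (i.e.\ $\gamma=1$), and notes that any $\gamma>0$ works. The round's rejection probability $p_{\rm rej}$ then satisfies
\[
p_{\rm rej}\le 1-\tr[A\rho]\le 1-\bra\psi\rho\ket\psi
\]
by completeness of that test, using $A\succeq\ket\psi\!\bra\psi$. It also satisfies
\[
p_{\rm rej}\ge \frac{1-\tr[A\rho]}{m}.
\]
Here we invoke Theorem~\ref{thm:1}, which holds outside a $O(2^{-n})$ Haar fraction: it gives $1-\tr[A\rho]\ge (1-\bra\psi\rho\ket\psi)/(2+o(1))$. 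The upper bound on $p_{\rm rej}$ uses the GHO acceptance guarantee pointwise in $z$. The lower bound uses its rejection guarantee pointwise in $z$, followed by linearity of expectation.

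Two regimes now separate. In the completeness case, $\bra\psi\rho\ket\psi\ge 1-\epsilon/\log n$, so $p_{\rm rej}\le \epsilon/\log n$. In the soundness case, $\bra\psi\rho\ket\psi\le 1-\epsilon$, so $p_{\rm rej}\ge \epsilon/((2+o(1))C\log n)$. The two rejection rates differ by a constant factor $\kappa=(2+o(1))C$, up to the constant implicit in ``$\epsilon/\log n$''.

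We take $N=K\log(n)\epsilon^{-1}\log(1/\delta)$ independent rounds on fresh copies and accept iff the number of rejections is below a threshold $\tau$ placed between the two expected counts. In the good case the expected count is at most $K\log(1/\delta)$. In the bad case it is at least $K\log(1/\delta)/\kappa$. To separate them with a constant-factor gap, the completeness threshold must be tightened to $\bra\psi\rho\ket\psi\ge 1-\epsilon/(c\log n)$ with $c$ a large constant (e.g.\ $c=4\kappa$), absorbed into the $\Theta(\cdot)$ convention of the statement. We then set $\tau=\Theta(\log(1/\delta))$ at the geometric midpoint and apply a multiplicative Chernoff bound: each error probability is $\exp(-\Omega(\mu))=\delta$ once $K$ is a large enough constant.

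The main point requiring care is this constant bookkeeping: the statement as written compares $\epsilon/\log n$ with $\epsilon/(2C\log n)$, so the hidden constants in $O(\cdot)$ and $\Theta(\cdot)$ must be chosen so that the acceptance bound in the good case lies strictly below the rejection bound in the bad case.

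A second point is that Theorem~\ref{thm:1} must apply to mixed $\rho$. It does, since it is stated for mixed states.

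Finally, the measurements are single-qubit throughout. The first $n-m$ qubits are measured non-adaptively in the chosen basis. GHO then uses single-qubit (adaptive) measurements on the remaining $m=O(\log n)$ qubits, which gives the ``only $O(\log n)$ adaptive'' claim. The Haar exceptional fraction $O(2^{-n})\subseteq\exp(-\Omega(n))$ comes entirely from Theorem~\ref{thm:1}.
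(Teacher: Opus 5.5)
Your proposal is correct and follows essentially the same route as the paper. Both apply the one-shot GHO guarantees pointwise in $z$, average them to get $\tr[A\rho]$ for the $C\log n$-qubit test, bound this using its completeness ($A\succeq\ket\psi\!\bra\psi$) and Theorem~\ref{thm:1}, and then amplify with a Chernoff bound. You use the multiplicative form $1-\tr[A\rho]\ge(1-\bra\psi\rho\ket\psi)/(2+o(1))$, and you correctly note that the completeness threshold must be tightened by a constant factor (beyond $2C$) before the rejection counts separate; this fills in what the paper leaves to ``routine Chernoff'' (and the paper's final soundness display has a typo, with $\bra\psi\rho\ket\psi$ written in place of $1-\bra\psi\rho\ket\psi$).
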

\begin{proof}
    Denote by $T$ the inner test in the theorem and let $\ket{\psi}$ be a ``good'' target state in the sense we get the guarantees of Theorem \ref{thm:1}.
    Let $\mu_\rho^\mathsf{b}$ denote the distribution of outcomes from measuring $\rho$ in the basis $\mathsf{b}\in\{\mathsf{std},\mathsf{had}\}$.
    Then
    \begin{equation}
    \label{eq:logn-test-acc}
    \Pr[\text{$T$ accepts}] = \mathop{\E}_{\substack{\mathsf{b}\sim\{\mathsf{std},\mathsf{had}\}\\z\sim\mu^\mathsf{b}_\rho}}\Pr[\text{GHO accepts }\ket{\psi_z^\mathsf{b}},\rho_z^\mathsf{b}]
    \end{equation}
    which for soundness we can upper bound by
    \begin{align*}
        \eqref{eq:logn-test-acc} &\leq \mathop{\E}_{\substack{\mathsf{b}\sim\{\mathsf{std},\mathsf{had}\}\\z\sim\mu^\mathsf{b}_\rho}} 1-\frac{1-\bra{\psi_z^\mathsf{b}}\rho_z^\mathsf{b}\ket{\psi_z^\mathsf{b}}}{C\log{n}}\\
        &= 1-\frac{1-\E_{\mathsf{b},z}\bra{\psi_z^\mathsf{b}}\rho_z^\mathsf{b}\ket{\psi_z^\mathsf{b}}}{C\log n}\\
        &\leq 1-\frac{\frac12 -\frac12\bra{\psi}\rho\ket{\psi}-o(1)}{C\log n}\\
        &=1-\frac{\bra{\psi}\rho\ket{\psi}}{\big(2+o(1)\big)C\log n},
    \intertext{and for completeness,}
        \eqref{eq:logn-test-acc} &\geq \E_{\mathsf{b},z}\bra{\psi_z^\mathsf{b}}\rho_z^\mathsf{b}\ket{\psi_z^\mathsf{b}}\geq \bra{\psi}\rho\ket{\psi}\,.
    \end{align*}
    Routine Chernoff lifts these one-shot guarantees to the quoted performance of the outer algorithm.
\end{proof}

\bibliographystyle{alpha}
\bibliography{references}

\appendix
\section{Typical non-robustness of Huang--Preskill--Soleimanifar}
\label{sec:non-robust-HPS}
Here we remark that the algorithm of \cite{Huang2025} has robustness at most $1/n$ even for typical target states: for almost all targets $\ket{\psi}$, it is possible to construct hard lab states $\ket{\phi}$ which are essentially orthogonal to $\ket{\psi}$ yet are accepted with probability at least $1-1/n$.

\begin{prop}
    With probability at least $1-2^{-n}$ over a Haar-random target state $\ket{\psi}$, there exists a $\ket{\phi}$ such that
    \begin{itemize}
        \item $|\!\braket{\phi|\psi}\!|^2\leq O\!\left(\displaystyle\frac{n}{2^n}\right)$,
        \item $\ket{\phi}$ is accepted with probability at least $1-1/n$.
    \end{itemize}
\end{prop}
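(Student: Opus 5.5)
The plan is to construct $\ket{\phi}$ by locally perturbing $\ket{\psi}$ in one qubit, in a way that is invisible to the HPS test on all but roughly a $1/n$ fraction of its rounds. Recall the HPS test. It picks a uniform $i\in[n]$, measures the other $n-1$ qubits in the $Z$ basis with outcome $y$, and projects qubit $i$ onto the ideal conditional state $\ket{\psi^{(i)}_y}$.

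Write $\ket{\psi}=\sum_{x}\psi(x)\ket{x}$ and choose a fixed qubit, say qubit $1$. Decompose
\[
\ket{\psi}=\ket{0}\ket{a}+\ket{1}\ket{b},
\]
with $\ket{a},\ket{b}$ unnormalized $(n-1)$-qubit states. Set
\[
\ket{\phi}\propto \ket{0}\ket{a}-\ket{1}\ket{b},
\]
which is $Z_1\ket{\psi}$. Then $\braket{\psi|\phi}=\|a\|^2-\|b\|^2$. For Haar-random $\psi$, Gaussian concentration as in Lemma~\ref{lem:gaussian-matrix} gives $\|a\|^2,\|b\|^2=\tfrac12\pm O(\sqrt{n/2^n})$ with probability $1-2^{-n}$. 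Hence $|\braket{\phi|\psi}|^2\le O(n/2^n)$, which is the first bullet.

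For acceptance, consider the two kinds of rounds separately.

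\emph{Rounds with $i\neq 1$.} Qubit $1$ is measured in the $Z$ basis along with the others. $Z_1$ only multiplies each branch by a global sign determined by the outcome on qubit $1$. So the post-measurement state of qubit $i$ under $\phi$ equals that under $\psi$ up to a phase, and the outcome distribution is identical. The acceptance probability in these rounds is therefore $1$.

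\emph{Rounds with $i=1$.} These occur with probability $1/n$, and the acceptance probability in them is at least $0$. This already gives overall acceptance at least $1-1/n$.

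So the construction works, and no random-matrix argument is needed. The only probabilistic ingredient is the norm concentration $|\|a\|^2-\|b\|^2|=O(\sqrt{n}\,2^{-n/2})$. To get it, represent $\psi$ as a normalized complex Gaussian vector. Then $\|a\|^2-\|b\|^2$ is a difference of two sums of $2^{n-1}$ i.i.d.\ $\mathrm{Exp}(1)$ variables, divided by their total. A sub-exponential Chernoff bound with deviation $t\sim\sqrt{n2^n}$ gives failure probability $e^{-\Omega(n)}$. Choosing the constant appropriately, this is at most $2^{-n}$.

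I expect the main (mild) obstacle to be bookkeeping: checking the HPS acceptance formula, namely that it is invariant under $Z$ on any measured qubit. This follows directly because the conditional fidelities depend only on $|\braket{\psi^{(i)}_y|\phi^{(i)}_y}|^2$, together with the joint outcome distribution, which is unchanged by diagonal sign flips.
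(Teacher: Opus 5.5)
Your proposal is correct and is essentially the paper's proof: the paper also sets $\ket{\phi}=Z_1\ket{\psi}$ and bounds $|\braket{\phi|\psi}|^2=(2Z-1)^2$, where $Z=\|a\|^2$, by concentration (it uses that $Z\sim\mathrm{Beta}(2^{n-1},2^{n-1})$ is subgaussian, which is equivalent to your Gaussian/$\mathrm{Exp}(1)$ representation), and it notes that the sign flip is detected only when qubit $1$ is the unmeasured one, which happens with probability $1/n$. Your explicit check that $(\bra{y}\otimes I)Z_1\ket{\psi}=(-1)^{y_1}(\bra{y}\otimes I)\ket{\psi}$ in rounds with $i\neq 1$ simply spells out the paper's one-line justification in more detail.
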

\begin{proof}
    Let $\ket{\psi}$ be a Haar-random $n$-qubit state and decompose it in the computational basis as
    $\ket{\psi}=\sum_x\alpha_x\ket{x}$.
    Define the state
    \[\ket{\phi}=\sum_x(-1)^{x_1}\alpha_x\ket{x}\]
    Then
    \begin{align*}
        \braket{\phi|\psi}=\sum_x|\alpha_x|^2(-1)^{x_1}=\sum_{x:x_1=0}|\alpha_x|^2-\sum_{x:x_1=1}|\alpha_x|^2=2\sum_{x:x_1=0}|\alpha_x|^2-1.
    \end{align*}
    The quantity $Z:=\sum_{x_1=0}|\alpha_x|^2$ is Beta($2^{n-1},2^{n-1}$)-distributed, so $Z$ is subgaussian with variance proxy $\Theta(2^{-n})$. By standard concentration results we get a universal constant $C>0$ such that
    \[\Pr[|\!\braket{\psi|\phi}\!|^2\leq Cn2^{-n}]=\Pr[(2Z-1)^2\leq Cn2^{-n}]\geq 1-2^{-n}\,.\]
    However, the algorithm of Huang, Preskill, and Soleimanifar only detects the difference between $\ket{\psi}$ and $\ket{\phi}$ if qubit $1$ is the last measured.
    This happens with probability $1/n$.
\end{proof}

\section{Impossibility of universal robustness for conditional comparison tests}
\label{sec:no-univ-robustness}

\begin{prop*}[Proposition~\ref{prop:no-univ-robustness} repeated]
    Let $r < n$. Consider any state certification algorithm which has the following structure:
    \begin{enumerate}[label=\textit{\roman*}.]
        \item[(i)] Measure all but $r$ qubits with single-qubit measurements $M_1,\ldots, M_{n-r}$. (Any measurement order and adaptivity is allowed.)
        \item[(ii)] Measure the remaining $r$ qubits with a measurement $M^*$ depending arbitrarily on measurements $M_1,\ldots, M_{n-r}$ and their outcomes.
        \item[(iii)] Output \textsf{accept} or \textsf{reject} depending exclusively on the outcome of $M^*$.
    \end{enumerate}
    Then, provided the algorithm accepts the target state itself with certainty, there exist target states $\ket{\psi}$ for which the algorithm has robustness $O(r/n)$.
\end{prop*}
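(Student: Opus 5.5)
We prove the proposition by constructing a target state that is a product of many small blocks, and a lab state that is wrong on only one block. The target is $\ket{\psi}=\ket{\chi}^{\otimes n/k}$, a product of $n/k$ identical blocks of $k=r+1$ qubits each, where $\ket{\chi}$ is a fixed $k$-qubit entangled state (say a GHZ-like state). The lab state is $\ket{\phi_i}$, obtained from $\ket{\psi}$ by replacing the $i$-th block with a state $\ket{\chi^\perp}$ orthogonal to $\ket{\chi}$. Then $\ket{\phi_i}$ is orthogonal to $\ket{\psi}$. The goal is to show that, averaged over a uniformly random $i$ (and hence for some fixed $i$), $\ket{\phi_i}$ is rejected with probability only $O(k/n)=O(r/n)$. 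That gives robustness $O(r/n)$.

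\textbf{Step 1: a block nobody touches is invisible.} Consider a run of the algorithm on $\ket{\psi}$. Since the final $M^*$ acts on only $r<k$ qubits, for every run at least one qubit of each block is measured by a single-qubit measurement in stage (i). The key claim is the following. Suppose block $i$ contains none of the $r$ final qubits, so all $k$ of its qubits are measured by single-qubit measurements $M_j$. Then we choose $\ket{\chi^\perp}$ so that every such adaptive sequence of single-qubit measurements is equally likely to accept $\ket{\chi}$ and $\ket{\chi^\perp}$. Concretely, the post-measurement states of the other blocks should be identical, and only the classical outcome distribution on block $i$ may change. Here we use perfect completeness: since $\ket{\psi}$ is accepted with certainty, whenever $M^*$ sees the correct ideal residual state it accepts. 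So what we need is that the joint law of stage-(i) outcomes, together with the residual state on the final $r$ qubits, is close for $\psi$ and $\phi_i$ whenever block $i$ is fully measured locally.

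\textbf{Step 2: choosing $\ket{\chi}$ and $\ket{\chi^\perp}$.} This is the main obstacle. Local measurements of arbitrary bases cannot perfectly fail to distinguish two orthogonal states on all $k$ qubits. Two pure orthogonal states can always be distinguished by adaptive LOCC (Walgate et al.). So Step 1 cannot hold exactly, and the construction has to be adjusted. What I would try first is to exploit that the decision depends \emph{exclusively} on $M^*$'s outcome. Information gathered about a fully locally measured block is then useful only insofar as it changes the measurement $M^*$ or the state it sees. Since block $i$ is a tensor factor, it is not entangled with the final qubits. The residual state on the final $r$ qubits (which lie in other blocks) is therefore the same under $\psi$ and $\phi_i$ given the outcomes on the other blocks. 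The only difference is how the outcomes of block $i$ steer $M^*$. We design $M^*$'s acceptance so that, on $\psi$, every choice it can make accepts the true residual with certainty. Because the residual is the same under $\phi_i$, $M^*$ still accepts with certainty, regardless of the block-$i$ outcomes, even ones impossible under $\psi$. The only subtlety is outcomes on block $i$ with zero probability under $\psi$. For those, the algorithm's choice of $M^*$ is unconstrained by completeness. I would handle this with a small perturbation argument, or by choosing $\ket{\chi^\perp}$ so that its outcome distributions are supported within those of $\ket{\chi}$ for all product-basis measurement sequences. A generic entangled $\ket{\chi}$ with full support in every product basis achieves this, since all outcome probabilities are then positive.

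\textbf{Step 3: counting.} Block $i$ contains one of the final $r$ qubits for at most $r$ of the $n/k$ blocks per run. For uniform $i$, the rejection probability on $\ket{\phi_i}$ is therefore at most $r/(n/k)=O(r^2/n)$. To sharpen this to $O(r/n)$, use blocks of size $k=2$ instead. This works only if a block intersecting the final set in a single qubit, with its partner measured locally, still has a steering-invisible $\chi^\perp$, which I expect to be a detail to verify. Averaging then gives some $i$ with rejection probability $O(r/n)$ while $\braket{\psi|\phi_i}=0$.
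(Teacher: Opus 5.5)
\textbf{The gap is in Step 2, and it breaks the block construction.} You correctly note that $M^*$ may depend arbitrarily on the stage-(i) outcomes. Hence perfect completeness constrains $M^*$ only on transcripts of positive probability under $\ket{\psi}$, and the algorithm may simply reject whenever block $i$ produces a transcript impossible under $\ket{\chi}$.

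Your remedy is a generic $\ket{\chi}$ with ``full support in every product basis'', but no such state exists. For any pure $k$-qubit $\ket{\chi}$, measuring $k-1$ qubits leaves a pure residual $\ket{\eta}$ on the last one. Measuring that qubit in $\{\ket{\eta},\ket{\eta^\perp}\}$ gives an outcome of probability zero under $\ket{\chi}$; even non-adaptively, some product vector is always orthogonal to $\ket{\chi}$.

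A cleverer choice of $\ket{\chi^\perp}$ does not help under your reading either. The algorithm can run a perfect-completeness single-qubit test, such as one-shot GHO, on every block. It lets $M^*$ reject outright if an earlier block failed, and otherwise finish the test on the $r$ remaining qubits. Its acceptance operator is then $\preceq\bigotimes_b\big(P+(1-\tfrac1k)(I-P)\big)$ with $P=\ket{\chi}\!\bra{\chi}$. So every state orthogonal to $\ket{\chi}^{\otimes n/k}$, in particular every $\ket{\phi_i}$, is rejected with probability at least $1/k=1/(r+1)\gg r/n$. A deviation concentrated on a block read out entirely by single-qubit measurements is exactly what local tests detect. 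Step 3 is also unfinished: $k=r+1$ gives only $O(r^2/n)$, and $k=2$ violates your own requirement $r<k$ once $r\ge 2$.

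\textbf{What the paper does instead.} It spreads the error rather than concentrating it. It takes a product target $\bigotimes_j\ket{\psi_j}$ and a product lab state $\bigotimes_j\ket{\phi_j}$ that perturbs \emph{every} qubit slightly. The global fidelity then stays bounded away from $1$, while the residual on the $r$ qubits reaching $M^*$ has fidelity $1-O(r/n)$ with the ideal residual $\bigotimes_{j\in S}\ket{\psi_j}$. Because the target is a product, this ideal residual does not depend on the stage-(i) outcomes.

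That argument also assumes $M^*$ accepts the ideal residual on \emph{every} transcript, which is the ``conditional comparison'' reading suggested by the proposition's title. Under the fully literal reading you adopted, product targets are certified with perfect robustness: measure each qubit in $\{\ket{\psi_j},\ket{\psi_j^\perp}\}$ and reject on any $\ket{\psi_j^\perp}$. So the paper's argument does not go through there either.

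Under the comparison reading, your zero-probability worry disappears, and your Step 1 with $k=1$ already suffices. Change one qubit $i$ of a product target to an orthogonal state. The event $i\in S$ has the same probability under $\ket{\psi}$ and $\ket{\phi_i}$, and these probabilities sum to $r$ over $i$. Off this event the residuals coincide, so some $\ket{\phi_i}\perp\ket{\psi}$ is rejected with probability at most $r/n$.
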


\begin{proof}
Consider a target state that is the product of 1-qubit states:
$\ket{\psi}=\bigotimes_{j=1}^{n}\ket{\psi_j}$.
Now for each $j$ choose $\ket{\phi_j}$ with $|\braket{\phi_j|\psi_j}|^2=1-r/n$.
Then with $\ket{\phi}=\otimes_j \ket{\phi_j}$, we see
\[|\braket{\phi|\psi}|^2=(1-r/n)^{n/r}\;\longrightarrow\; 1/e\qquad\text{as}\qquad n\to\infty\,.\]
However, because $\ket{\psi}$ is a product state, the measurements $M_1,\ldots, M_{n-r}$ have no effect on the ultimate site of the $M^*$ measurement.
This means that 
\[\big|\Pr[M^* \text{ accepts on } \ket{\psi}] - \Pr[M^* \text{ accepts } \ket{\phi}]\big|\leq r/n \,.\]
Since $|\braket{\phi|\psi}|^2$ is bounded away from one by a constant (for large enough $n$), it must have constant overlap with some state $\ket{\phi'}$ orthogonal to $\ket{\psi}$ such that 
\[\big|\Pr[M^* \text{ accepts on } \ket{\psi}] - \Pr[M^* \text{ accepts } \ket{\phi'}]\big|\leq O(r/n) \,.\qedhere\]
\end{proof}
\end{document}